\algrenewcommand\algorithmicrequire{\textbf{INPUT:}}
\algrenewcommand\algorithmicensure{\textbf{OUTPUT:}}
\newcommand{\C}{\mathbb{C}}
\newcommand{\R}{\mathbb{R}}
\newcommand{\Z}{\mathbb{Z}}
\newcommand{\N}{\mathbb{N}}
\newcommand{\iso}{\mathrm{Iso}\,}
\newcommand{\sym}{\mathrm{Sym}\,}
\newcommand{\I}{\mathrm{i}}
\newcommand{\e}{\mathrm{e}}
\newcommand{\f}[1]{\mathbf{#1}}
\newcommand{\OO}{\mathbf{O}}
\newtheorem{theorem}{\bf Theorem}
\newtheorem{lemma}[theorem]{\bf Lemma}
\newtheorem{remark}[theorem]{\bf Remark}
\newtheorem{example}[theorem]{\bf Example}
\newtheorem{corollary}[theorem]{\bf Corollary}
\begin{document}

\sloppy

\title{Symmetries of discrete curves and point clouds via trigonometric interpolation}

\author{Michal Bizzarri\footnote{\tt bizzarri@kma.zcu.cz}, Miroslav L\'avi\v{c}ka\footnote{\tt lavicka@kma.zcu.cz}, Jan Vr\v{s}ek\footnote{\tt vrsekjan@kma.zcu.cz}}

\maketitle

\begin{abstract}
We formulate a simple algorithm for computing global exact symmetries of closed discrete curves in plane.  The method is based on a suitable trigonometric interpolation of vertices of the given polyline and consequent computation of the symmetry group of the obtained trigonometric curve. The algorithm exploits the fact that the introduced unique assigning of the trigonometric curve to each closed discrete curve commutes with isometries. For understandable reasons, an essential part of the paper is devoted to determining rotational and axial symmetries of trigonometric curves. We also show that the formulated approach can be easily applied on nonorganized clouds of points. A functionality of the designed detection method is presented on several examples.
\end{abstract}

\section{Introduction and motivation}\label{sec:intro}

Symmetries have always been an indisputable phenomenon in the real world, considered often as an aspect of perfection, balance and harmony. Being symmetric is a potentially very useful feature which many real shapes exhibit, and thus symmetries in the natural world have significantly inspired people to incorporate them when producing tools, buildings or artwork. However, investigation of symmetries has also become a powerful and effective tool of theoretical research in various disciplines. For instance, the notion of symmetry is a basic concept in Felix Klein's Erlangen Program devoted to the characterization of classes of geometry based on the underlying symmetry groups, see \cite{Kl1893}. Nowadays, symmetry detection belongs among standard problems in geometry, computer graphics, computer vision, geometric modelling, pattern recognition etc. It is very important for many applications to know how to detect symmetries in geometrical models and a number of subsequent algorithms is based on this.

Let us recall that an object possesses a symmetry if there is a geometric transformation that maps this object onto itself, i.e., the object is invariant under this transformation. Symmetry is global, when it concerns the whole object or local, when only part of the object has this property. Symmetry can also be defined according to the type of transformation used (reflectional, rotational, central,\, \ldots). Symmetry can be exact (strong) or approximate (weak) with a given tolerance. Especially in many applications the considered geometric shapes are only approximate models of real objects (described often in a floating point representation) and therefore the symmetries are not perfect but only approximate. To develop symmetry detection methods, able to work with artificial as well as real-life data is a challenge, regardless whether the research is just theoretical or it concentrates on a certain application area. 

In this paper we will focus on global exact symmetries of investigated objects, in particular of planar curves. This research area has recently (or relatively recently) experienced a renaissance in geometric modelling and related disciplines and we can find many papers focused on the detection and computation of symmetries or equivalences of curves given implicitly or by their (most often) rational parameterization. We recall e.g. \cite{BrKn04,LeRG08}, or recent series of papers \cite{Al14,AlHeMu14,AlHeMu14b,AlHeMu18}. In \cite{HaJu18}, the authors present a method more general than the previous  ones as they study general affine and projective equivalences in an arbitrary space dimension. The problem of deterministically computing the symmetries of a given planar algebraic curve, implicitly defined, and the problem of deterministically checking whether or not two implicitly given, planar algebraic curves are similar, i.e., equal up to a similarity transformation, was investigated in \cite{AlLaVr19}.  Projective equivalence of special algebraic varieties, including projective (and other) equivalences of rational curves were studied and presented in \cite{BiLaVr20a}. A simple algorithm for an approximate reconstruction of an  inexact  planar curve which is assumed to be a perturbation of some unknown planar curve with symmetry was formulated in \cite{BiLaVr20}. 
Exact and approximate similarities of non-necessarily rational planar, parameterized curves, using centers of gravity and inertia tensors were studied in \cite{AlQu21}.

Another problem is detection of symmetries of objects in 2D or 3D which are represented discretely as sets of points.
So the goal is to find the symmetry transformation (if any) under which a certain point set is (exactly or approximately) mapped onto itself. This problem is addressed mainly by computer scientists as they typically process set of points or meshes. It is interesting that some approaches suggest to compute first an implicit equation associated to the given set and then to apply mathematical methods formulated for algebraic curves and surfaces, see e.g. \cite{TaCo98}, \cite{TaCo00}. We recall once more that methods for detecting symmetries of point sets and objects have been widely studied in computer graphics,  computer vision and pattern recognition. For the sake of brevity we recall at least some algorithms for detecting exact global symmetries from this point of view, see \cite{Wolter1985,Alt1988,Aguilar2015}. Of course, much more papers are devoted to approximate symmetries as the mentioned scientific disciplines work mainly with real-world, i.e., non-perfect, data.

In this paper, we continue with our previous research on symmetries of geometric shapes form the mathematical point of view, i.e., we are mainly interested in objects described by equations. On the other hand, we must admit that motivation for our approach originated in computer graphics. We have noticed that the iterative process of Laplacian smoothing of polygonal mesh used  for removing rough features of input shapes preserves symmetries of the original non-smoothed object. And this is only a small step to signal processing, discrete Fourier transform and thus also to trigonometric curves. We show that trigonometric interpolation makes perfect sense when one wants to determine exact global symmetries of closed discrete curves, as the symmetries of trigonometric curves can be easily and directly determined from their trigonometric parameterization without any need to switch to their implicit or rational description. A~special role is played by the so called higher cycloid curves considered as generalizations of standard cycloids.

The structure of the paper is the following. Section \ref{sec:prelim} provides some general results, including some background on isometries and symmetries, trigonometric curves and trigonometric interpolation, to be used later in the paper. Symmetry computation of trigonometric curves is addressed in Section \ref{sec:trigcurves_sym}, first for rotations including central symmetries, then for the axial symmetries. The next Section \ref{sec:sym_polyline} is devoted to the application on discrete curves in plane. By unique assigning to each discrete curve a certain trigonometric parameterization we transform the problem of computing symmetries of discrete curves to the previous case. In Section~\ref{sec:convexhull} we show that the formulated approach can be immediately used also for unorganized clouds of points. Our conclusions and some lines for future research are presented in Section \ref{sec:sum}.

\section{Preliminaries}\label{sec:prelim}

\subsection{Isometries and symmetries}\label{sec:iso_sym}

An~\emph{isometry} of the plane $\R^2$ is a~mapping $\phi: \R^2\rightarrow\R^2$ which preserves distances. It can be always written in the form
\begin{equation}\label{eq isometry}
  \f x\mapsto \f A\cdot \f x+\f b,
\end{equation}
where $\f A\in\OO(2)$ is an orthogonal matrix and $\f{b}\in\R^2$. The group of isometries in $\R^2$ will be denoted by $\iso(\R^2)$. The isometries preserving also the orientation of the plane are called \emph{direct} isometries and it holds $\det(\f A)=1$ in this case. For $\det(\f A)=-1$  we speak about {\em indirect} isometries. Identity, translation and rotation are examples of direct isometries, whereas reflection (i.e., axial symmetry) belongs among indirect isometries.

Two point sets $X,X'\subset \R^2$ are isometric (or {\em congruent}\/) if there exists $\phi\in \iso(\R^2)$ such that $X'=\phi(X)$. Furthermore, $X$ is said to possess a \emph{symmetry} if there exists a non-identical isometry $\phi$ such that $X=\phi(X)$. The {\em group of symmetries} of $X$ will be denoted by $\sym(X)$. The subgroup of orientation-preserving (i.e., direct) symmetries is called the {\em proper symmetry group}. 

\medskip
It is well known that any finite subgroup of $\iso(\R^2)$ falls into the following classes:
{\em cyclic groups} $C_m$ consisting of all rotations about the same center by multiples of the angle  $\frac{2\pi}{m}$; and {\em dihedral groups} $D_m$ consisting of all the rotations in $C_m$ together with all reflections in $m$ axes that go through the same point, the center. Especially, $C_1$ is the trivial group containing only the identity;
$D_1$ is the group containing the identity and one reflection; 
$C_2$ is the symmetry group of the letter `S' consisting of the identity and one central symmetry; $D_2$ is the symmetry group of a rectangle which is not a square containing the identity, one central symmetry and two reflections with perpendicular axes. And for $m\geq 3$ we have the symmetry groups ($D_m$) or the proper symmetry groups ($C_m$) of some non-oriented or oriented regular $m$-gon, respectively. 

\smallskip
Hence, to determine such a symmetry group it is sufficient to find:
\begin{enumerate}
  \vspace{-1ex}
  \item the center of symmetry (barycenter),
  \item the rotation angle $\frac{2\pi}{m}$ (where $m=1$ corresponds to the object without any rotational symmetry),
  \item the direction of one symmetry axis (if the $m$-gon is non-oriented, i.e., the symmetry group $D_m$).
\end{enumerate}

\medskip

The strategy used in the next sections for determining symmetries  is based on application of a suitable operator $\Psi$ which commutes with isometries. Then the unique assigning of some object to a given geometric shape $X$ via $\Psi$ enables to study symmetries of $\Psi(X) $ instead of symmetries of $X$. This is formalized in the following lemma. Of course, it is essential to find $\Psi$ wisely 
in order to simplify the situation, i.e., $\sym(\Psi(X))$ must be easier to determine than direct computing $\sym(X)$.

\begin{lemma}\label{lem:komut}
Let be given $\mathcal{X}\subset\mathcal{P}(\R^2)$, where $
\mathcal{P}(\R^2)$ is the set of all subsets of $\R^2$,  and $\Psi:\mathcal{X}\rightarrow\mathcal{P}(\R^2)$ such that
for all $X\in\mathcal{X}$ and for all $\phi\in\iso(\R^2)$ it holds
\begin{equation}
\Psi(\phi(X))=\phi(\Psi(X)),
\end{equation}
then $\sym(X)\subset\sym(\Psi(X))$.
\end{lemma}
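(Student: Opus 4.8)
The statement is that if $\Psi$ commutes with every isometry, then every symmetry of $X$ is also a symmetry of $\Psi(X)$. The plan is to simply unwind the definitions. Let $\phi \in \sym(X)$; by definition this means $\phi \in \iso(\R^2)$ is a non-identical isometry with $\phi(X) = X$. I want to show $\phi \in \sym(\Psi(X))$, i.e. that $\phi(\Psi(X)) = \Psi(X)$.

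The key step is the following one-line computation: apply the hypothesis $\Psi(\phi(X)) = \phi(\Psi(X))$, and then substitute $\phi(X) = X$ on the left-hand side, obtaining
\begin{equation}
\phi(\Psi(X)) = \Psi(\phi(X)) = \Psi(X).
\end{equation}
Hence $\phi$ fixes the set $\Psi(X)$, so $\phi \in \sym(\Psi(X))$ provided $\phi$ is non-identical — which it is, since $\phi \in \sym(X)$ already guarantees $\phi \neq \mathrm{id}$. Since $\phi$ was an arbitrary element of $\sym(X)$, this proves the inclusion $\sym(X) \subset \sym(\Psi(X))$.

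There is essentially no obstacle here; the only thing worth flagging is that the hypothesis must be applied with the correct $X$ in $\mathcal{X}$ (it holds for all $X \in \mathcal{X}$, so in particular for the one we fixed), and that one should note the inclusion need not be an equality — $\Psi(X)$ may have extra symmetries created by the operator $\Psi$ that $X$ itself does not possess, which is exactly why the lemma is stated as a containment rather than an identity. This also explains why in the applications one must still verify, after computing $\sym(\Psi(X))$, which of its elements actually belong to $\sym(X)$.
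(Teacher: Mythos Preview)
Your proof is correct and follows exactly the same one-line computation as the paper: for $\phi\in\sym(X)$ one has $\phi(\Psi(X))=\Psi(\phi(X))=\Psi(X)$. The additional remarks you make about non-identity and about the inclusion being possibly strict are accurate elaborations, but the core argument is identical.
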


\begin{proof}
 Let $\phi\in\sym(X)$, then $\phi(\Psi(X))=\Psi(\phi(X))=\Psi(X)$.
\end{proof}

\begin{example}\rm
We recall \cite{AlLaVr19} where $\mathcal{X}$ is a set of real planar algebraic curves and the mapping $\Psi$ is the Laplace operator $\Delta$. As known this operator commutes with isometries. This property was used in the referred paper for finding symmetries of an algebraic curve, say $f$, via the chain of symmetry groups 
\begin{equation}
\sym(f)\subset\sym(\Delta\,f)\subset\sym(\Delta^2\, f)\subset\cdots\subset\sym(\Delta^{\ell}\, f)=\sym(h),
\end{equation}
where $\sym(h)$ is finite and easier to determine. 
\end{example}

\begin{example}\rm\label{ex:Discrete_Laplace}
In this paper, we are interested in discrete curves, see Section~\ref{sec:sym_polyline}. So we mention also a discrete analogy of $\Delta$. Consider a closed planar polyline $X$ formed by a sequence of ordered points $\f v_0,\f v_1,\ldots, \f v_n=\f v_0$ connected with line segments. 

A~\emph{discrete Laplacian operator} is a map associating to each $\f v_i$ the vector 
\begin{equation}
   L(\f v_i) = \f v_i-\frac{1}{2}(\f v_{i-1}+\f v_{i+1}),
\end{equation}
This is a linear operator which can be represented by the circular matrix 
\begin{equation}
   L=
   \begin{pmatrix}
      1 &-\frac{1}{2}&0&\dots&\dots&0&-\frac{1}{2}\\
      -\frac{1}{2}&1&-\frac{1}{2}&0&\dots&\dots&0\\
      \vdots&\vdots&\vdots&\vdots&\vdots&\vdots&\vdots\\
      0&\dots&\dots&0&-\frac{1}{2}&1&-\frac{1}{2}\\
      -\frac{1}{2}&0&\dots&\dots&0&-\frac{1}{2}&1
   \end{pmatrix}.
\end{equation}
Next, a~\emph{smoothing operator} is defined as 
\begin{equation}
  S=I-\lambda L,
\end{equation}
where $\lambda\in (0,1)$, i.e.,  each vertex $\f v_i$ is replaced by the affine combination $(1-\lambda)\f v_i+\frac{\lambda}{2}(\f v_{i-1}+\f v_{i+1})$. The smoothing operator commutes with the affine (and thus also Euclidean) transformations and thus we arrive to the conclusion that it holds $\sym(X)\subset\sym(S(X))$, cf. Fig.~\ref{fig:smoothing}.
\end{example}

\begin{figure}[t]
\begin{center}
\includegraphics[height=0.23\textwidth]{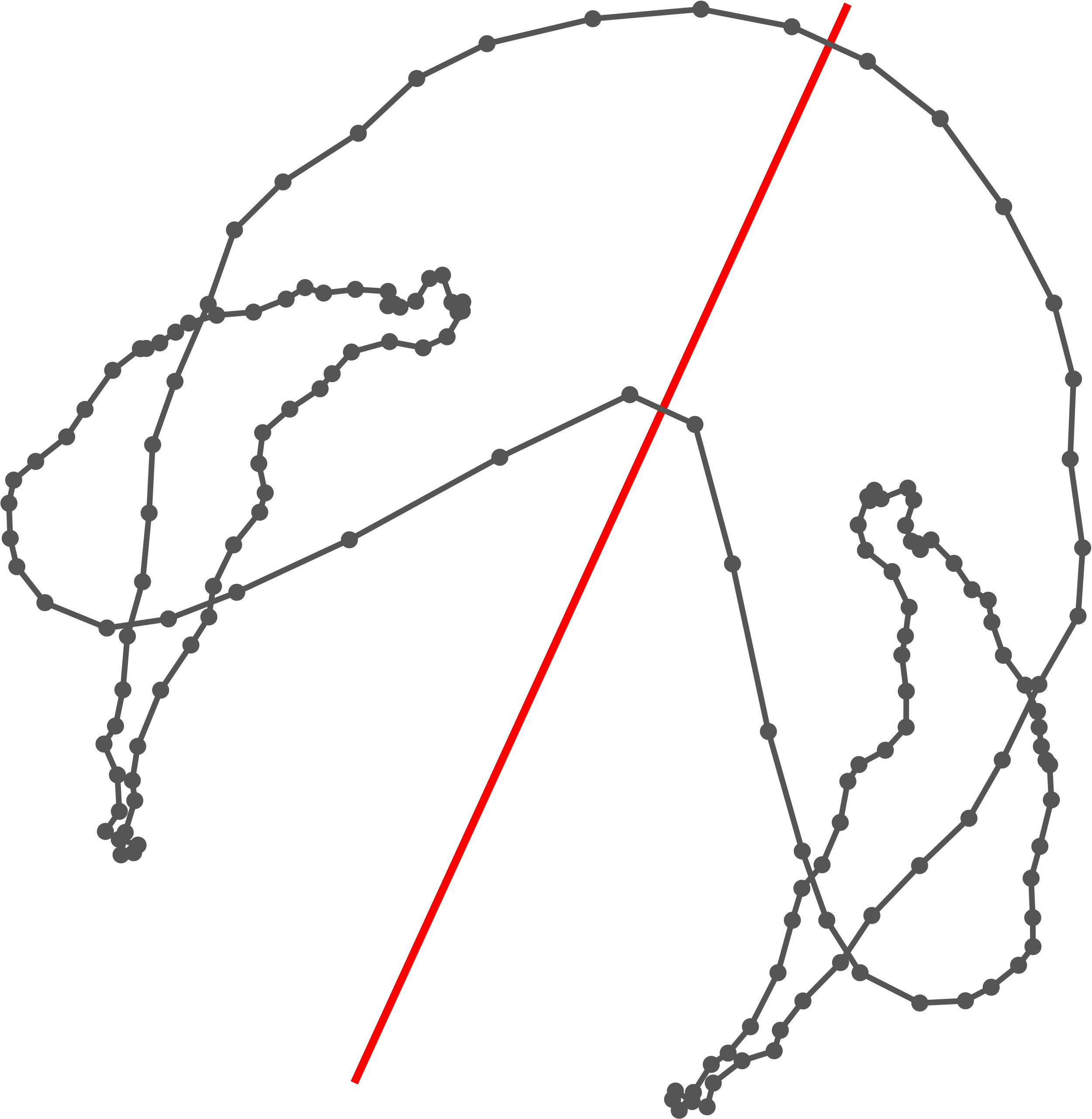}
\hspace{3ex}
\includegraphics[height=0.23\textwidth]{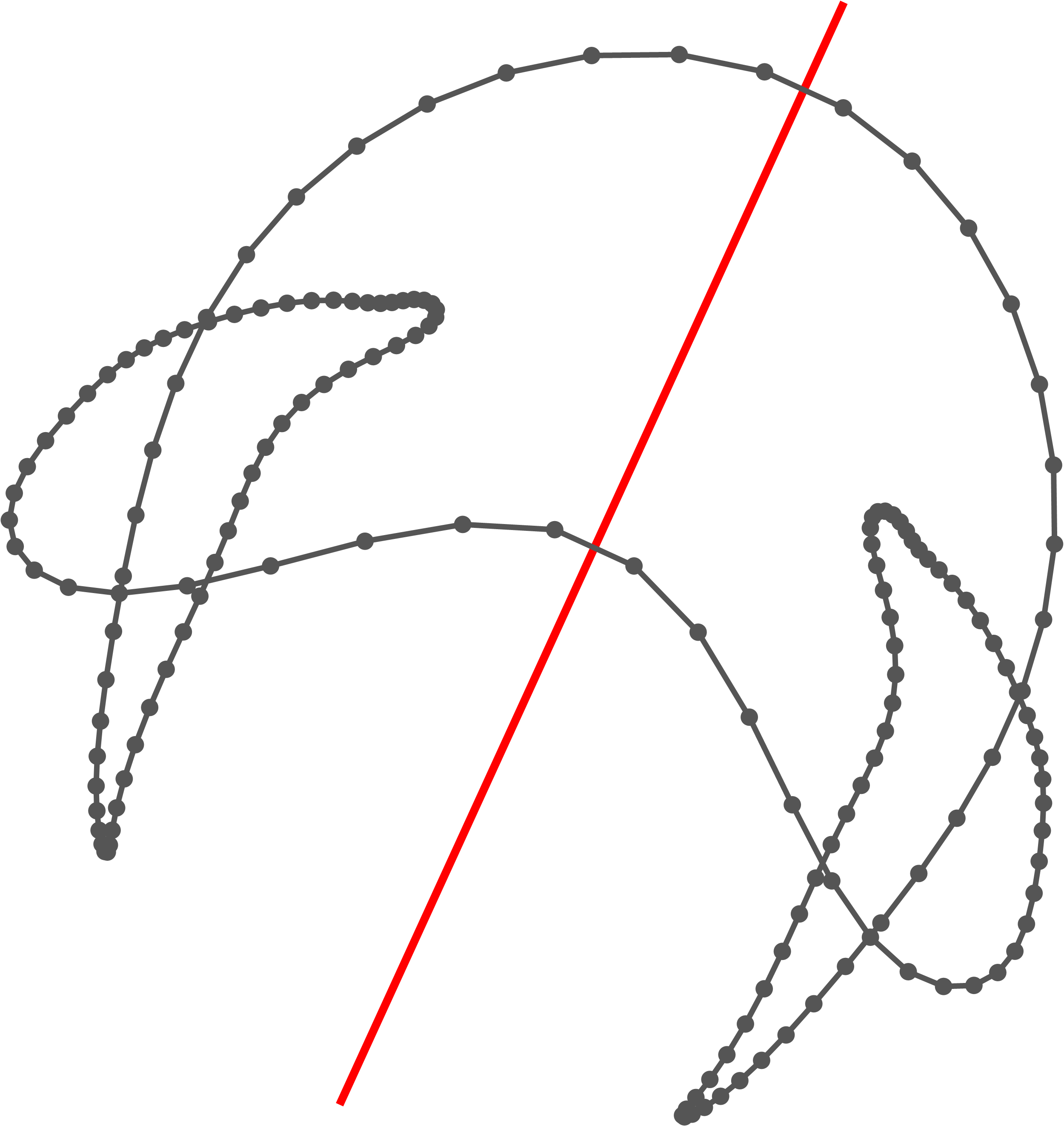}
\hspace{3ex}
\includegraphics[height=0.23\textwidth]{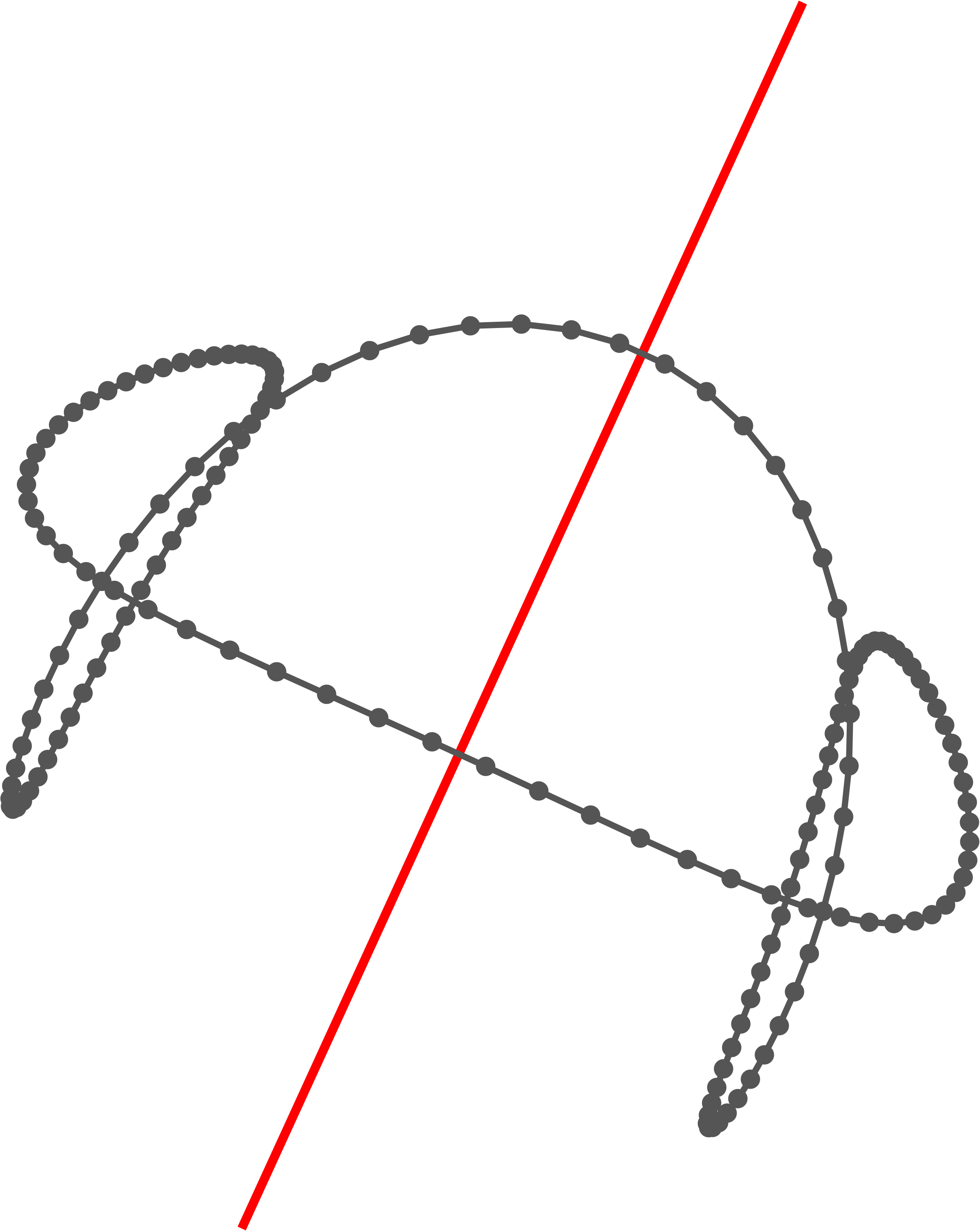}
\hspace{3ex}
\includegraphics[height=0.23\textwidth]{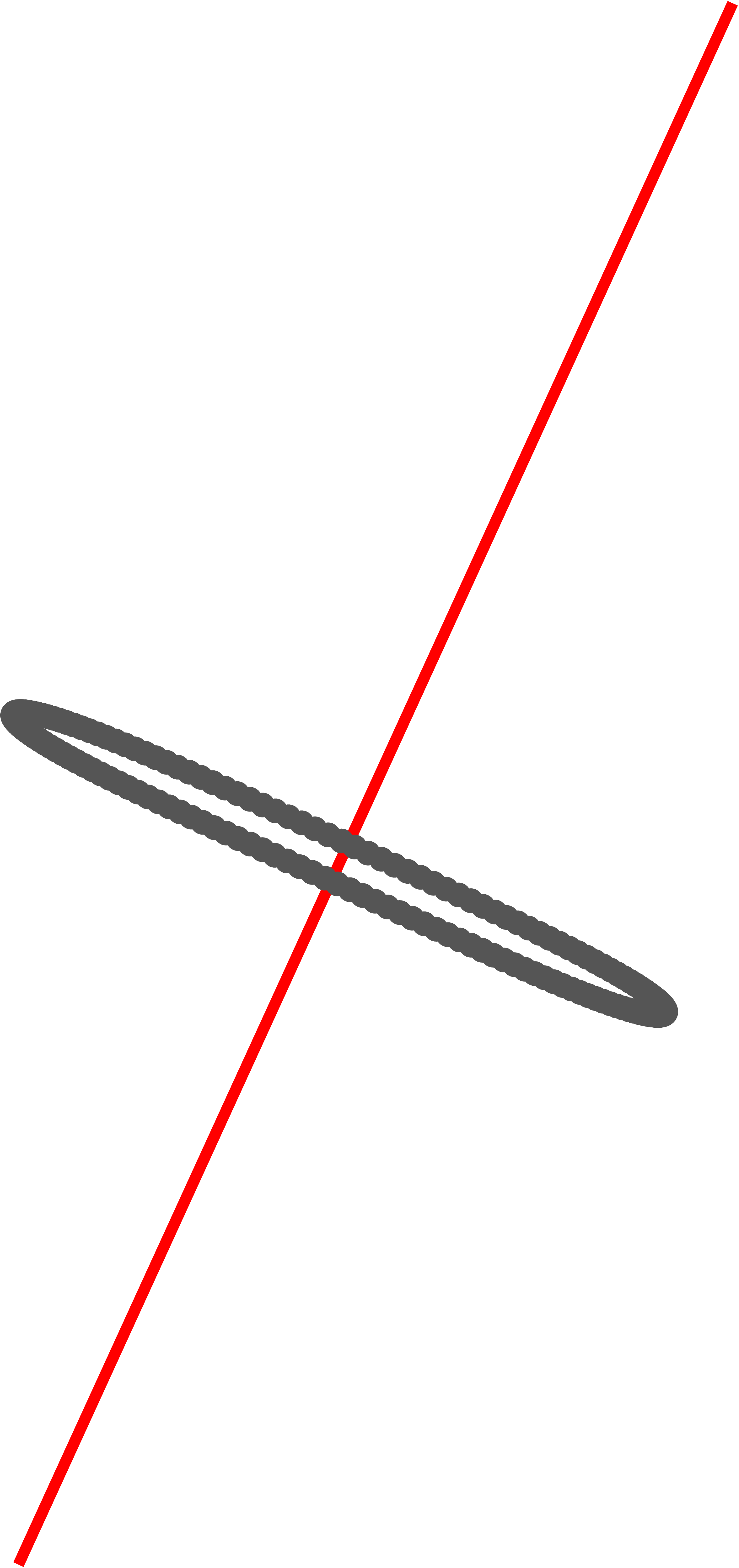}
\caption{Laplace smoothing and preservation of symmetries. From left to right: The original polyline and the polylines after Laplace smoothing w.r.t. $10$, $100$ and $1000$ step. All the polylines are symmetric w.r.t. the red axis.}\label{fig:smoothing}
\end{center}
\end{figure}

\begin{remark}\rm  
In fact, the content of Example~\ref{ex:Discrete_Laplace} was our first step to the topic of this paper. We have noticed that 
the procedure of smoothing used in computer graphics for removing rough features of a given shape exhibits the property that it preserves symmetries of the original non-smoothed version. And from the Laplacian smoothing and its analysis is a short way to  signal processing and discrete Fourier transform and thus to trigonometric curves, see \cite{Co-Or_etal15} and the following sections.
\end{remark}

\begin{figure}[t]
\begin{center}
\includegraphics[height=0.2\textwidth]{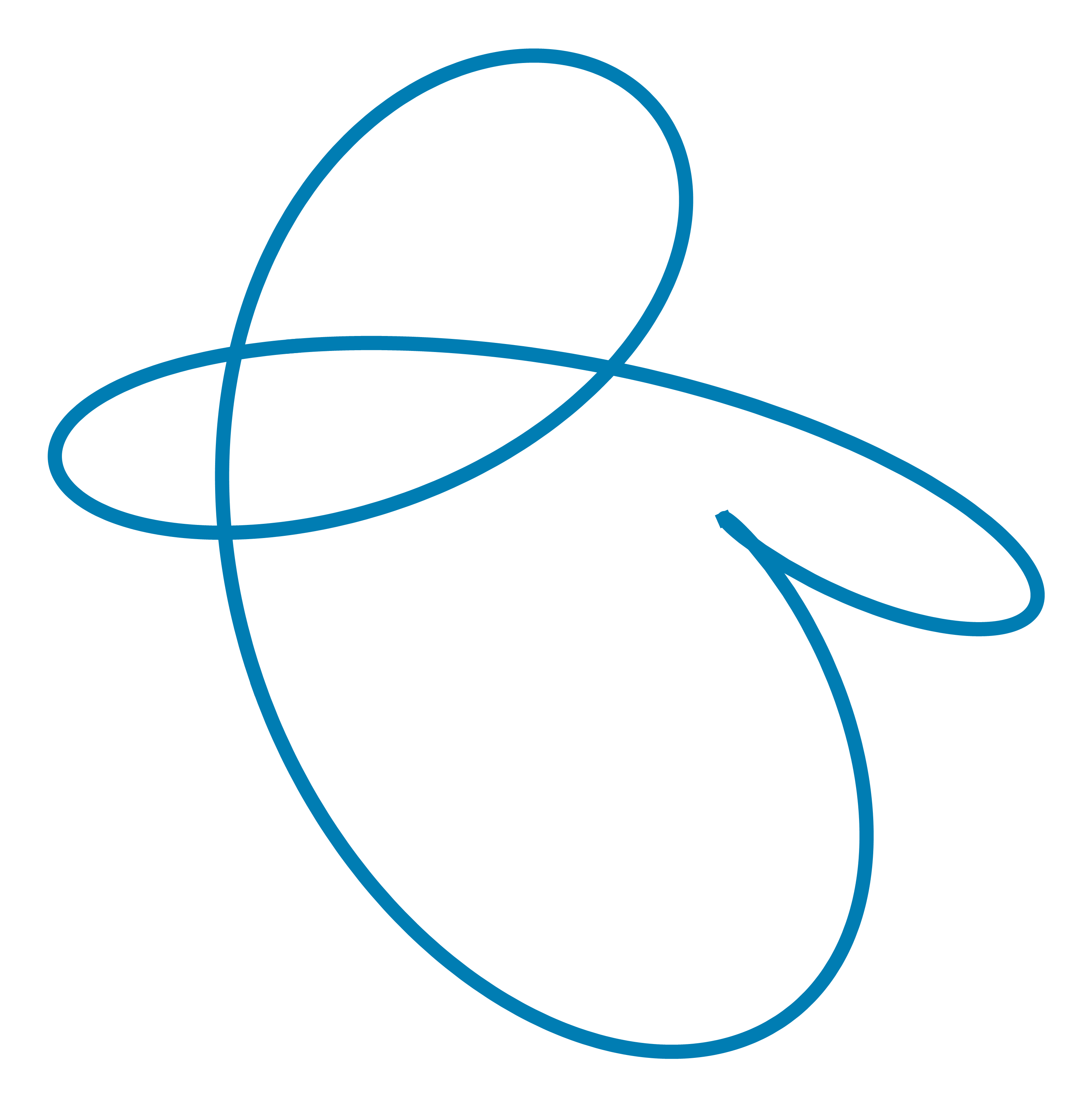}
\hspace{1ex}
\includegraphics[height=0.2\textwidth]{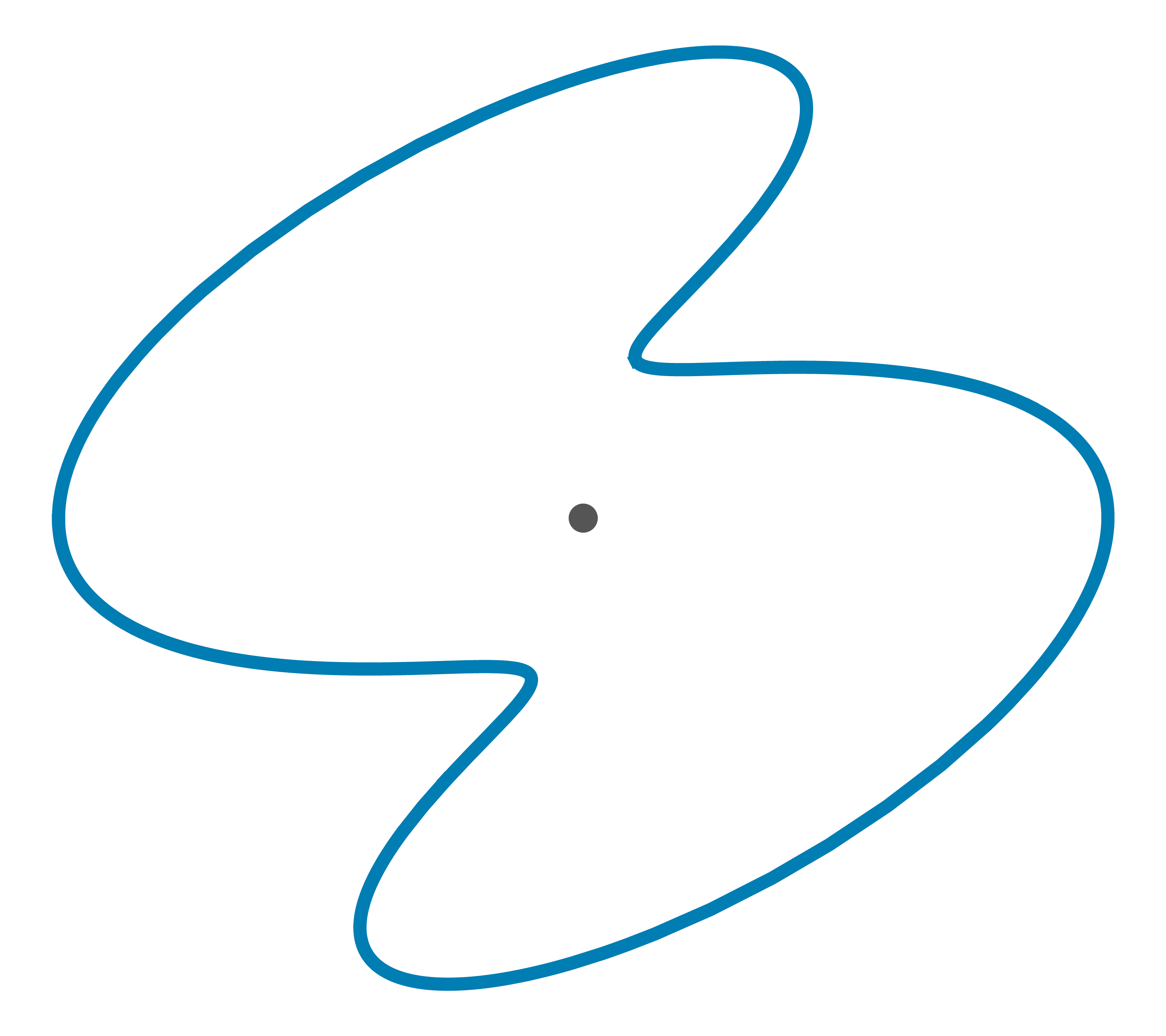}
\hspace{1ex}
\includegraphics[height=0.2\textwidth]{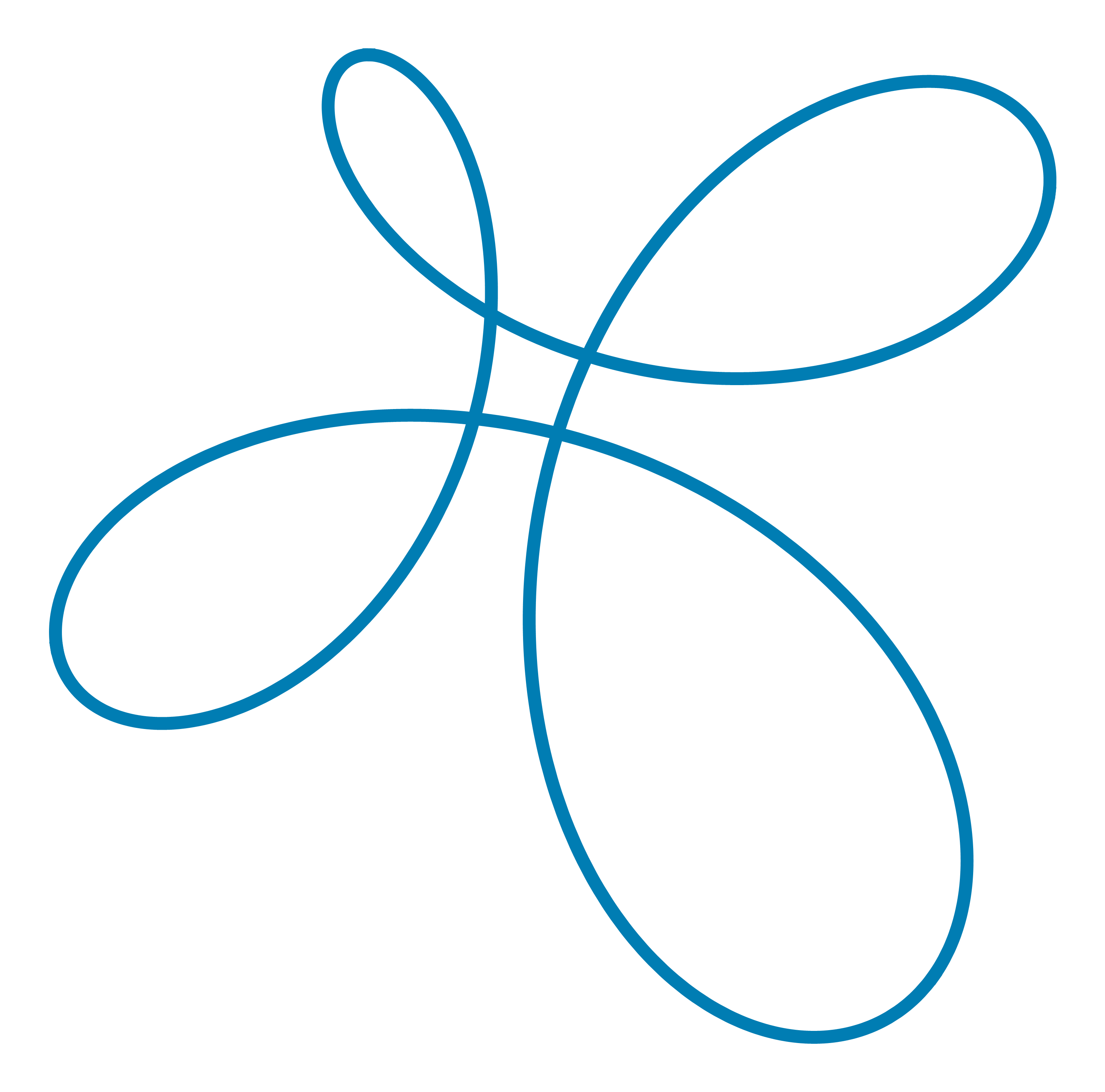}
\hspace{1ex}
\includegraphics[height=0.2\textwidth]{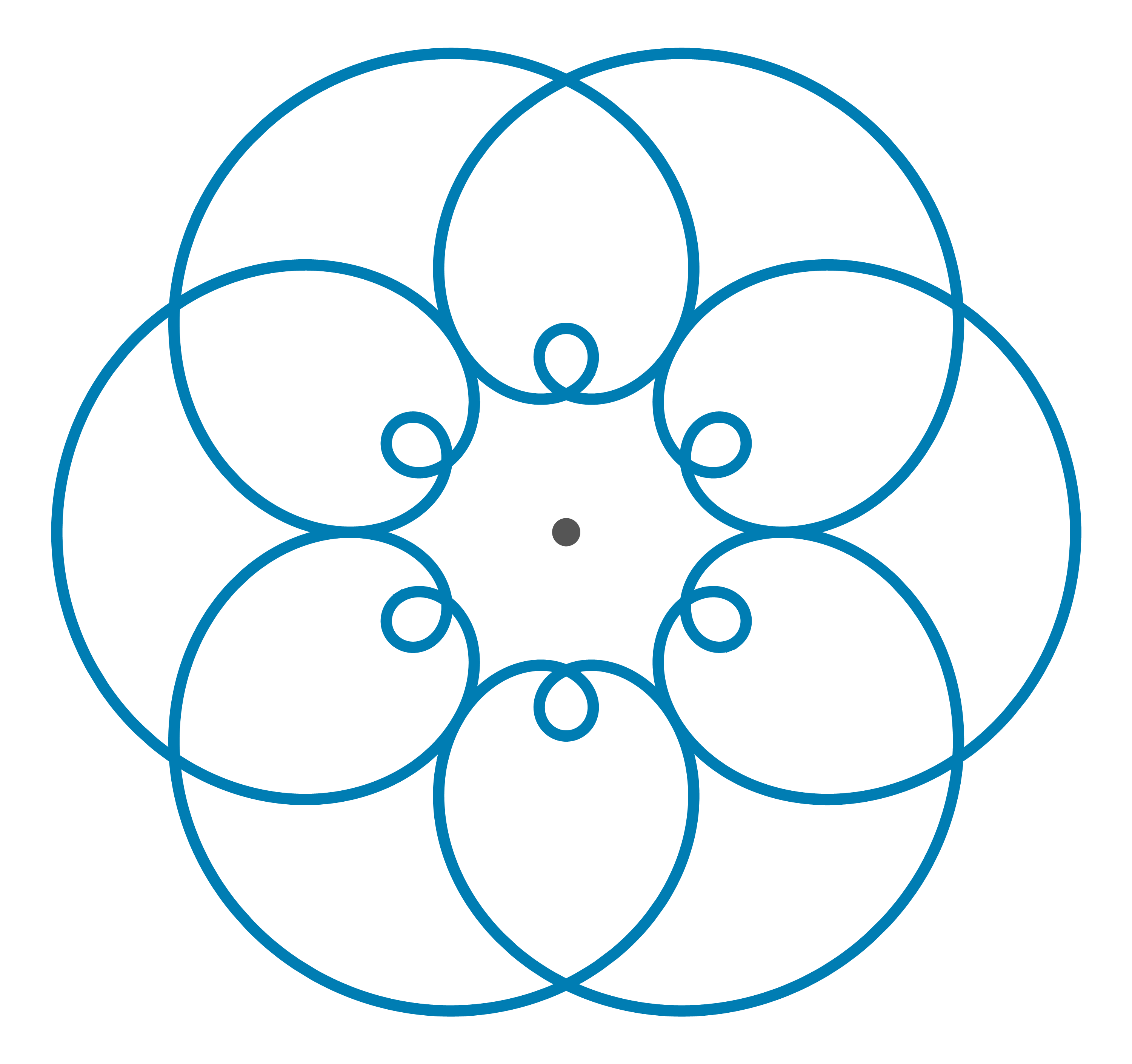}
\caption{Examples of trigonometric curves. From left to right: a generic non-symmetric trigonometric curve (all $\f p_k$'s are ellipses), a trigonometric curve with a central symmetry (all $\f p_k$'s are ellipses), a non-symmetric higher cycloid curve, and a higher cycloid curve with rotational and axial symmetries.}\label{fig:trig_curves}
\end{center}
\end{figure}

\subsection{Trigonometric curves}\label{sec:trigcurves}

By a~\emph{trigonometric curve} in $\R^2$ we understand a real curve possessing a parameterization in the form
\begin{equation}\label{eq:C_m}
  \f p(t)=\sum_{k=0}^N\big[\f a_k \cos(k t)+\f b_k\sin(k t)\big]=
  \f a_0+\sum_{k=1}^N\big[\f a_k \cos(k t)+\f b_k\sin(k t)\big]=\f a_0+\sum_{k=1}^{N}\f p_k(t),
\end{equation}
where $\f a_k,\f b_k\in\R^2$ and at least one of $\f a_N,\f b_N$ is non-zero, i.e., the coordinates are given by trigonometric polynomials of degree at most $N$ (truncated Fourier series). In short we will call $N$ the {\em trigonometric degree} of $\f p(t)$. If the $\gcd$ of all $k\in\{0,\ldots,N\}$ for which $\f a_k\neq 0$ or $\f b_k\neq 0$ is one then $\f p(t)$ is called \emph{primitive} (otherwise it can be reparameterized to obtain a primitive trigonometric curve). Notice that $\f p_k(t)$, where $t\in[0,2\pi]$, is a parameterization of a $k$-times traced ellipse.
It can be shown that \eqref{eq:C_m} parameterizes  an~algebraic irreducible curve of degree at most 2N and thus one can apply a suitable tool from algebraic geometry to get useful information about these shapes. For more details about properties of trigonometric curves see e.g. \cite{Wu47,HoSch98,RoJuSchHo09}. Some examples of trigonometric curves are shown in Fig.~\ref{fig:trig_curves}.

\smallskip
We recall that special instances of trigonometric curves also appear in the literature under the name  \emph{higher cycloid curves} or {\em cycloids of $N$th stage}, see e.g. \cite{Wu47}. They belong among kinematic curves generated by the trace of a fixed point on a circle that rolls without slipping around another circle that rolls without slipping around further circle etc. (i.e., they are generalizations of well knowns hypo-/epicycloids). The condition on \eqref{eq:C_m} being a higher cycloid curve has the form
\begin{equation}
\f (b_{k1},b_{k2})=\sigma_k(-a_{k2},a_{k1}),\quad \mbox{where } \sigma_k\in\{-1,+1\},
\end{equation}
i.e., considered $\f a_k,\f b_k$ as vectors in $\R^2$ then they are perpendicular. Thus for $\f p_k$ we arrive at
\begin{equation}\label{eq:circle_cond}
\f p_k=(a_{k1},a_{k2})\cos(kt)+\sigma_k(-a_{k2},a_{k1})\sin(kt).
\end{equation}
For the sake of simplicity, we naturally identify $\R^2$ with $\C$ and then \eqref{eq:circle_cond} can be rewritten as
\begin{equation}\label{eq:circle_cond2}
\f p_k=(a_{k1}\cos(kt)-\sigma_ka_{k2}\sin(kt),a_{k2}\cos(kt)+\sigma_ka_{k1}\sin(kt))=(a_{k1}+a_{k2}\I)\cdot[\cos(\sigma_k kt)+\I\sin(\sigma_k kt)].
\end{equation}
When setting $\cos(\sigma_k kt)+\I\sin(\sigma_k kt)=\e^{\sigma_k kt\I}$ and $a_{k1}+a_{k2}\I=\lambda_k\e^{\psi_k\I}$ we obtain the parameterization of the higher cycloid  curve in the form
\begin{equation}\label{eq:complex_param_higher_cycloid}
   \f p(t) = \f a_0+\sum_{k=1}^N \lambda_k \e^{(\sigma_k kt+\psi_k)\I},
\end{equation}
where $\lambda_k\in\R^{\geq 0}$ are the radii of the circles, $\psi_k\in[0,2\pi)$ are the ``phase shifts'' and $\sigma_k\in\{-1,+1\}$ determine the \emph{orientation} of the parameterization $\f p_k$,

Wunderlich discussed a lot of geometric properties of higher cycloids, e.g., he proved the generalization of Euler's theorem: {\em Every higher cycloid of $N$th stage can be generated in $N$ different ways by rolling of two higher cycloids of stage $N-1$}. Wunderlich also used higher cycloids  for  curve  approximation  of planar curves, see \cite{Hu04}. Some examples of higher cycloid curves are shown in Fig.~\ref{fig:trig_curves} (the last two instances).

\subsection{Periodic data and trigonometric interpolation}\label{sec:trig_interpol}

For later use we shortly recall the trigonometric interpolation problem as a very useful method for periodic data processing. For the sake of brevity we do not go into details and mention only the computational part of this problem -- the readers more interested in this topic are kindly referred e.g. to \cite{Br19,Zy02,StBu13}. The connection to the determination of the symmetries of discrete curves will be discussed in Section~\ref{sec:sym_polyline}. 

Let be given a set of $n$ ordered data $(t_j,x_j)$ that is periodic. In addition, we assume that $\{t_j\}$ are rescaled such that they lie in the interval $[0,2\pi)$, i.e.,
we have
\begin{equation}
0\leq t_0 < t_1 <\ldots < t_{n-1}< 2\pi.
\end{equation}
Our goal is to construct a trigonometric interpolation of these data, i.e., to find an interpolant in the form
\begin{equation}
 p(t)= \sum_{k=0}^N\big[a_k \cos(k t)+b_k\sin(k t)\big]=a_0+\sum_{k=1}^N\big[a_k \cos(k t)+b_k\sin(k t)\big],
\end{equation}
where $a_k,b_k\in\R$, going through $(t_j,x_j)$. In other words it is necessary to determine $2N+1$ real coefficients from $n$ input points. The existence of solution is guaranteed by the condition $n\leq 2N+1$. Specially, for $n=2N+1$ the interpolant is determined uniquely. If we have even number of input data, say $n=2N$, we arrive at one one-parametric solution. Then an additional requirement that the term $\sin(Nt)$ vanishes is often considered, which also results in the unique solution. The interpolation trigonometric polynomial has in this separate case the form
\begin{equation}
 p(t)= a_0+\sum_{k=1}^{N-1}\big[a_k \cos(k t)+b_k\sin(k t)\big]+a_N\cos(Nx).
\end{equation}

\smallskip
For non-uniformly spaced data $(t_j,x_j)$ one can just solve the associated system of linear equations. However when the nodes are uniformly spaced, i.e.,
\begin{equation}
t_j=j\frac{2\pi}{n}, \ j=0,\ldots,n-1.
\end{equation}
we obtain for odd number of input data $n=2N+1$ the closed-form expressions
\begin{equation}\label{eq:interpol_coef_odd}
\begin{array}{ll}
\displaystyle 
a_0  =  \frac{\sum_{j=0}^{n-1} x_j}{n}, &
\displaystyle 
a_k = \frac{2}{n}\sum_{j=0}^{n-1} x_j \cos(kt_j),\quad k=1,\dots,N,\\[3ex]
&
\displaystyle 
b_k =\frac{2}{n}\sum_{j=0}^{n-1} x_j \sin(kt_j),\quad k=0,1,\dots,N.
\end{array}
\end{equation}
The coefficients $a_k,b_k$ are called {\em discrete Fourier coefficients}. 
 
For even number of input data $n=2N$ and assuming that the term $\sin(Nt)$ vanishes, 
the coefficients are computed as
\begin{equation}\label{eq:interpol_coef_even}
\begin{array}{ll}
\displaystyle 
a_0  =  \frac{\sum_{j=0}^{n-1} x_j}{n}, \quad 
a_N  =  \frac{\sum_{j=0}^{n-1} x_j \cos(Nt_j)}{n},\,
&
\displaystyle 
a_k =  \frac{2}{n}\sum_{j=0}^{n-1} x_j \cos(kt_j),\quad k=1,\dots,N-1, \\[3ex]
&
\displaystyle 
b_k = \frac{2}{n}\sum_{j=0}^{n-1} x_j \sin(kt_j),\quad k=0,1,\dots,N.
\end{array}
\end{equation}

\section{Symmetries of trigonometric curves}\label{sec:trigcurves_sym}

First we would like to emphasize that any trigonometric curve $C$ is algebraic and thus it has a~finite group of symmetries unless it is a~circle or a line. In addition, these curves are also rational which suggests to apply for detecting their equivalences (including symmetries) approaches formulated for general rational curves. In this context we recall that special properties of rational parameterizations of trigonometric curves were exploited e.g. in \cite{AlQu20} for formulating a computational method for determining affine equivalences of trigonometric curves.
As our plan leads to a different goal (symmetries of discrete curves via trigonometric interpolation) we formulate a simple algorithm for determining symmetries of trigonometric curves directly from their trigonometric parameterizations.

\smallskip
Let us recall that there are infinitely many trigonometric parameterizations of the curve $C$, when one exists. However Theorem 4.6 from \cite{HoSch98} says that these parameterizations are `essentially' the same. For subsequent use we recall this theorem; for its proof the readers are kindly referred to the original source.

\begin{theorem}\label{thm:schicho}
  A trigonometric parameterization of a curve is unique up to a linear parameter change.
\end{theorem}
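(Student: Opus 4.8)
The plan is to move the statement from the trigonometric world into the rational one via the substitution $z=\e^{\I t}$, and then to combine the classical uniqueness of a proper rational parameterization up to a Möbius reparameterization with the two extra features that such a reparameterization must now respect: it has to preserve the real structure (the unit circle $|z|=1$) and the finiteness of the trigonometric degree (poles only at $z=0$ and $z=\infty$).

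First I would pass to the exponential form $\f p(t)=\sum_{k=-N}^{N}\f c_k\e^{\I kt}$ with $\f c_k\in\C^2$ and $\f c_{-k}=\overline{\f c_k}$, and set $z=\e^{\I t}$. Then a trigonometric parameterization of $C$ becomes a Laurent-polynomial map $\f P(z)=\sum_{k=-N}^{N}\f c_k z^k$, equivalently a rational parameterization of the projective closure $\overline C\subset\mathbb{P}^2$ by $\mathbb{P}^1$ whose pull-back of the line at infinity is supported on $\{0,\infty\}$ (it in fact equals $N[0]+N[\infty]$ once $\f c_N\neq\f 0$) and which sends $\{|z|=1\}$ into $\R^2$; conversely every such datum comes from a trigonometric parameterization, with $t$ recovered from $z$ on the unit circle. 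Under this dictionary a linear parameter change $t\mapsto\varepsilon t+c$, $\varepsilon\in\{+1,-1\}$, corresponds exactly to $z\mapsto\e^{\I c}z^{\varepsilon}$, so it suffices to prove that two such Laurent parameterizations $\f P(z)$, $\f Q(w)$ of the same $C$ are related by $z\mapsto\e^{\I c}w^{\pm1}$.

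Second, let $\f R\colon\mathbb{P}^1\to\mathbb{P}^2$ be a proper (generically injective) parameterization of $\overline C$, obtained from the normalization $\widetilde C\cong\mathbb{P}^1$; it is unique up to a Möbius change of its parameter. Write $\f P=\f R\circ f$ and $\f Q=\f R\circ g$ with $f,g\colon\mathbb{P}^1\to\mathbb{P}^1$ rational, so that the parameters are coupled by $f(z)=g(w)$ on $\widetilde C$. Now impose the two constraints. Reality: since $\f P$ and $\f Q$ carry the unit circles into $\R^2$, the maps $f$ and $g$ must intertwine $z\mapsto 1/\bar z$ (resp.\ $w\mapsto 1/\bar w$) with the antiholomorphic involution $\tau$ that complex conjugation of $\R^2$ induces on $\widetilde C$. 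Finiteness: $f^{*}(\f R^{*}\ell_\infty)$ is supported on $\{0,\infty\}$, hence $f^{-1}$ of the support $\Sigma$ of $\f R^{*}\ell_\infty$ lies in the two-point set $\{0,\infty\}$, and likewise for $g$. In the non-degenerate situation in which $\Sigma$ consists of two points (which, whenever a trigonometric parameterization exists, turn out to be $\tau$-conjugate), we may use up the Möbius freedom in $\f R$ to arrange $\Sigma=\{0,\infty\}\subset\widetilde C$ and $\tau(u)=1/\bar u$; then $f$ is totally ramified over $0$ and over $\infty$, so $f(z)=\mu z^{m}$, the reality condition forces $|\mu|=1$, and primitivity of $\f P$ forces $m=\pm1$. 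The same applies to $g$, and comparing $\mu z^{\pm1}=f(z)=g(w)=\nu w^{\pm1}$ gives $z=\e^{\I c}w^{\pm1}$, i.e.\ $t=\pm s+c$.

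The routine ingredients — the exponential substitution, the uniqueness of $\f R$ up to $\PGL(2)$, the classification of the rational maps commuting with $z\mapsto1/\bar z$ and fixing $\{0,\infty\}$, and the appeal to primitivity to rule out $|m|\ge 2$ — I expect to cause no trouble. The genuinely delicate point, and the place where a hypothesis on $C$ is really needed, is the analysis of $\Sigma=\mathrm{supp}(\f R^{*}\ell_\infty)$: one must check that it is $\tau$-stable, rule out $|\Sigma|\ge 3$ (which would make a trigonometric parameterization impossible altogether) and — crucially — exclude the curves with $|\Sigma|=1$, namely lines, parabolas and their higher analogues (for instance $(\cos t,\cos 2t)$ and $(2\cos t,\,3+4\cos 2t)$ are inequivalent primitive trigonometric parameterizations, of the very same trigonometric degree, of one and the same parabola), for which the statement as phrased fails; it is precisely these cases, together with the circle, that must be barred in the hypotheses.
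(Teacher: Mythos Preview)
The paper does not supply a proof of this theorem: it is quoted as Theorem~4.6 of Hong--Schicho \cite{HoSch98}, with the explicit remark that ``for its proof the readers are kindly referred to the original source.'' There is therefore nothing in the paper to compare your argument against.

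For what it is worth, your strategy --- pass to Laurent polynomials via $z=\e^{\I t}$, factor through the normalisation $\f R$, and then constrain the residual rational factor by the real structure on $|z|=1$ together with the pole support $\{0,\infty\}$ --- is the natural route and is sound in the generic situation $|\Sigma|=2$ that you describe. Your closing caveat is also correct and worth recording: the statement, read literally, fails for curves whose proper parameterization meets the line at infinity in a single branch, and your parabola example $(\cos t,\cos 2t)$ versus $(2\cos t,\,3+4\cos 2t)$ is a genuine pair of primitive trigonometric parameterizations of the same conic, of the same trigonometric degree, not related by any linear reparameterization. So some non-degeneracy hypothesis is indeed required for the theorem to hold as stated.
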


If $\phi\in\iso(\R^2)$ is a symmetry of a curve with trigonometric parameterization $\f p(t)$ then by Theorem~\ref{thm:schicho} there exists a linear reparameterization $t\mapsto \alpha t+\beta$, $\alpha,\beta \in \R$, such that 
\begin{equation}\label{eq:reparam}
   \phi(\f p(t))=\f p(\alpha t+\beta).
\end{equation}
Corollaries of this observation will allow us to determine the group of symmetries of any~trigonometric curve easily.

\begin{lemma}\label{lem:center}
  Let a trigonometric curve parameterized by $\f p(t)$ possesses a rotational symmetry with center $\f c$. Then $\f c=\f a_0$.
\end{lemma}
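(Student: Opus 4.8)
The plan is to use the reparameterization identity \eqref{eq:reparam} together with the explicit Fourier expansion \eqref{eq:C_m}. Suppose $\phi$ is a rotational symmetry of the curve with center $\f c$; write $\phi(\f x) = \f R(\f x - \f c) + \f c$ where $\f R \in \SO(2)$ is rotation by some angle $\theta$. By Theorem~\ref{thm:schicho} there exist $\alpha, \beta \in \R$ with $\phi(\f p(t)) = \f p(\alpha t + \beta)$ for all $t$. First I would argue that the reparameterization must be ``rigid'', i.e. $\alpha = \pm 1$: since $\phi$ is an isometry, the arclength of the parameterized curve is preserved, so $\int_0^{2\pi}\|\f p'(t)\|\,\diff t = \int_0^{2\pi}\|\f p'(\alpha t+\beta)\|\,|\alpha|\,\diff t$, and by periodicity of the integrand over a full period this forces $|\alpha| = 1$. (Alternatively: matching the highest trigonometric degree $N$ on both sides of \eqref{eq:reparam} shows the substitution $t \mapsto \alpha t$ must send the $\cos(Nt), \sin(Nt)$ span to itself, forcing $\alpha \in \{-1,+1\}$; a rotation, being orientation-preserving on the curve, should give $\alpha = +1$, but I only need $|\alpha|=1$.)

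Next I would extract the constant ($k=0$) Fourier mode from both sides of \eqref{eq:reparam}. Writing out the left side, $\phi(\f p(t)) = \f R\,\f p(t) - \f R\,\f c + \f c = \f R \f a_0 - \f R\f c + \f c + \sum_{k=1}^N \big[\f R\f a_k\cos(kt) + \f R\f b_k \sin(kt)\big]$, whose constant term is $\f R\f a_0 - \f R\f c + \f c$. On the right side, $\f p(\alpha t + \beta)$ with $\alpha = \pm 1$: each term $\f a_k\cos(k(\alpha t+\beta)) + \f b_k\sin(k(\alpha t + \beta))$ is again a combination of $\cos(kt)$ and $\sin(kt)$ (with $k \geq 1$) plus no constant contribution, so the constant term of the right side is exactly $\f a_0$. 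Equating constant terms gives $\f R\f a_0 - \f R \f c + \f c = \f a_0$, i.e. $\f R(\f a_0 - \f c) = \f a_0 - \f c$.

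Finally, since $\phi$ is a genuine (non-identical) rotation about $\f c$, the matrix $\f R$ is not the identity, so $1$ is not an eigenvalue of $\f R$ (a nontrivial planar rotation has no real eigenvalue, in particular no eigenvector for eigenvalue $1$). Hence $\f R(\f a_0 - \f c) = \f a_0 - \f c$ forces $\f a_0 - \f c = \f 0$, that is $\f c = \f a_0$, as claimed.

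The main obstacle is the first step — justifying $|\alpha| = 1$ cleanly. The arclength argument is the most robust (it works verbatim for reflections too), but one must be slightly careful that $\f p$ is $2\pi$-periodic and that the substitution genuinely gives equality of integrals over a full period; the degree-matching alternative is quicker but needs a brief argument that the leading-order block $\mathrm{span}\{\cos(Nt),\sin(Nt)\}$ is preserved, which again comes down to comparing trigonometric degrees on the two sides of \eqref{eq:reparam}. Once $|\alpha|=1$ is in hand, the rest is just reading off the zeroth Fourier coefficient and invoking that a nontrivial rotation fixes no nonzero vector.
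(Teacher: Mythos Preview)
Your proof is correct and follows essentially the same idea as the paper's: compare the constant (zeroth) Fourier coefficient on both sides of \eqref{eq:reparam} and use that a nontrivial planar rotation fixes only its center. The paper is terser --- it first reduces to the special case $\f c=\f o$ (so the rotation is linear) and does not explicitly justify $|\alpha|=1$ --- but the core argument is the same.
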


\begin{proof}
If $\f c =\f o$ is the origin, then the rotation $\rho$ is a~linear map an thus $\rho(\sum\f p_k(t))=\sum\rho(\f p_k(t))$. In particular, \eqref{eq:reparam} implies that $\rho(\f a_0)=\f a_0$ and thus $\f a_0=\f o$. The statement with general center is an immediate consequence.
\end{proof}

In the same way we can prove an analogous lemma dealing with axial symmetries.

\begin{lemma}\label{lem:point on L}
   Let a trigonometric curve parameterized by $\f p(t)$ possesses an axial symmetry with axis $L$. Then $\f a_0\in L$.
\end{lemma}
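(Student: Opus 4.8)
The plan is to mirror the proof of Lemma~\ref{lem:center}, with the rotation replaced by the reflection $\phi$ in the axis $L$. First I would reduce to the case $\f o\in L$: for any $\f q\in L$, the translate $\f p(t)-\f q$ is a trigonometric parameterization (with constant term $\f a_0-\f q$) of a curve invariant under the reflection in $L-\f q$, and $L-\f q$ passes through the origin; establishing $\f a_0-\f q\in L-\f q$ in that situation yields $\f a_0\in L$ after translating back.

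Assume then $\f o\in L$, so that $\phi$ is a linear orthogonal map and hence commutes with the finite sum in \eqref{eq:C_m}. By \eqref{eq:reparam} there are $\alpha,\beta\in\R$ (with $\alpha\neq 0$, since it is a genuine reparameterization) such that $\phi(\f p(t))=\f p(\alpha t+\beta)$. Both sides are trigonometric polynomials in $t$, so their Fourier coefficients coincide; comparing the components of trigonometric degree $0$ gives $\phi(\f a_0)$ on the left (as $\phi$ acts coefficientwise) and $\f a_0$ on the right (for $k\geq 1$ each term $\f a_k\cos(k(\alpha t+\beta))+\f b_k\sin(k(\alpha t+\beta))$ has zero mean over a period). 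Hence $\phi(\f a_0)=\f a_0$, and since the fixed-point locus of a reflection is precisely its axis, $\f a_0\in L$.

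I expect no real obstacle here; this is the ``same way'' promised in the text. The only point deserving attention is that a reflection in a general axis $L$ is affine but not linear, so it cannot be pushed through the sum in \eqref{eq:C_m} directly — which is exactly what the preliminary translation step handles, just as the ``general center'' remark does in Lemma~\ref{lem:center}. (Alternatively, one could treat all axes at once by writing $\f a_0=\frac{1}{2\pi}\int_0^{2\pi}\f p(t)\,\diff t$, noting that an affine map commutes with this averaging and that the substitution $s=\alpha t+\beta$ leaves the integral unchanged by periodicity, whence $\phi(\f a_0)=\f a_0$.)
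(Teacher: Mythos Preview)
Your proof is correct and follows exactly the route the paper intends: reduce to an axis through the origin so that the reflection is linear, push it through the sum in \eqref{eq:C_m}, and use \eqref{eq:reparam} to conclude $\phi(\f a_0)=\f a_0$, hence $\f a_0\in L$. This is precisely the ``same way'' as Lemma~\ref{lem:center} that the paper alludes to, with your added detail on matching constant Fourier terms (and the integral alternative) being a welcome elaboration rather than a departure.
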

 
For the sake of simplicity we will assume in what follows that $\f a_0=\f o$ and thus in order to determine the symmetries of the curve it is enough to find the rotation angle and/or the direction of the axis. Let us start with the rotational symmetries.

\begin{lemma}\label{lem:rot_sym}
Let $\rho_{\frac{2\pi}{m}}:\R^2\rightarrow\R^2$ be the~rotation around the origin by the angle $\frac{2\pi}{m}$. Then it is a symmetry of $\f p(t)$ iff 
there exists $d\in\Z$ such that 
\begin{equation}
\rho_{\frac{2\pi}{m}}(\f p(t))=\f p\left(t+\frac{2\pi d}{m}\right).
\end{equation}\end{lemma}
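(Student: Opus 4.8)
The plan is to unwind the definition of ``symmetry'' through Theorem~\ref{thm:schicho} and then pin down the linear reparameterization $t\mapsto\alpha t+\beta$ that must accompany the rotation $\rho_{\frac{2\pi}{m}}$. One direction is trivial: if $\rho_{\frac{2\pi}{m}}(\f p(t))=\f p(t+\tfrac{2\pi d}{m})$ for some $d\in\Z$, then the image of the curve under $\rho_{\frac{2\pi}{m}}$ is exactly the image of $\f p$, so $\rho_{\frac{2\pi}{m}}\in\sym(C)$. The substance is the forward direction.

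First I would invoke \eqref{eq:reparam}: since $\rho_{\frac{2\pi}{m}}$ is a symmetry, there are $\alpha,\beta\in\R$ with $\rho_{\frac{2\pi}{m}}(\f p(t))=\f p(\alpha t+\beta)$. The goal is to show $\alpha=1$ and $\beta=\tfrac{2\pi d}{m}$ for some integer $d$. For $\alpha$: the left-hand side is a trigonometric polynomial in $t$ whose highest frequency is $N$ (rotations are linear, so they do not change which $\f a_k,\f b_k$ vanish, in particular not $\f a_N,\f b_N$); the right-hand side $\f p(\alpha t+\beta)$ has highest frequency $|\alpha|N$. Comparing trigonometric degrees forces $|\alpha|=1$. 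To rule out $\alpha=-1$ I would expand $\rho_{\frac{2\pi}{m}}(\f p(t))=\sum_k \rho_{\frac{2\pi}{m}}(\f p_k(t))$ and compare the $k$-th frequency component with that of $\f p(-t+\beta)$; because $\rho_{\frac{2\pi}{m}}$ preserves orientation while $t\mapsto -t$ reverses the tracing sense of each ellipse $\f p_k$, a frequency-by-frequency match is impossible unless the curve degenerates — but a genuine trigonometric curve of positive degree cannot satisfy this, so $\alpha=1$. (If the paper wants to be fully careful one can note that for $\alpha=-1$ one would get $\rho_{\frac{2\pi}{m}}\f p_k(t)=\f p_k(-t+\beta)$ for each $k$, and matching the $\cos(kt),\sin(kt)$ coefficients shows this would already make $\f p_k$ itself reflection-symmetric for all $k$, which need not hold; in any case the clean argument is the orientation one.)

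With $\alpha=1$ we have $\rho_{\frac{2\pi}{m}}(\f p(t))=\f p(t+\beta)$. Now I would show $\tfrac{m\beta}{2\pi}\in\Z$. Iterating the identity $m$ times gives $\rho_{\frac{2\pi}{m}}^{\,m}(\f p(t))=\f p(t+m\beta)$, and $\rho_{\frac{2\pi}{m}}^{\,m}=\mathrm{id}$, so $\f p(t)=\f p(t+m\beta)$ identically. Since $\f p$ is (after passing to its primitive reparameterization, as in Section~\ref{sec:trigcurves}) $2\pi$-periodic and no shorter period — i.e. the map $t\mapsto\f p(t)$ from $\R/2\pi\Z$ is generically injective, or at worst the minimal period divides $2\pi$ with the quotient an integer — the relation $\f p(t)=\f p(t+m\beta)$ forces $m\beta\in 2\pi\Z$, say $m\beta=2\pi d$ with $d\in\Z$. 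Hence $\beta=\tfrac{2\pi d}{m}$, which is what was claimed.

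The main obstacle is the last step: justifying that $\f p(t)=\f p(t+m\beta)$ implies $m\beta$ is an integer multiple of $2\pi$. This is clean for a \emph{primitive} trigonometric curve whose parameterization is injective on $[0,2\pi)$, but trigonometric curves can in principle self-intersect or be multiply traced, so one must either reduce to the primitive case (a reparameterization $t\mapsto et$ with $e=\gcd$) and argue there, or phrase the conclusion modulo the period — after which $\beta$ is determined only up to that period and $d$ ranges over $\Z$, exactly matching the ``there exists $d\in\Z$'' in the statement. I would handle this by first normalizing to a primitive parameterization so that $2\pi$ is the genuine minimal period, invoke Theorem~\ref{thm:schicho} to know the reparameterization is linear, and then the periodicity argument closes the proof.
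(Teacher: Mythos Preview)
Your proof is correct and follows the same skeleton as the paper's: invoke \eqref{eq:reparam}, argue $\alpha=1$ because $\rho_{\frac{2\pi}{m}}$ is orientation-preserving, iterate $m$ times to get $\f p(t)=\f p(t+m\beta)$, and conclude from periodicity that $m\beta\in 2\pi\Z$. The paper's version is terser---it asserts $\alpha=1$ in one line and does not discuss the minimal-period issue you flag---so your degree comparison for $|\alpha|=1$ and your reduction to the primitive case are extra care filling in details the paper leaves implicit rather than a different approach.
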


\begin{proof}
By~\eqref{eq:reparam} we have $\rho_{\frac{2\pi}{m}}(\f p(t))=\f p(\alpha t+\beta)$. Since $\rho_{\frac{2\pi}{m}}$ is a linear orientation preserving mapping then $\alpha$ must equal to $1$.  Applying the rotation $m$ times we arrive at the original curve, i.e., $\f p(t)=\f p(t+m\beta)$. As $\f p(t)$ is $2\pi$-periodic we can conclude that $m\beta = 2\pi d$.
\end{proof}

\begin{remark}\rm
In the case $m=2$, the curve is centrally symmetric.  If the center is at the point $\f o$ then Lemma~\ref{lem:rot_sym} says that $-\f p(t)=\f p(t+\pi)$.  Moreover, this identity must hold for every  $\f p_k(t)=\f a_k \cos(kt)+\f b_k\sin(kt)$ which implies $\f p_k(t)=0$ for all even $k$. A particular example of a centrally symmetric trigonometric curve 
is shown in Fig.~\ref{fig:trig_curves} (second from the left).  
\end{remark}


Since a general ellipse does not possess an $m$-fold rotational symmetry for $m>2$, the~requirement for the existence of such a symmetry of the whole curve $\f p(t)$ forces each ellipse $\f p_k(t)$ to become a circle. In this case it is possible to write $\f p(t)$ (centered at $\f o$) in the form
\begin{equation}\label{eq:complex_param}
   \f p(t) = \sum_{k=1}^N \lambda_k \e^{(\sigma_k kt+\psi_k)\I},
\end{equation}
see \eqref{eq:complex_param_higher_cycloid}.

By Lemma~\ref{lem:rot_sym} the curve parameterized by $\f p(t)$ has an $m$-fold rotational symmetry if and only if
\begin{equation}\label{eq:complex_sym}
  \e^\frac{2\pi\I}{m}\cdot \f p_k(t)=\f p_k\left(t+\frac{2\pi d}{m}\right) \quad \mbox{for all } k.
\end{equation}
Combining \eqref{eq:complex_param} with \eqref{eq:complex_sym} we arrive at
\begin{equation}
\lambda_k\e^{\frac{2\pi\I}{m}}\cdot\e^{(\sigma_k kt+\psi_k)\I}=\lambda_k\e^{(\sigma_k k(t+\frac{2\pi d}{m})+\psi_k)\I}
\end{equation}
which yields
\begin{equation}\label{eq:symmetry_cyclic_curves}
  \frac{2\pi}{m}(1-\sigma_k\cdot k\cdot d)\equiv 0 \mod 2\pi.
\end{equation}
This forces some $\f p_k$'s to disappear. In particular the only non-zero frequencies occur when $\sigma_k\cdot k\cdot d -1$ is divisible by $m$. Let us formulate this as a lemma.

\begin{lemma}\label{lem:rot_symmetry}
   Let $\f p(t)$ be a primitive trigonometric curve. Then it possesses an $m$-fold rotational  symmetry ($m>2$) with the center $\f c$ if and only if $\f a_0=\f c$ and there exists an integer $d$ such that $\f p(t)$ can be written in the form~\eqref{eq:complex_param}, where the only non-zero terms $\lambda_k$ are those fulfilling the condition $\sigma_k\cdot k\cdot d = j\cdot m +1$ for some $j\in\Z$.
\end{lemma}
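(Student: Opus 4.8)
The plan is to prove Lemma~\ref{lem:rot_symmetry} by chaining together the reductions already established in the excerpt. First I would note that the existence of an $m$-fold rotational symmetry forces, by Lemma~\ref{lem:center}, the center $\f c$ to equal $\f a_0$; so after translating we may assume $\f a_0 = \f o$ and $\f c = \f o$. Next, since $m > 2$ and no genuine ellipse admits an $m$-fold rotational symmetry for $m>2$, each component ellipse $\f p_k(t)$ must itself be a circle: the rotation $\rho_{2\pi/m}$ is linear, so $\rho_{2\pi/m}(\f p(t)) = \sum_k \rho_{2\pi/m}(\f p_k(t))$, and by the uniqueness of the trigonometric parameterization (Theorem~\ref{thm:schicho}) together with Lemma~\ref{lem:rot_sym} we get $\rho_{2\pi/m}(\f p(t)) = \f p(t + \tfrac{2\pi d}{m})$ for some $d\in\Z$; comparing the $k$-th trigonometric homogeneous components (frequencies $\cos(kt),\sin(kt)$ are linearly independent across distinct $k$) yields $\rho_{2\pi/m}(\f p_k(t)) = \f p_k(t + \tfrac{2\pi d}{m})$ for each $k$. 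An ellipse satisfying such a nontrivial rotational relation for $m>2$ must be a circle (traced possibly several times), which is exactly the content of putting $\f p(t)$ in the complex form~\eqref{eq:complex_param}.

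Second, I would carry out the computation already displayed just before the lemma: writing $\f p_k(t) = \lambda_k \e^{(\sigma_k k t + \psi_k)\I}$, the per-component symmetry condition~\eqref{eq:complex_sym} becomes
\begin{equation}
\lambda_k\,\e^{\frac{2\pi\I}{m}}\,\e^{(\sigma_k k t + \psi_k)\I} = \lambda_k\,\e^{(\sigma_k k(t + \frac{2\pi d}{m}) + \psi_k)\I},
\end{equation}
so that whenever $\lambda_k \neq 0$ we need $\e^{\frac{2\pi\I}{m}} = \e^{\frac{2\pi\I \sigma_k k d}{m}}$, i.e.
\begin{equation}
\frac{2\pi}{m}\bigl(1 - \sigma_k k d\bigr) \equiv 0 \pmod{2\pi},
\end{equation}
which is equivalent to $\sigma_k k d \equiv 1 \pmod m$, that is, $\sigma_k k d = jm + 1$ for some $j\in\Z$. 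Conversely, if $\f p(t)$ is given in the form~\eqref{eq:complex_param} with all nonzero $\lambda_k$ satisfying this divisibility condition for a common integer $d$, then reversing the computation shows $\rho_{2\pi/m}(\f p_k(t)) = \f p_k(t + \tfrac{2\pi d}{m})$ for every $k$, hence summing gives $\rho_{2\pi/m}(\f p(t)) = \f p(t + \tfrac{2\pi d}{m})$, and by Lemma~\ref{lem:rot_sym} this means $\rho_{2\pi/m}$ is a symmetry. Finally the translation argument of Lemma~\ref{lem:center} reinstates an arbitrary center $\f c = \f a_0$.

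The main obstacle I anticipate is the rigorous justification of the step ``an ellipse with an $m$-fold rotational relation ($m>2$) is a circle,'' and more precisely the passage from the symmetry of the whole curve to the componentwise relation~\eqref{eq:complex_sym}. The cleanest route is: apply Lemma~\ref{lem:rot_sym} to get $\rho_{2\pi/m}(\f p(t)) = \f p(t + \tfrac{2\pi d}{m})$; expand both sides in the Fourier basis $\{\cos(kt),\sin(kt)\}_{k}$; since a rotation is linear it preserves the frequency $k$ of each summand, and the reparameterization $t \mapsto t + \tfrac{2\pi d}{m}$ also sends frequency-$k$ terms to frequency-$k$ terms, so matching frequencies term by term gives $\rho_{2\pi/m}(\f p_k(t)) = \f p_k(t + \tfrac{2\pi d}{m})$. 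Once this is in hand, an ellipse $\f p_k$ invariant (up to a traced-over rotation of its parameter) under a rotation by an angle that is not a multiple of $\pi$ must be centered at $\f o$ and have equal semi-axes, i.e.\ be a circle; then writing it in the exponential form~\eqref{eq:circle_cond2}–\eqref{eq:complex_param} is immediate and the number-theoretic bookkeeping above finishes the proof. A small point to handle with care is primitivity: it guarantees that the index set of nonzero $\lambda_k$ has gcd one, which is what makes the family of constraints $\{\sigma_k k d \equiv 1 \pmod m\}$ pin down $m$ (and the residue of $d$) rather than leaving an ambiguous common factor, but for the biconditional statement itself primitivity is only needed to phrase the normal form unambiguously.
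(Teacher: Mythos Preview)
Your proposal is correct and follows essentially the same route as the paper: the argument preceding the lemma in the text is exactly the chain Lemma~\ref{lem:center} $\Rightarrow$ Lemma~\ref{lem:rot_sym} $\Rightarrow$ componentwise condition~\eqref{eq:complex_sym} $\Rightarrow$ each $\f p_k$ is a circle $\Rightarrow$ the exponential computation yielding~\eqref{eq:symmetry_cyclic_curves}. You have simply made explicit a couple of steps (the frequency-matching argument for passing to components, and the converse direction) that the paper leaves implicit in the discussion before the lemma statement.
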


\begin{remark}\rm
We recall that \eqref{eq:complex_param} are higher cycloid curves, cf. Section~\ref{sec:trigcurves}. They are non-symmetric in general, and Lemma~\ref{lem:rot_symmetry} is a condition for their rotational symmetry.
\end{remark}

\medskip
Based on Lemma~\ref{lem:rot_symmetry} we define \emph{$(m,d)$-sequence} $\vartheta^{m,d}:\N\longrightarrow \mathcal{P}(\{-1,1\})$ associating $\{\pm 1\}$ to $k$ if $m$ divides $\pm kd-1$ and $\emptyset$ otherwise. We can observe that the following properties  hold for all $k\in\N$:
\begin{itemize}
  \item $\vartheta^{m,d}(k)=\emptyset$ whenever $\gcd(m,d)\neq 1$ (in particular $\vartheta^{m,0}(k)=\emptyset$),
  \item $\vartheta^{m,d+m}(k)=\vartheta^{m,d}(k)$,
  \item $\vartheta^{m,-d}(k)=-\vartheta^{m,d}(k)$,
  \item $\vartheta^{m,d}(k)=\vartheta^{m,d}(k+m)$.
\end{itemize}
Especially for $m=1,2$ we arrive at $\vartheta^{1,d}(k)=\{\pm 1\}$, $\vartheta^{2,1}(2k)=\emptyset$ and $\vartheta^{2,1}(2k+1)=\{\pm 1\}$. 

\medskip
Let be given two sequences $\alpha,\beta:\N\longrightarrow\mathcal{P}(\{-1,1\})$. 
We define 
\begin{equation}
  \alpha\preccurlyeq\beta\text{ if and only if for all }k\in\N\text{ it holds }\alpha(k)\subseteq\beta(k).
\end{equation}
Next, we assign to given parameterization \eqref{eq:complex_param} a sequence by the formula 
\begin{equation}
\sigma_\f p(k)=
\begin{cases}
\{\sigma_k\} & \text{if } \lambda_k\neq 0,\\[1ex]
\ \emptyset & \text{otherwise}.
\end{cases}
\end{equation}
The parameterized curve possesses an $m$-fold rotational symmetry if and only if  there exists $d\in\Z$  such that $\sigma_\f p\preccurlyeq\vartheta^{m,d}$. Among these sequences we are looking for the one with the maximal $m$. Such sequence is called the {\em maximal $(m,d)$-sequence} associated to $\f p(t)$.  Let us emphasize that if two trigonometric parameterizations are related by a reparameterization $\f q(t)=\f p(\pm t+\beta)$ then $\sigma_\f q = \pm\sigma_\f p$.

\begin{remark}\rm
Both considered types of sequences $\vartheta^{m,d}$ and $\sigma_{\f p}$ are infinite. Especially, $\vartheta^{m,d}$ is $m$-periodic whose elements repeat over and over. For the sake of simplicity we will write it down by just stating the period, i.e., the listing will end with the element $\vartheta^{m,d}(m)$. As concerns the sequence $\sigma_{\f p}$ then the last non-zero element is $\sigma_{\f p}(N)$, where $N$ is the trigonometric degree of the curve, and all the following elements are~$\emptyset$. Hence our convention is that its listing will end with the element $\sigma_{\f p}(N)$. Then it makes sense to write for instance 
$$
\big(\{1\},\{-1\},\emptyset,\emptyset,\{-1\}\big)\preccurlyeq \big(\{1\},\{-1\},\emptyset,\ldots\big)=\vartheta^{3,1}.
$$
\end{remark}

\begin{example}\rm
Set $m=7$ then there exist $3$ different non-trivial types of $(7,d)$-sequences. Since all of them are $7$-periodic, they are determined by their first 7 terms, in particular
\begin{align*}
   \vartheta^{7,1} &= (\{1\}, \emptyset, \emptyset, \emptyset, \emptyset, \{-1\}, \emptyset,\ldots),\\
   \vartheta^{7,2} &= (\emptyset, \emptyset, \{-1\}, \{1\}, \emptyset, \emptyset, \emptyset,\ldots),\\
   \vartheta^{7,3} &= (\emptyset, \{-1\}, \emptyset, \emptyset, \{1\}, \emptyset, \emptyset,\ldots).
\end{align*}
Some particular instances of curves with these maximal $(m,d)$-sequences are depicted in Fig.~\ref{fig:7fold}. Notice that since $\vartheta^{7,i}=-\vartheta^{7,7-i}$ for $i=1,2,3$, the curves with sequence $\vartheta^{7,i}$ are related to those with sequence $\vartheta^{7,7-i}$ by the~reparameterization $t\mapsto -t$  and thus they describe the same curves.

\begin{figure}
\begin{center}
\includegraphics[width=0.25\textwidth]{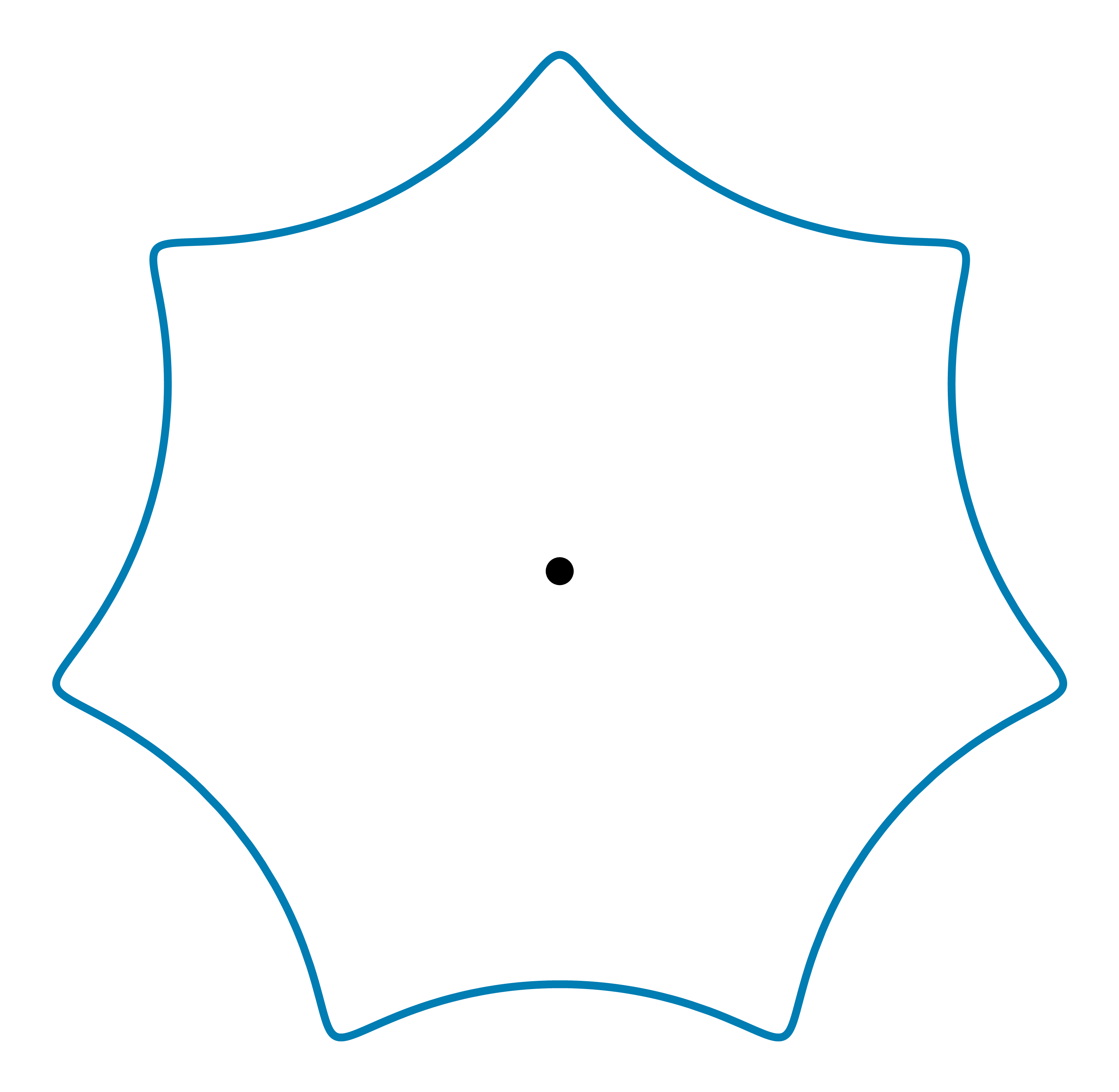}
\hspace{5ex}
\includegraphics[width=0.25\textwidth]{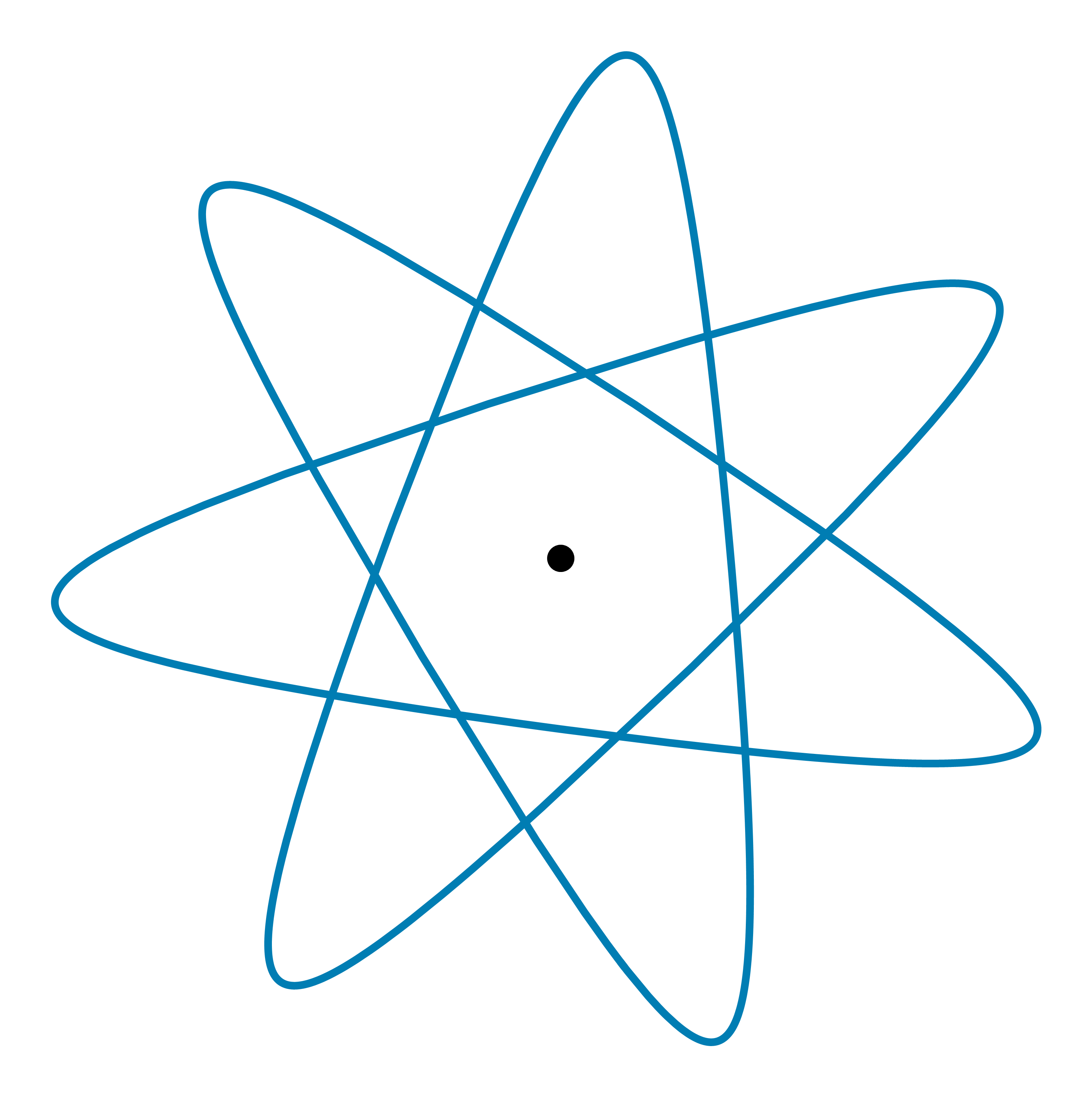}
\hspace{5ex}
\includegraphics[width=0.25\textwidth]{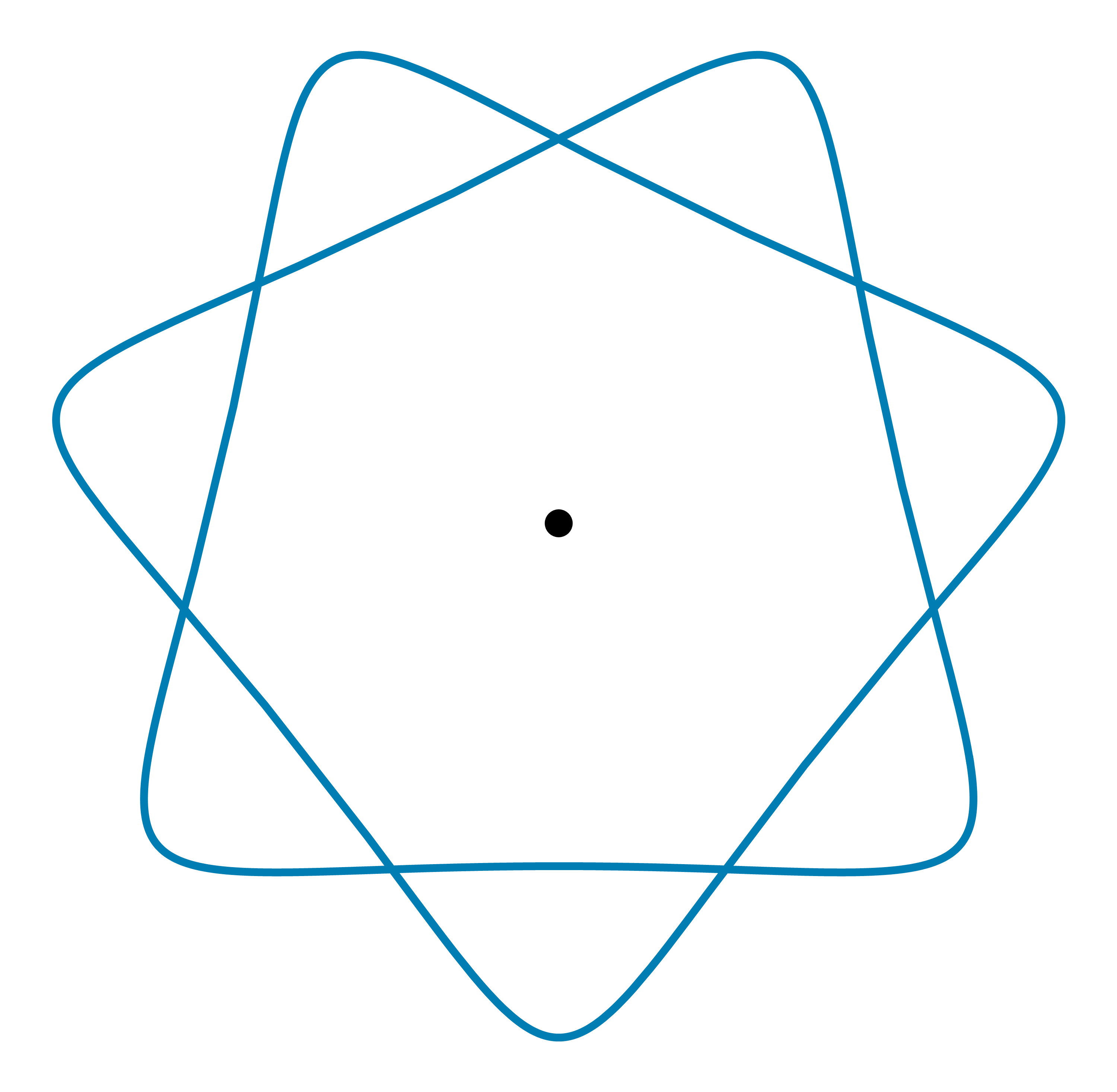}
\caption{Examples of rotationally symmetric curves with maximal sequences (from left to right) $\vartheta^{7,1}$, $\vartheta^{7,2}$ and $\vartheta^{7,3}$.}\label{fig:7fold}
\end{center}
\end{figure}
\end{example}

\begin{remark}\rm
We emphasize the importance of finding the maximal $(m,d)$-sequence associated to the given trigonometric curve. For instance, if the sequence is of the form
$$
\sigma_{\f p}=(\{1\}, \emptyset, \emptyset, \emptyset,\{-1\}),
$$
then it holds $\sigma_{\f p}\preccurlyeq \vartheta^{3,1}=(\{1\}, \{-1\},\emptyset,\ldots)$. However, the maximal sequence which is to be found is $\vartheta^{6,1}=(\{1\}, \emptyset, \emptyset, \emptyset, \{-1\},\emptyset,\ldots)$ and hence the curve possesses rotational symmetries of 6-gon.
\end{remark}

%


Now let us move to reflections. We can formulate a lemma analogous to Lemma~\ref{lem:rot_sym}.

\begin{lemma}\label{lem:refl_sym}
Let $\omega_{L}:\R^2\rightarrow\R^2$ be a reflection across a line $L$. Then it is a symmetry of $\f p(t)$ iff there exists a constant $t_0$ such that
\begin{equation}\label{eq:refl_sym}
\omega_{L}(\f p(t_0+t))=\f p(t_0-t).
\end{equation}
\end{lemma}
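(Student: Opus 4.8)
The plan is to follow the template set by the proof of Lemma~\ref{lem:rot_sym}, only now tracking the effect of an orientation-reversing isometry. For the forward implication I would start from the hypothesis $\omega_L\in\sym(\f p)$ and invoke~\eqref{eq:reparam} (which rests on Theorem~\ref{thm:schicho}): there are $\alpha,\beta\in\R$, $\alpha\neq 0$, with $\omega_L(\f p(t))=\f p(\alpha t+\beta)$. The crux is to pin down $\alpha=-1$. Since $\omega_L$ is an involution, applying it twice gives $\f p(t)=\omega_L(\omega_L(\f p(t)))=\f p(\alpha^2 t+(\alpha+1)\beta)$; comparing the highest occurring frequency $N$ on the two sides of this identity (meaningful because $\f p(t)$ is a trigonometric parameterization with $\f a_N,\f b_N$ not both zero) forces $\alpha^2=1$, so $\alpha=\pm1$. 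The value $\alpha=1$ would make $t\mapsto\alpha t+\beta$ an orientation-preserving reparameterization, which contradicts the fact that $\omega_L$ is an indirect isometry and therefore reverses the sense in which the curve is traced; this is exactly the counterpart of the ``$\alpha$ must equal $1$'' step used for rotations in Lemma~\ref{lem:rot_sym}. Hence $\alpha=-1$ and $\omega_L(\f p(t))=\f p(-t+\beta)$. Setting $t_0=\beta/2$ and substituting $t\mapsto t_0+t$ yields $\omega_L(\f p(t_0+t))=\f p(-(t_0+t)+2t_0)=\f p(t_0-t)$, which is~\eqref{eq:refl_sym}.

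For the converse, I would assume there is a constant $t_0$ with $\omega_L(\f p(t_0+t))=\f p(t_0-t)$ for all $t$. Writing $s=t_0+t$, this becomes $\omega_L(\f p(s))=\f p(2t_0-s)$, so $\omega_L$ sends every point of the image curve $C=\f p(\R)$ to a point of $C$; as $\omega_L$ is a bijection we get $\omega_L(C)=C$, and since a reflection is never the identity, $\omega_L$ is a non-trivial symmetry of $\f p(t)$. This completes both directions.

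The step I expect to be the main obstacle is the one forcing $\alpha=-1$, i.e., converting the orientation-reversing character of $\omega_L$ on $\R^2$ into an orientation-reversing parameter change. Once $\alpha^2=1$ is known, this is the only thing distinguishing a genuine reflection symmetry (which produces~\eqref{eq:refl_sym}) from a mere translation-type reparameterization with $\alpha=1$, and it is precisely the point handled tersely for rotations in Lemma~\ref{lem:rot_sym}; I would phrase it in the same way here, noting as there that it implicitly uses that $\f p$ traces the curve with a well-defined orientation.
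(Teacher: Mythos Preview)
The paper does not actually supply a proof of this lemma; it introduces it with the sentence ``We can formulate a lemma analogous to Lemma~\ref{lem:rot_sym}'' and then moves on. Your argument is precisely the analogue the paper has in mind: invoke~\eqref{eq:reparam} to get $\omega_L(\f p(t))=\f p(\alpha t+\beta)$, force $\alpha=-1$ from the orientation-reversing nature of $\omega_L$ (mirroring the paper's terse ``$\alpha$ must equal $1$'' step for rotations), and then set $t_0=\beta/2$. Your extra involution step, composing $\omega_L$ with itself to obtain $\alpha^2=1$ before invoking orientation, is a clean way to make the reduction to $\alpha=\pm1$ explicit; it is more detailed than, but entirely consistent with, the level of rigor in the paper's proof of Lemma~\ref{lem:rot_sym}. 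The converse direction you give is straightforward and correct.
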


By Lemma~\ref{lem:point on L} we already know one point on the axis $L$.    Hence it remains to determine its direction, or equivalently we need to find the point $\f p(t_0)$. Assume first that  $t_0=0$ and the axis $L$ is identical to the $x$-axis, then by Lemma~\ref{lem:refl_sym} we can write
\begin{equation}
  \omega_L(\f p_k(t))=\f p_k(-t)\quad \mbox{for all } k.
\end{equation}
Since $\f p_k(t)=\f a_k \cos(k t)+\f b_k\sin(k t)$ and $\omega_L:[x,y]\mapsto[ x,-y]$ we arrive at the conditions $\f a_k=(a_{k1},0)$ and $\f b_k=(0,b_{k2})$.  Geometrically, this says that all the ellipses  $\f p_k(t)$ are equally aligned with respect to their axes and in addition they are coherently parameterized, which means that $\f p_k(0)$ is for all of them a vertex lying on the symmetry axis. Let us summarize this in a lemma

\begin{lemma}\label{lem:ellipse_syzygy}
 Let $\f p(t)$ be a primitive trigonometric curve.  Then it possesses a reflectional symmetry with axis $L$ if and only if the line $L$ passes through the point $\f a_0$ and there exists $t_0\in[0,2\pi)$ such that for all $k$ the point $\f p_k(t_0)$ is the vertex of the ellipse and if $\f p_k(t_0)-\f a_0\neq\f o$ then it is the direction of $L$. 
\end{lemma}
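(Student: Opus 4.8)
The plan is to reduce the general statement to the normalized situation already analyzed in the text (axis $L$ equal to the $x$-axis, $t_0=0$) by a suitable isometry and a linear reparameterization, and then to read off the claimed conditions. First I would invoke Lemma~\ref{lem:point on L} to fix that $\f a_0\in L$, so without loss of generality (translating by $-\f a_0$) we may take $\f a_0=\f o$, whence $\f p(t)=\sum_{k=1}^N\f p_k(t)$ with $\f p_k(t)=\f a_k\cos(kt)+\f b_k\sin(kt)$. Then I would observe that by Lemma~\ref{lem:refl_sym} the curve has a reflectional symmetry across a line $L$ through $\f o$ if and only if there is a constant $t_0$ with $\omega_L(\f p(t_0+t))=\f p(t_0-t)$; composing with a rotation $\rho$ taking $L$ to the $x$-axis (so that $\rho\,\omega_L\,\rho^{-1}=\omega_x$, the reflection $[x,y]\mapsto[x,-y]$) and setting $\f q(t)=\rho(\f p(t_0+t))$, which is again a trigonometric parameterization of the rotated curve with the same $\f a_0=\f o$, this is equivalent to $\omega_x(\f q(t))=\f q(-t)$.

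Next I would apply the computation already carried out in the paragraph preceding the lemma: writing $\f q(t)=\sum_k\f q_k(t)$ with $\f q_k(t)=\tilde{\f a}_k\cos(kt)+\tilde{\f b}_k\sin(kt)$, the identity $\omega_x(\f q(t))=\f q(-t)$ holds for the whole curve if and only if it holds termwise, i.e. $\omega_x(\f q_k(t))=\f q_k(-t)$ for all $k$ — this termwise reduction is justified because $\cos(kt)$, $\sin(kt)$, $k=1,\dots,N$ are linearly independent functions and $\omega_x$ is linear, so matching Fourier modes on both sides forces the equality mode by mode. This in turn is equivalent to $\tilde{\f a}_k=(\tilde a_{k1},0)$ and $\tilde{\f b}_k=(0,\tilde b_{k2})$, which geometrically says exactly that each ellipse $\f q_k(t)$ has its axes along the coordinate axes and that $\f q_k(0)=(\tilde a_{k1},0)$ is a vertex of that ellipse lying on the $x$-axis. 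Unwinding the reparameterization and the rotation, the point $\f q_k(0)$ corresponds to $\rho(\f p_k(t_0))$, so the condition becomes: $\f p_k(t_0)$ is a vertex of the ellipse $\f p_k$ for every $k$, and whenever $\f p_k(t_0)-\f a_0\neq\f o$ the vector $\f p_k(t_0)-\f a_0$ points along $\rho^{-1}$ of the $x$-axis, that is, along $L$. Finally I would note that since the curve is primitive, at least one $\f p_k$ is a genuine non-degenerate ellipse (not a single point), so there is at least one $k$ with $\f p_k(t_0)-\f a_0\neq\f o$ and the direction of $L$ is indeed pinned down; this gives the converse direction as well, since from the stated data one reconstructs $t_0$ and the axis direction and then the termwise conditions yield $\omega_L(\f p(t_0+t))=\f p(t_0-t)$, hence a symmetry by Lemma~\ref{lem:refl_sym}.

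The main obstacle, and the point deserving the most care, is the claim that a vertex of each individual ellipse $\f p_k$ lies on $L$ simultaneously and that the $t_0$ realizing this is the same for all $k$ — i.e. the ``syzygy'' of the ellipses. This is not automatic from each ellipse merely being symmetric about $L$: an ellipse centered on $L$ and symmetric about $L$ has such vertices at two opposite parameter values, and one must check that the shared reparameterization $t\mapsto t_0+t$ forces the \emph{same} vertex-parameter $t_0$ to work across all $k$ (up to the ambiguity that $\f p_k(t_0)$ and $\f p_k(t_0+\pi/k)$ or the antipodal vertex may both lie on $L$). I expect this to come out cleanly from the termwise conditions $\tilde{\f a}_k=(\tilde a_{k1},0)$, $\tilde{\f b}_k=(0,\tilde b_{k2})$ in the normalized frame — these are precisely the algebraic encoding of ``equally aligned and coherently parameterized'' — but writing the equivalence carefully in both directions, and correctly handling the degenerate cases where some $\f p_k$ is a circle (many vertices) or a point (no constraint on direction), is where the real work lies.
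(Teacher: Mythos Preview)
Your approach is essentially the same as the paper's: the paper does not give a formal proof after the lemma but instead presents the argument in the paragraph immediately preceding it, normalizing to $t_0=0$ and $L$ the $x$-axis, applying Lemma~\ref{lem:refl_sym} termwise, and reading off the conditions $\f a_k=(a_{k1},0)$, $\f b_k=(0,b_{k2})$, then interpreting this geometrically as ``equally aligned and coherently parameterized''. Your plan is a more explicit and careful version of exactly this reduction, with the added virtues of justifying the termwise matching via linear independence of the trigonometric basis, spelling out the rotation/reparameterization that effects the normalization, and addressing the converse direction and degenerate cases --- all of which the paper leaves implicit.
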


\begin{figure}[t]
\begin{center}
\begin{overpic}[height=0.3\textwidth]{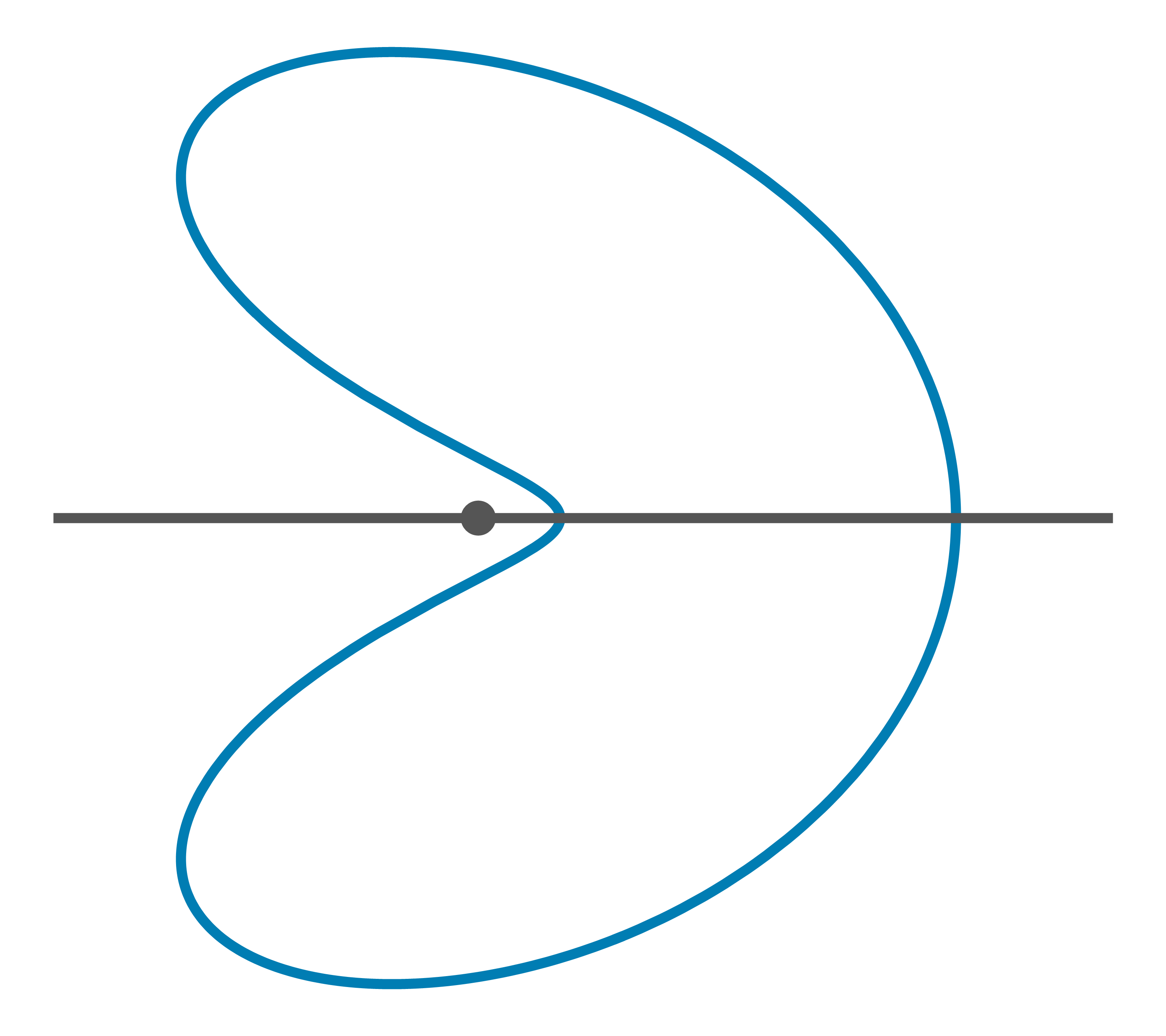}
\put(35,40){$\f o$}
\put(90,38){$L$}
\put(65,10){$\f p$}
\end{overpic}
\hspace{5ex}
\begin{overpic}[height=0.3\textwidth]{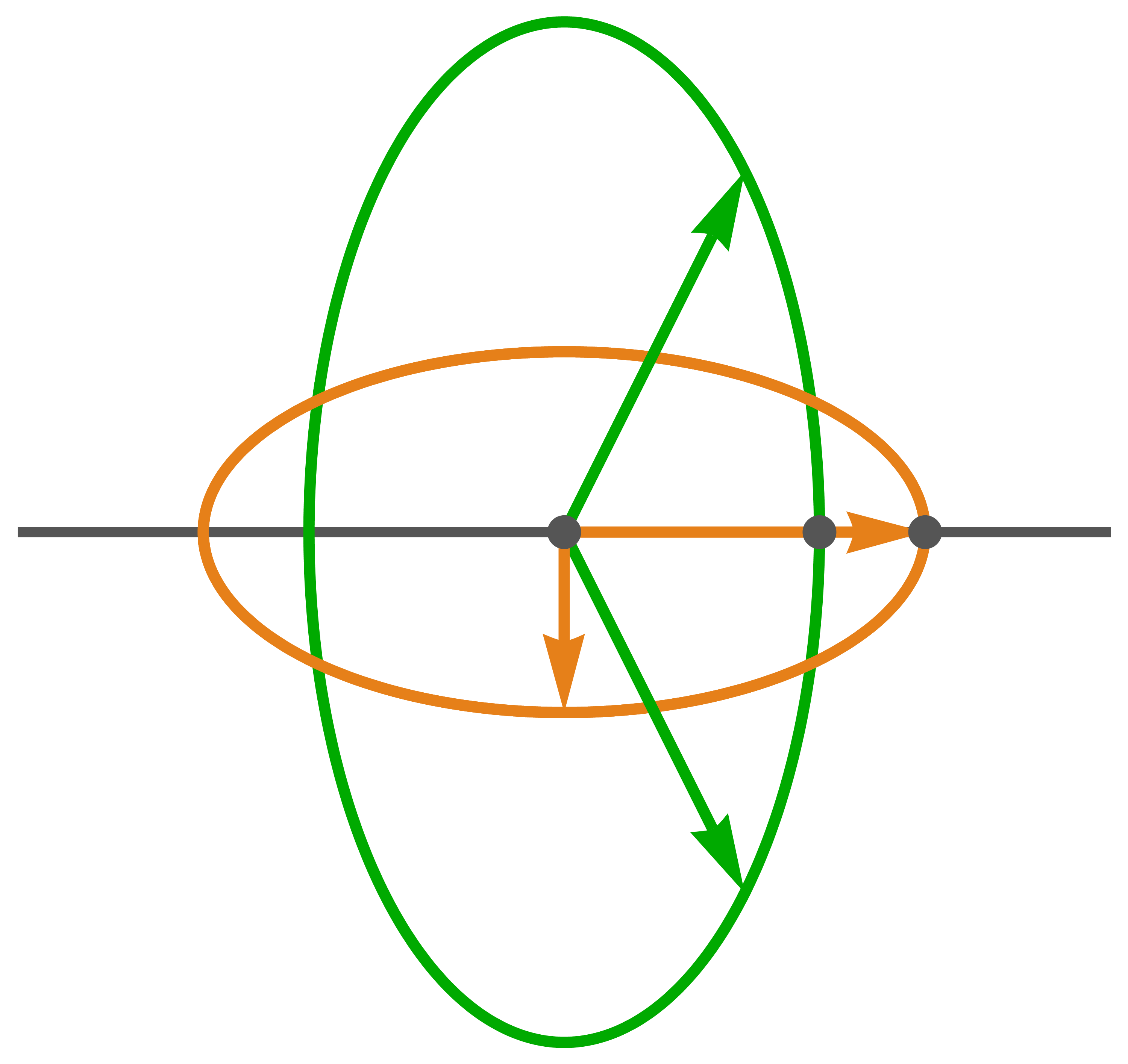}
\put(45,42){$\f o$}
\put(95,40){$L$}
\put(25,15){$\f p_1$}
\put(12,38){$\f p_2$}
\put(52,25){$\f a_1$}
\put(52,70){$\f b_1$}
\put(55,50){$\f b_2$}
\put(40,35){$\f a_2$}
\put(83,50){$\f p_2(t_0)$}
\put(54,41){$\f p_1(t_0)$}
\end{overpic}
\caption{Left: Symmetric trigonometric curve $\f p$ w.r.t. axis $L: y=0$. Right: Its decomposition into two ellipses $\f p_1$ and $\f p_2$. The conjugated half-diameters $\f a_1$, $\f b_1$ and $\f a_2$, $\f b_2$ of $\f p_1$ and $\f p_2$, respectively, together with the vertices $\f p_1(t_0)$ and $\f p_2(t_0)$ determining the axis of symmetry are shown.}\label{fig:axis_sym}
\end{center}
\end{figure}

\begin{remark}\rm
Recall that the {\em Rytz axis construction} is an elementary  method from descriptive geometry that can be used to find the axes and the vertices of an ellipse (in our case $\f p_k$) starting from its two conjugated half-diameters (in our case determined by $\f a_k,\f b_k$). Based on this construction the parameter value corresponding to the vertex is computed via the formula
\begin{equation}\label{eq:t_k0}
  t_{k,0}  = \frac{1}{2k}\mathrm{arccot}\left(\frac{\f a_k\cdot\f a_k-\f b_k\cdot \f b_k}{2\f a_k\cdot \f b_k}\right).
\end{equation}
The remaining vertices of $\f p_k(t)$ are obtained for the parameter values $\{t_{k,0}+\frac{\pi}{2k}j\}$, $j=1,2,3$.  If there exists non-empty intersection of these sets for all non-zero $\f p_k$'s then this value is the sought $t_0$. 
\end{remark}

\begin{example}\rm
Consider a trigonometric curve given by the following parameterization 
$$
\f p(t)=[\sin (t)+2 \sin (2 t)+\cos (t),2 \sin (t)-2 \cos (t)-\cos (2 t)].
$$
It is a sum of two ellipses
$$
\f p_1(t) =  [1,-2] \cos (t) + [1,2] \sin (t) \quad \mathrm{and} \quad \f p_2(t)=   [0,-1] \cos (2 t) + [2,0] \sin (2 t).
$$
From \eqref{eq:t_k0}, all vertices of $\f p_1(t)$ occur at the following values of the parameter $\left\{\frac{\pi }{4},\frac{3 \pi }{4},\frac{5 \pi }{4},\frac{7 \pi }{4}\right\}$, whereas the vertices of $\f p_2(t)$ correspond to $\left\{0,\frac{\pi }{4},\frac{\pi }{2},\frac{3 \pi }{4}\right\}$. Hence $t_0 = \frac{\pi }{4}$  yields the sough-after axis of symmetry, see Fig.~\ref{fig:axis_sym}. Note, that $t_0 = \frac{3\pi }{4}$ corresponds to the same line (axis of symmetry) but with the opposite directional vector.
\end{example}

Of course, when an ellipse becomes a circle then any line through the center is its axis and we have infinitely many candidates for the vertex.  Hence if the parameterization $\f p(t)$ consists of some circles and at least one  ellipse then it is most reasonable to ignore the circles, find the tentative axis  and the parameter $t_0$ for this simplified parameterization and finally test whether the found symmetry is the symmetry of the original curve or not.

\medskip
The above discussed simplification does not help when all the $\f p_k$'s are circles. In this case we must determine the parameter values $t_0$  such that all $\f p_k(t_0)$ are linearly dependent. Motivated by the notion from astronomy, we will call this straight-line configuration of three or more points a {\em syzygy}, see Fig.~\ref{fig:syzygy}.

\begin{figure}[t]
\begin{center}
\begin{overpic}[height=0.35\textwidth]{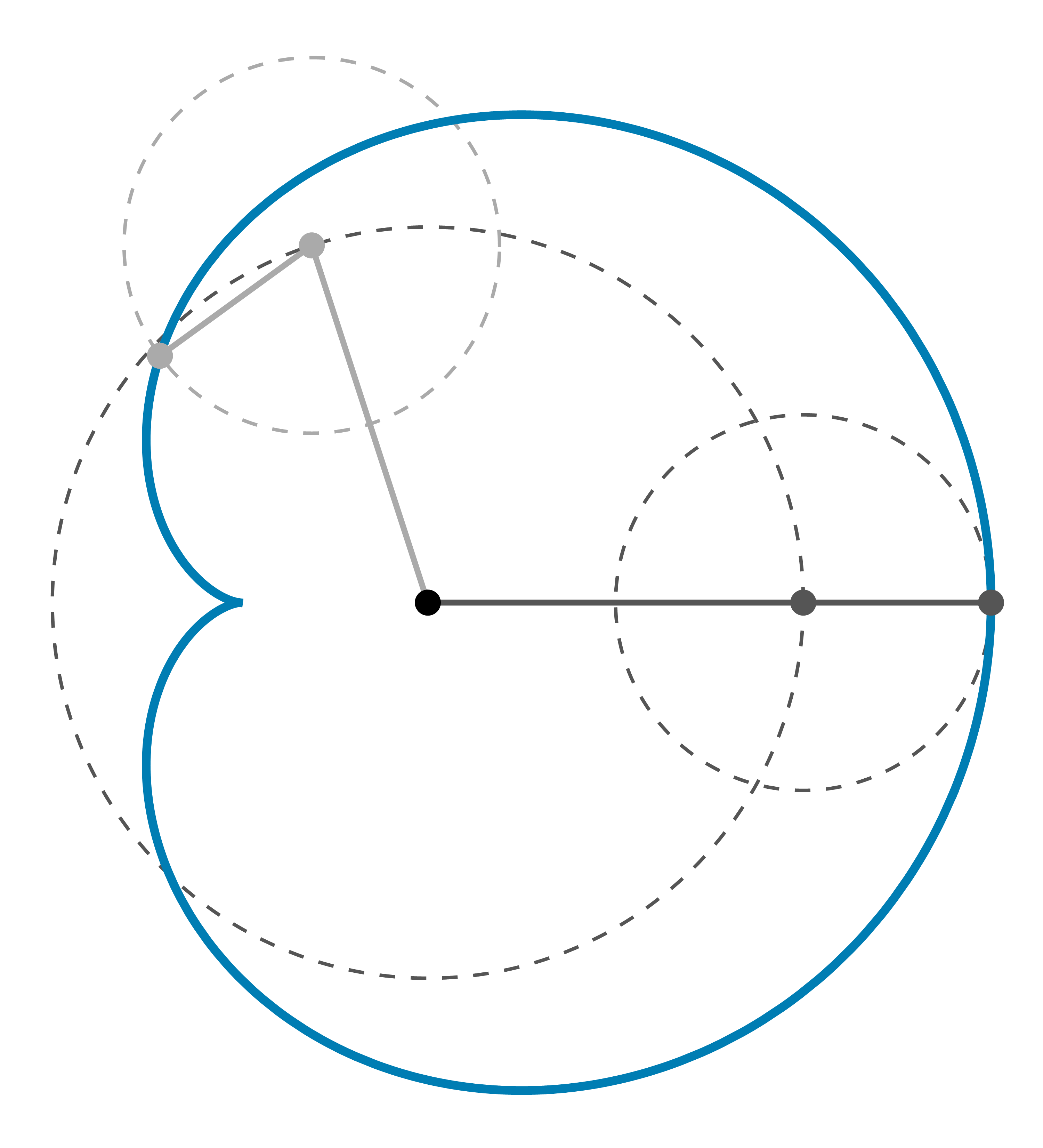}
\put(88,45){$\f p(t_0^1)$}
\put(50,13){$\f p_1(t)$}
\put(42,66){$\f p_2(t)+\f p_1(t_0^1)$}
\put(35,42){$\f o$}
\put(72,10){$\f p(t)$}
\put(90,70){\fcolorbox{gray}{white}{\includegraphics[width=0.1\textwidth]{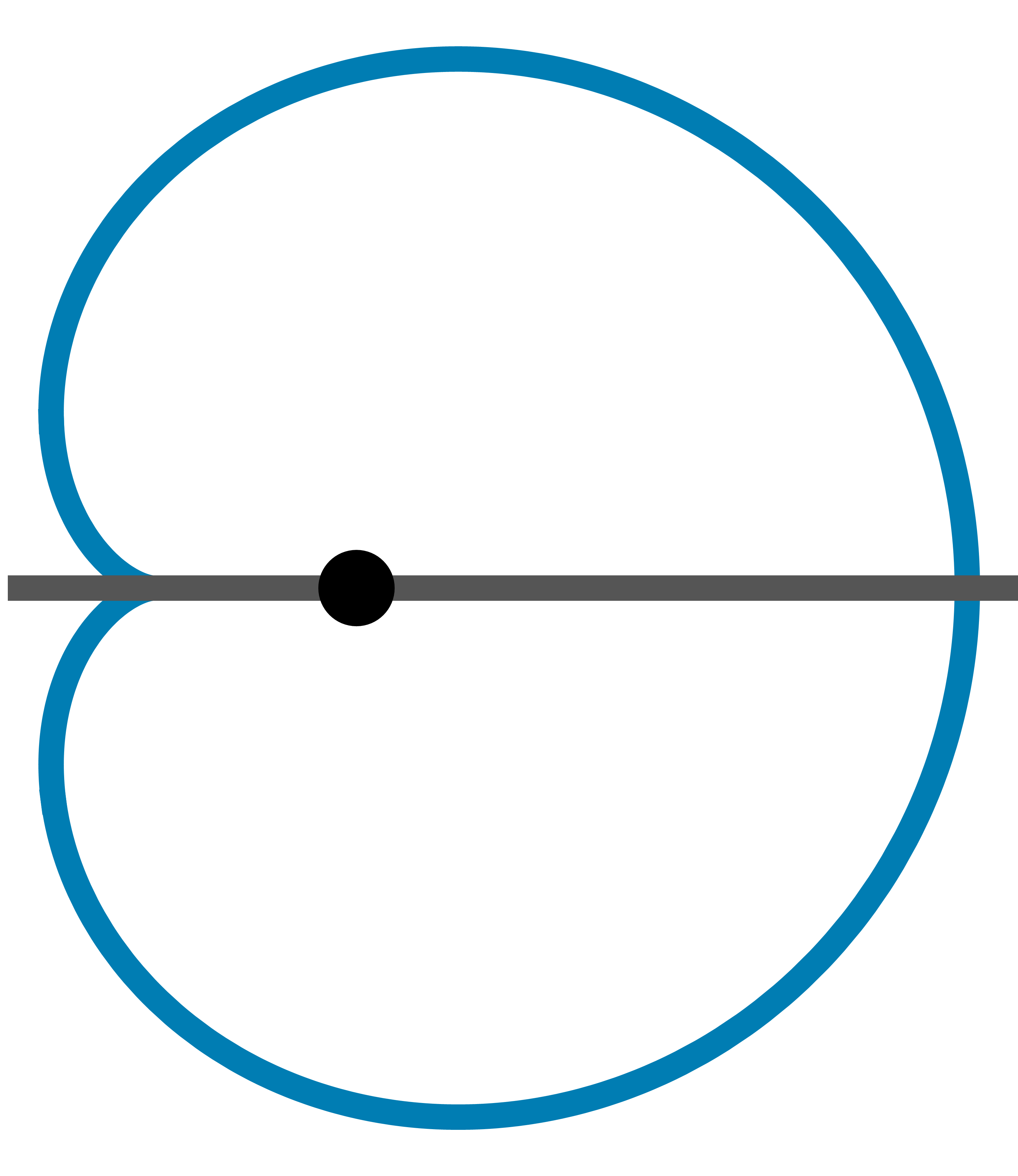}}}
\end{overpic}
\hspace{15ex}
\begin{overpic}[height=0.35\textwidth]{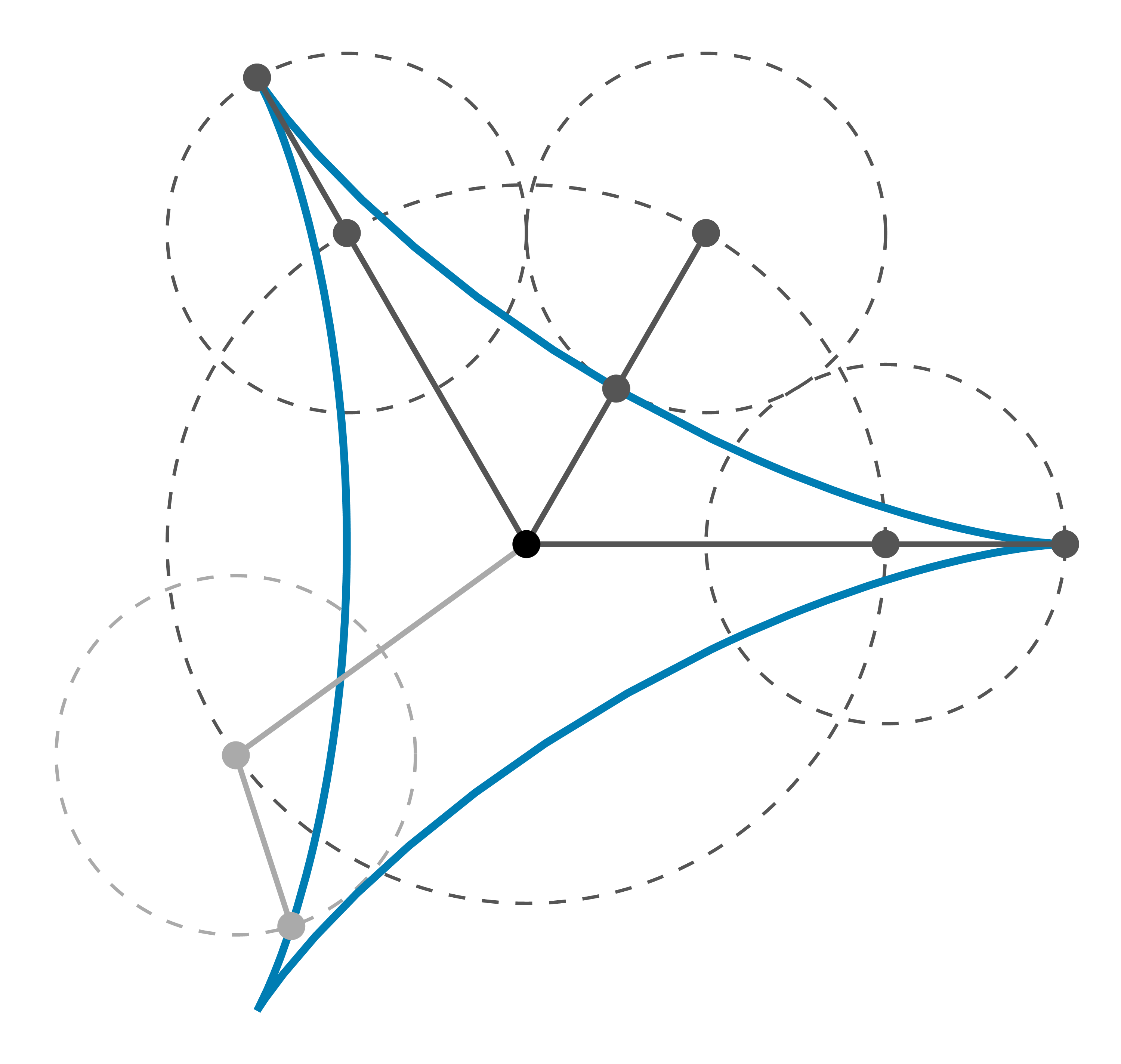}
\put(90,50){$\f p(t_0^1)$}
\put(58,62){$\f p(t_0^2)$}
\put(15,92){$\f p(t_0^3)$}
\put(60,13){$\f p_1(t)$}
\put(75,25){$\f p_2(t)+\f p_1(t_0^1)$}
\put(50,92){$\f p_2(t)+\f p_1(t_0^2)$}
\put(-10,55){$\f p_2(t)+\f p_1(t_0^3)$}
\put(45,41){$\f o$}
\put(50,25){$\f p(t)$}
\put(90,70){\fcolorbox{gray}{white}{\includegraphics[width=0.1\textwidth]{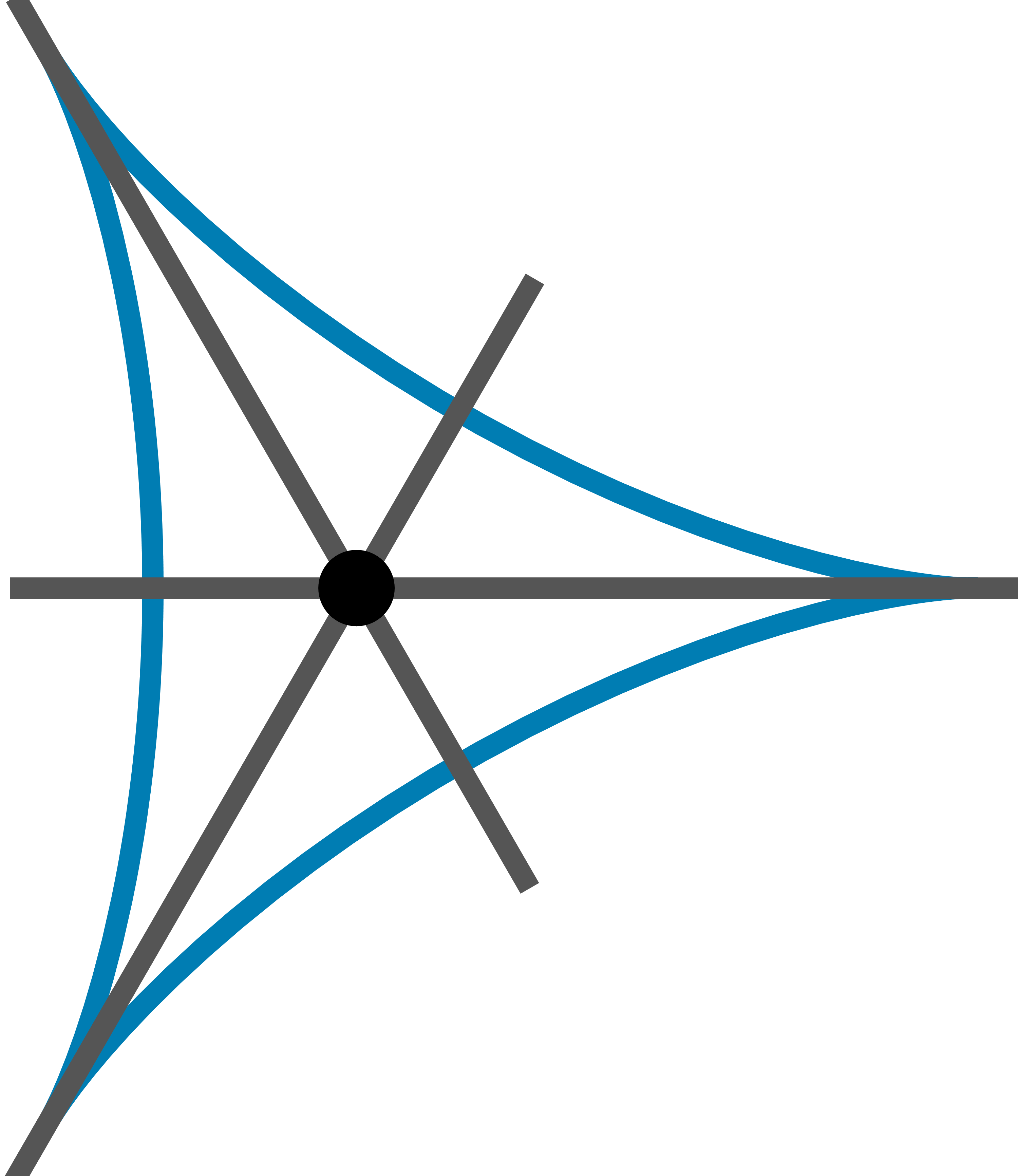}}}
\end{overpic}
\caption{The syzygy of the points on the~circles $\f p_k$ corresponds to the directions of the axes (framed) of symmetry. A generic configuration (non-syzygy) is also shown in light gray. Left: A cycloidal curve (cardioid) with one symmetry axis. Right: A cuspidal curve (deltoid) with three axes of symmetry.}\label{fig:syzygy}
\end{center}
\end{figure}

Consider the simplest example $\f p(t) = \lambda_{k_1}\e^{\I(\sigma_{k_1}k_1t+\psi_{k_1})}+\lambda_{k_2}\e^{\I(\sigma_{k_2}k_2t+\psi_{k_2})}$, where $\gcd(k_1,k_2)=1$. Then the condition that $\f p_{k_1}(t)$ and $\f p_{k_2}(t)$ are linearly dependent can be rewritten as
\begin{equation}
    \sigma_{k_1}k_1t+\psi_{k_1} =   \sigma_{k_2}k_2t+\psi_{k_2}+ j\pi
\end{equation}
for some $j\in\Z$, i.e., we  arrive at
\begin{equation}\label{eq:volba_tecek}
  t_0= \frac{\psi_{k_2}-\psi_{k_1}}{\sigma_{k_1}k_1-\sigma_{k_2}k_2}+j\frac{\pi}{\sigma_{k_1}k_1-\sigma_{k_2}k_2}.
\end{equation}
There exist $\sigma_{k_1}k_1-\sigma_{k_2}k_2$ possibilities for choosing $t_0$ providing different axis, see Fig.~\ref{fig:syzygy}  , left. If $\f p(t)$ contains more than two circles then the possible values of $t_0$ must be found as the common values arising from all the~pairs  of the~circles $\f p_k(t)$.

\begin{example}\rm
Consider a deltoid (a hypocycloid known also as a tricuspoid curve) given by the trigonometric parameterization 
$$
\f p(t)=[2\cos t + \cos(2t),2\sin t - \sin(2t)],\qquad \text{or}\qquad
\f p(t)=2\e^{t\I}+\e^{-2t\I}.
$$
It follows from \eqref{eq:volba_tecek} that we have three possible choices for $t_0$, in particular 
$$
t_0=\frac{\pi}{3}j,\quad j=0,1,2,
$$
and thus there exist three potential syzygy configurations, see Fig.~\ref{fig:syzygy}, right. The cycloidal curve (cardioid) possessing one axis of symmetry is shown in see Fig.~\ref{fig:syzygy}, left.
\end{example}

\begin{remark}\rm
We can return to the situations where $\f p_k$'s are ellipses and also in this case we will refer to the configuration guaranteeing the existence of the symmetry axis (given by a suitable value $t_0$, cf. Lemma~\ref{lem:ellipse_syzygy}) as a syzygy. We recall that for ellipses in this configuration not only all $\f p_k(t_0)$ are linearly dependent but in addition they are vertices for all $\f p_k$'s which are ellipses.
\end{remark}

\medskip
Finally, let us summarize the results of this section in a decision tree, see Diagram~\ref{fig:symmetry_groups}.

\tikzstyle{intt}=[draw,text centered,minimum size=6em,text width=5.25cm]
\tikzstyle{intl}=[draw,text centered,text width=4cm,text height=0.34cm]
\tikzstyle{int}=[draw,minimum size=2.5em,text centered,text width=4cm]
\tikzstyle{intg}=[draw,shape=rectangle,minimum size=3em,text centered,text width=5cm]
\tikzstyle{intb}=[draw,shape=rectangle,rounded corners=1.5ex,minimum size=3em,text centered,text width=4cm]
\tikzstyle{sum}=[draw,shape=circle,text centered]
\tikzstyle{summ}=[drawshape=circle,inner sep=4pt,text centered,node distance=3.cm]

{\renewcommand\figurename{Diagram}
\begin{figure}[!htb]
\centering
\begin{tikzpicture}[
 >=latex',
 auto
 ]
 \node [intg] (kp)  {$\exists k$ such that $\f p_k$ is an ellipse?};
 \node [int]  (ki1) [node distance=1.2cm and -0.2cm,below left=of kp,xshift=2mm] {Do all  $\f p_{2k}$'s vanish?};
 \node [intb]  (ki2) [node distance=1.2cm and -0.2cm,below right=of kp,xshift=-2mm] {Find the maximal $(m,d)$-sequence.};
 \node [intl]  (ki11)[node distance=3cm and 1cm,below left of=ki1,xshift=-2mm] {$\exists$ a syzygy configuration?};
 \node [intl]  (ki12)[node distance=3cm and 1cm,below right of=ki1,xshift=2mm] {$\exists$ a syzygy configuration?};
 \node [intl]  (ki21)[node distance=2.05cm and 0.6cm,below of=ki2] {$\exists$ a syzygy configuration?};
 \node [sum]  (ki111)[node distance=2.8cm and 0cm,below left of=ki11,xshift=1cm] {$D_2$};
 \node [sum]  (ki112)[node distance=2.8cm and 0cm,below right of=ki11,xshift=-1cm] {$C_2$};
 \node [sum]  (ki121)[node distance=2.8cm and 0cm,below left of=ki12,xshift=1cm] {$D_1$};
 \node [sum]  (ki122)[node distance=2.8cm and 0cm,below right of=ki12,xshift=-1cm] {$C_1$};
 \node [sum]  (ki211)[node distance=2.8cm and 0cm,below left of=ki21,xshift=1cm] {$D_m$};
 \node [sum]  (ki212)[node distance=2.8cm and 0cm,below right of=ki21,xshift=-1cm] {$C_m$};

 \draw[->] (kp) -- ($(kp.south)+(0,-0.75)$) -| (ki1) node[above,pos=0.25] {Yes} ;
 \draw[->] (kp) -- ($(kp.south)+(0,-0.75)$) -| (ki2) node[above,pos=0.25] {No};
 \draw[->] (ki1) -- ($(ki1.south)+(0,-0.75)$) -| (ki11) node[above,pos=0.25] {Yes} ;
 \draw[->] (ki1) -- ($(ki1.south)+(0,-0.75)$) -| (ki12) node[above,pos=0.25] {No};
 \draw[->] (ki2) -- ($(ki2.south)+(0,-0.75)$) -| (ki21) node[above,pos=0.25] {} ;
 \draw[->] (ki11) -- ($(ki11.south)+(0,-0.75)$) -| (ki111) node[above,pos=0.25] {Yes} ;
 \draw[->] (ki11) -- ($(ki11.south)+(0,-0.75)$) -| (ki112) node[above,pos=0.25] {No};
 \draw[->] (ki12) -- ($(ki12.south)+(0,-0.75)$) -| (ki121) node[above,pos=0.25] {Yes} ;
 \draw[->] (ki12) -- ($(ki12.south)+(0,-0.75)$) -| (ki122) node[above,pos=0.25] {No};
 \draw[->] (ki21) -- ($(ki21.south)+(0,-0.75)$) -| (ki211) node[above,pos=0.25] {Yes} ;
 \draw[->] (ki21) -- ($(ki21.south)+(0,-0.75)$) -| (ki212) node[above,pos=0.25] {No};
\end{tikzpicture}
\caption{Symmetry groups of curves with trigonometric parameterization \eqref{eq:C_m}.}
    \label{fig:symmetry_groups}
\end{figure}
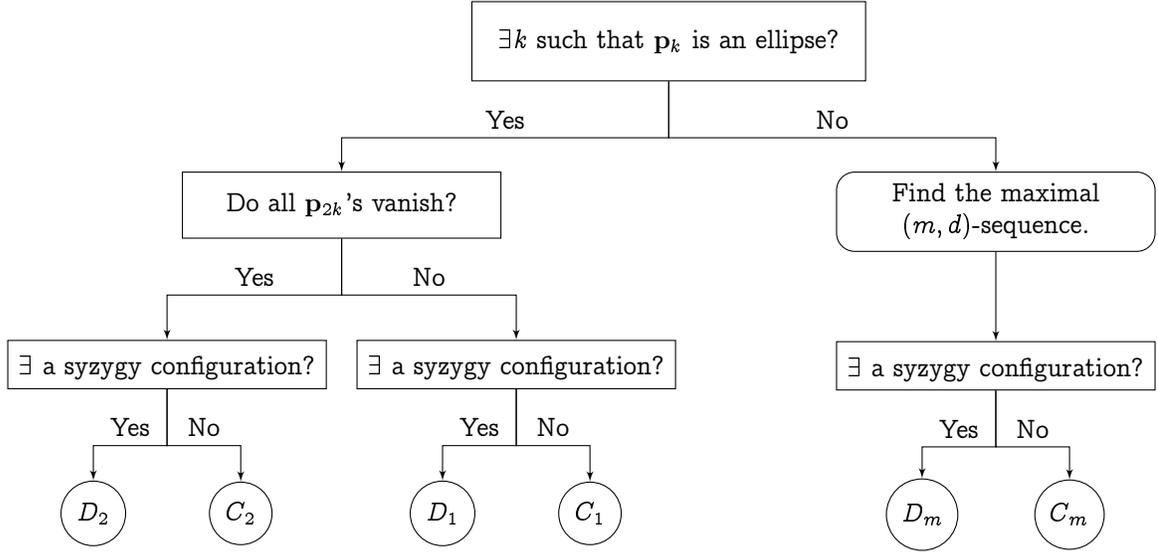
}

\section{Application to the symmetries of discrete curves}\label{sec:sym_polyline}

By a~ \emph{discrete curve} in $\R^2$ we understand a closed polyline defined by the set of its vertices $\f v_0,\f v_1,\ldots,\f v_n=\f v_0$.  If $C$ is the discrete curve then its symmetry $\phi\in\sym(C)$ is the isometry of $\R^2$ such that $\phi(C)=C$ and $\phi(V)=V$, where $V$ is the ordered set of its vertices. Notice that if any three consecutive vertices are not collinear (which can always be ensured by removing the middle vertex) then the latter condition is fulfilled automatically.

In what follows we will replace the discrete curve by a suitable trigonometric curve such that the conditions of Lemma~\ref{lem:komut} are satisfied in order to  detect the symmetries of $C$. A natural candidate is the trigonometric curve interpolating the vertices $\f v_i$. Let us fix the uniform distribution of parameter $t_j=\frac{2\pi}{n}j$, $j=0,\dots,n-1$. Then for  odd $n$ we obtain the unique interpolant $\f p(t)$, see Section~\ref{sec:trigcurves}  and \eqref{eq:interpol_coef_odd}. However when $n$ is even then there exists a one-parametric set of solutions, see again Section~\ref{sec:trigcurves}.  Because of the choice of the uniformly spaced $t_j$, we have $\sin(Nt_j)=0$ and thus $b_N$, where $N=\frac{n}{2}$, is the~free parameter. Following the convention from Section~\ref{sec:trigcurves} we set $b_N=0$ and compute the coefficients of the interpolant $\f p(t)$ via \eqref{eq:interpol_coef_even}. In both cases let $T_C$ denote the curve parameterized by $\f p(t)$. For later use it is essential to prove that the curve parameterized by $\f p(t)$ does not depend on the choice of the first vertex being interpolated and neither on the orientation.

\begin{lemma}\label{lem:TC}
  Let be given a trigonometric curve $C$, then $T_C$ is well defined.
\end{lemma}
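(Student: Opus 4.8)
The plan is to show that the two sources of ambiguity in the construction of $T_C$ — the choice of which vertex is labelled $\f v_0$, and the orientation (i.e. whether we list the vertices as $\f v_0,\f v_1,\ldots$ or in reverse) — only change the interpolating parameterization $\f p(t)$ by a linear parameter change of the form $t\mapsto \pm t+\beta$ with $\beta$ a multiple of $\frac{2\pi}{n}$, and hence leave the underlying point set $T_C$ unchanged. Concretely, if we re-index the vertices as $\f w_j=\f v_{j+s}$ (indices mod $n$) for some fixed shift $s$, then the data pairs become $(t_j,\f w_j)=(t_j,\f v_{j+s})$; since $t_{j+s}=t_j+\frac{2\pi}{n}s$, this is exactly the original data evaluated at shifted nodes. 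Because the trigonometric interpolant through uniformly spaced nodes is unique (in the odd case, and in the even case once we impose $b_N=0$, which is a condition invariant under such shifts since $\sin(N t_j)=0$ regardless), the new interpolant must be $\f q(t)=\f p\!\left(t+\frac{2\pi}{n}s\right)$; this parameterizes the same curve. Reversing orientation sends $(t_j,\f v_j)$ to $(t_j,\f v_{-j})=(t_j,\f v_{n-j})$, i.e. to the original data at nodes $-t_j$, so by the same uniqueness argument the interpolant becomes $\f p(-t)$ (one checks $b_N=0$ is preserved since $\sin(-Nt)=-\sin(Nt)$ and the coefficient is forced to $0$ anyway), again the same point set.

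The cleanest way to run these uniqueness arguments in one stroke is to pass to the discrete Fourier transform. Write $x_j$ for one coordinate of $\f v_j$ and recall that the coefficients $a_k,b_k$ in \eqref{eq:interpol_coef_odd}, \eqref{eq:interpol_coef_even} are precisely (real and imaginary parts of) the DFT of $(x_0,\ldots,x_{n-1})$. Re-indexing $x_j\mapsto x_{j+s}$ multiplies the $k$-th DFT coefficient by $\e^{2\pi\I ks/n}$, which is exactly the effect of the substitution $t\mapsto t+\frac{2\pi s}{n}$ on the $k$-th harmonic $\e^{\I k t}$; reversing $x_j\mapsto x_{-j}$ conjugates/reflects the DFT, matching $t\mapsto -t$. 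So in both cases the new interpolant is $\f p$ composed with a map $t\mapsto \pm t+\beta$, hence traces the same curve. Since this holds simultaneously for both coordinates (the same substitution is applied to each), the vector-valued interpolant transforms the same way, and we conclude $T_C$ is independent of the choices. (In the even case one notes additionally that the extra normalization "$b_N=0$" is stable: the Nyquist coefficient $a_N$ is real for real input regardless of cyclic shift, and is merely reflected by orientation reversal, so the convention is consistent across all re-indexings.)

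I expect the only genuinely delicate point to be the even case $n=2N$, where the interpolant is not determined by the data alone but by the data together with the side condition $b_N=0$: one must check that this side condition is itself invariant under the relabelling and reversal operations, so that "the interpolant satisfying the convention" really is well defined and transforms as claimed. Everything else is the standard uniqueness of trigonometric interpolation at equispaced nodes (equivalently, invertibility of the DFT) combined with the elementary bookkeeping of how cyclic shifts and reversals act on Fourier coefficients; I would state the DFT facts as known (referring to the sources cited in Section~\ref{sec:trig_interpol}) rather than re-derive them. Once $T_C$ is pinned down, Lemma~\ref{lem:TC} is exactly the assertion that the two listed choices do not affect it, which the above covers.
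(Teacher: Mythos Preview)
Your argument is correct and follows essentially the same approach as the paper: show that a cyclic relabelling of vertices amounts to the reparameterization $t\mapsto t+\frac{2\pi s}{n}$, that orientation reversal amounts to $t\mapsto -t$, and that in the even case the side condition $b_N=0$ survives both operations, so uniqueness of the interpolant forces the same curve. Your DFT phrasing is a cosmetic repackaging of the same uniqueness step, and your explicit check of the $b_N=0$ convention under orientation reversal is actually slightly more careful than the paper, which only verifies it for shifts.
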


\begin{proof}
We start with the odd case.  Let $\f p(t)$ be the interpolant corresponding to the ordered vertex set $\f v_0,\f v_1,\ldots$ and let $\f q(t)$ be another interpolant associated to the shifted vertex set $\f v_i,\f v_{i+1},\ldots$. Then  $\f p(t+\frac{2\pi i}{n})$ interpolates the shifted vertex in the sense that $\f p(t_j)=\f v_{i+j}$. Because of the uniqueness of the interpolants we see that $\f q(t)$ is a reparameterization of $\f p(t)$ and thus they define the same curve $T_C$. The argument for the change of the orientation is the same.

For $n$ being even, we have to realize first that because of the uniform distribution of $t$ it holds
$$
\cos\left[N\left(t+\frac{2\pi i}{n}\right)\right]=\cos\left[\frac{n}{2}\left(t+\frac{2\pi i}{n}\right)\right]=
\cos(Nt+\pi i)=(-1)^{i-1}\cos(Nt).
$$
Hence, the type of the trigonometric parameterization (distinguished by the property $b_N=0$) is preserved by this reparameteriation. The next argumentation is analogous to the odd case, i.e., we again obtain the same parameterizations $\f p\left(t+\frac{2\pi i}{n}\right)$ and $\f q(t)$. 
\end{proof}

\begin{corollary}\label{cor:subset}
Let $C$ be a discrete curve then   $\sym(C)\subset\sym(T_C)$.
\end{corollary}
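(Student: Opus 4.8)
The plan is to invoke Lemma~\ref{lem:komut} with the operator $\Psi$ equal to the assignment $C\mapsto T_C$. The corollary then follows immediately once we verify the commutation hypothesis, namely that $T_{\phi(C)}=\phi(T_C)$ for every $\phi\in\iso(\R^2)$. So the real content of the proof is this equivariance statement, and the preceding Lemma~\ref{lem:TC} is exactly what makes it meaningful: it guarantees $T_C$ is independent of the starting vertex and of the chosen orientation, so that $T_{\phi(C)}$ is unambiguously defined.

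The key step is therefore to check that the trigonometric interpolation procedure commutes with isometries. Write $\phi(\f x)=\f A\f x+\f b$ with $\f A\in\OO(2)$. If $\f p(t)$ is the interpolant of the vertices $\f v_0,\dots,\f v_{n-1}$ at the nodes $t_j=\frac{2\pi}{n}j$, then I claim $\f A\f p(t)+\f b$ is the interpolant of the vertices $\phi(\f v_0),\dots,\phi(\f v_{n-1})$ at the same nodes. Indeed, $\f A\f p(t)+\f b$ is again a trigonometric polynomial of trigonometric degree at most $N$ (an affine map sends each harmonic $\f a_k\cos(kt)+\f b_k\sin(kt)$ to $\f A\f a_k\cos(kt)+\f A\f b_k\sin(kt)$, and the constant term absorbs $\f b$), it evaluates at $t_j$ to $\f A\f v_j+\f b=\phi(\f v_j)$, and in the even case $n=2N$ the offending coefficient of $\sin(Nt)$ is still zero since $\f A(\f 0)=\f 0$. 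By the uniqueness of the interpolant of the prescribed type (Section~\ref{sec:trig_interpol}), $\f A\f p(t)+\f b$ must be the interpolant associated to $\phi(C)$, whence $T_{\phi(C)}=\phi(T_C)$ as sets. Applying Lemma~\ref{lem:komut} with $\mathcal{X}$ the set of discrete curves and $\Psi(C)=T_C$ yields $\sym(C)\subset\sym(T_C)$.

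I do not expect a genuine obstacle here; the argument is a short bookkeeping exercise. The one point that needs a word of care is the even case: the normalization $b_N=0$ that singles out the unique interpolant is an affinely invariant condition precisely because isometries are linear on the harmonic $\sin(Nt)$ part and fix the origin coefficient appropriately, which is why the computation in the proof of Lemma~\ref{lem:TC} carries over. A second subtlety worth flagging is that a symmetry $\phi\in\sym(C)$ in the sense of Section~\ref{sec:sym_polyline} must satisfy $\phi(V)=V$ as an ordered (up to cyclic shift and reversal) vertex set; but that is exactly the ambiguity Lemma~\ref{lem:TC} already absorbed into the definition of $T_C$, so $\phi(C)=C$ at the level of discrete curves does give $T_{\phi(C)}=T_C$, and combined with the equivariance $T_{\phi(C)}=\phi(T_C)$ we get $\phi(T_C)=T_C$, i.e. $\phi\in\sym(T_C)$.
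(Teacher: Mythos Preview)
Your proof is correct and follows exactly the paper's approach: the paper's own proof is the single line ``It follows from Lemma~\ref{lem:TC} and Lemma~\ref{lem:komut}.'' You simply spell out in detail the equivariance $T_{\phi(C)}=\phi(T_C)$ (which is the hypothesis of Lemma~\ref{lem:komut}) that the paper leaves implicit, and you correctly note how Lemma~\ref{lem:TC} absorbs the cyclic-shift/reversal ambiguity; the handling of the even case via $\f A(\f 0)=\f 0$ is also right.
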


\begin{proof}
  It follows from  Lemma~\ref{lem:TC} and Lemma~\ref{lem:komut}.
\end{proof}

It can happen that $\sym(T_C)$ is strictly bigger then $\sym(C)$, see Fig.~\ref{fig:narust_a_pokles_symetrie}, left.  In general, both groups are finite, and thus comparable, unless the curve $T_C$ is a straight line or a circle. Since we assume in what follows that the vertices $\f v_i$ are not collinear and the curve $T_C$ interpolates them, this can never be a straight line. 



\begin{figure}[t]
\begin{center}
\hspace{-10ex}
\begin{overpic}[height=0.2\textwidth]{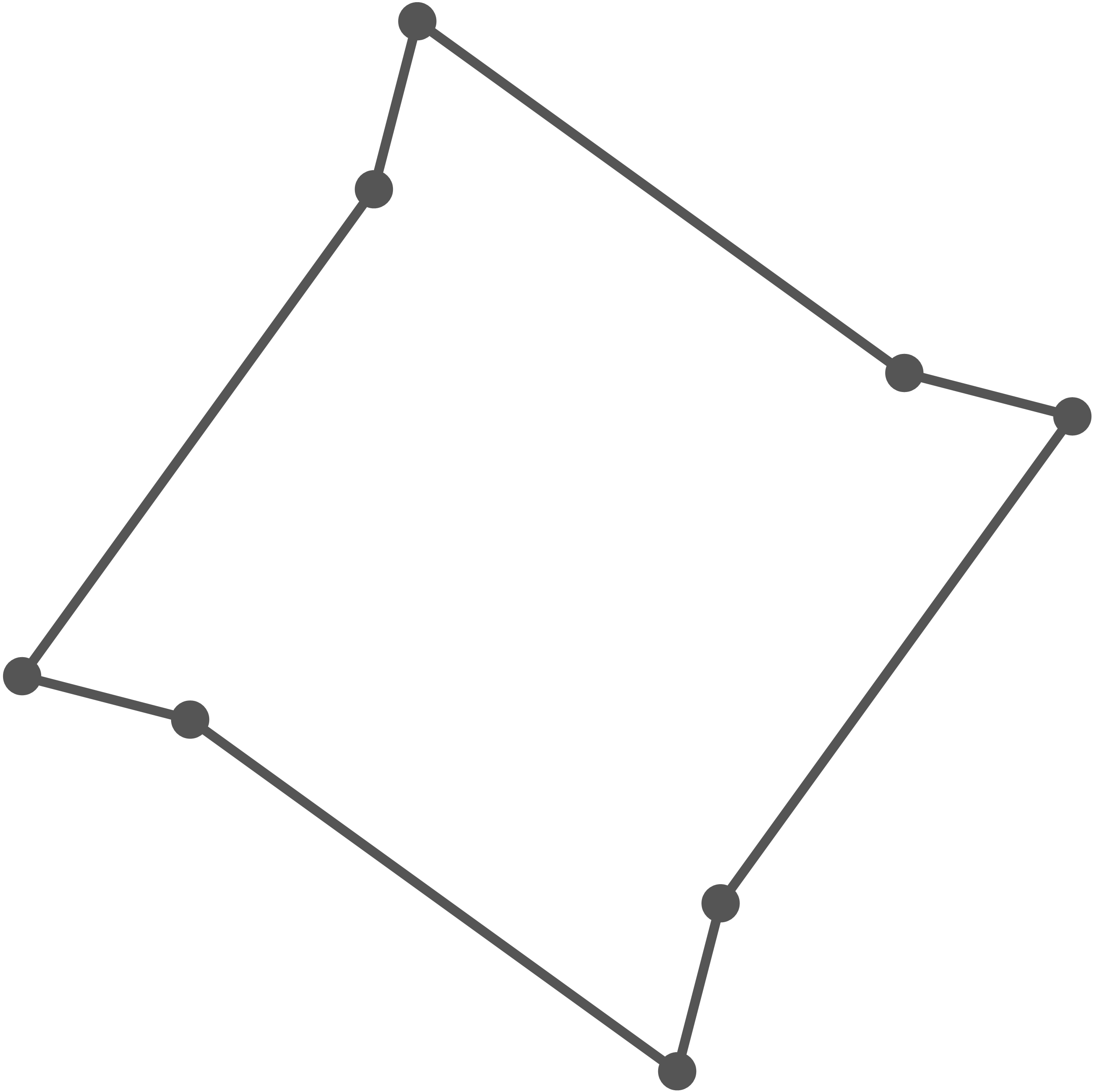}
\put(81,81){\fcolorbox{gray}{white}{\includegraphics[width=0.05\textwidth]{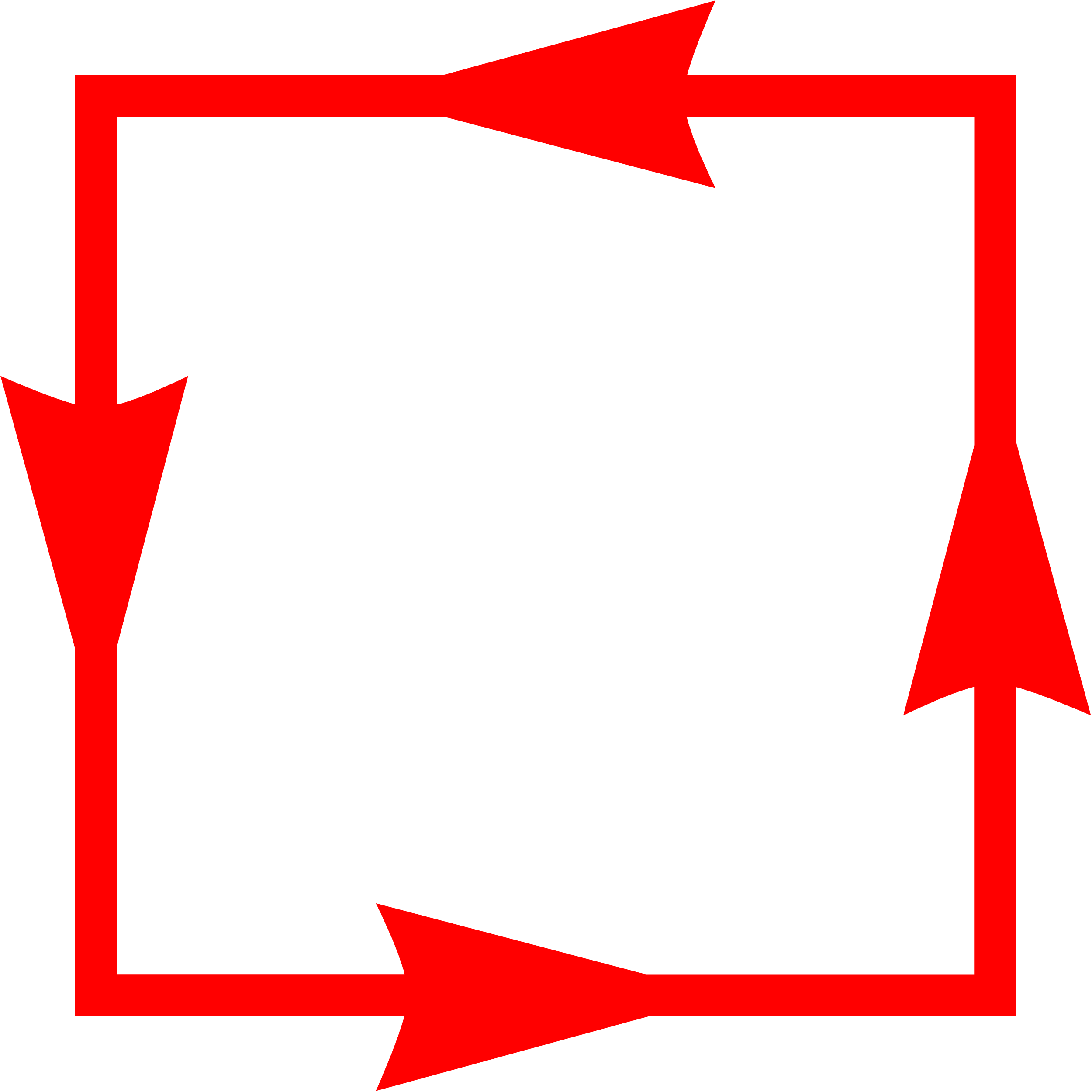}}}
\end{overpic}
\hspace{5ex}
\begin{overpic}[height=0.2\textwidth]{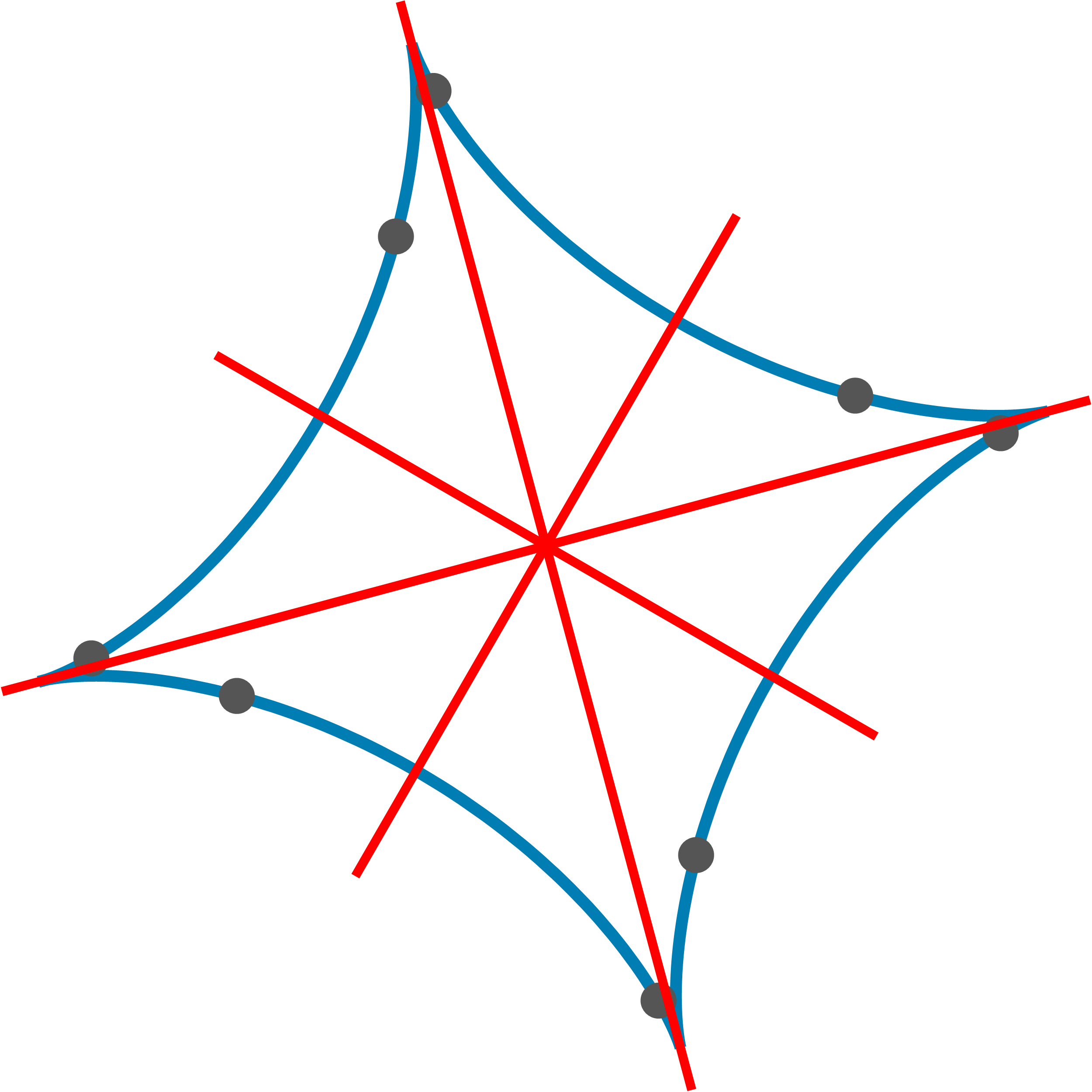}
\end{overpic}
\hspace{3ex}
\vrule width1pt
\hspace{3ex}
\begin{overpic}[height=0.13\textwidth]{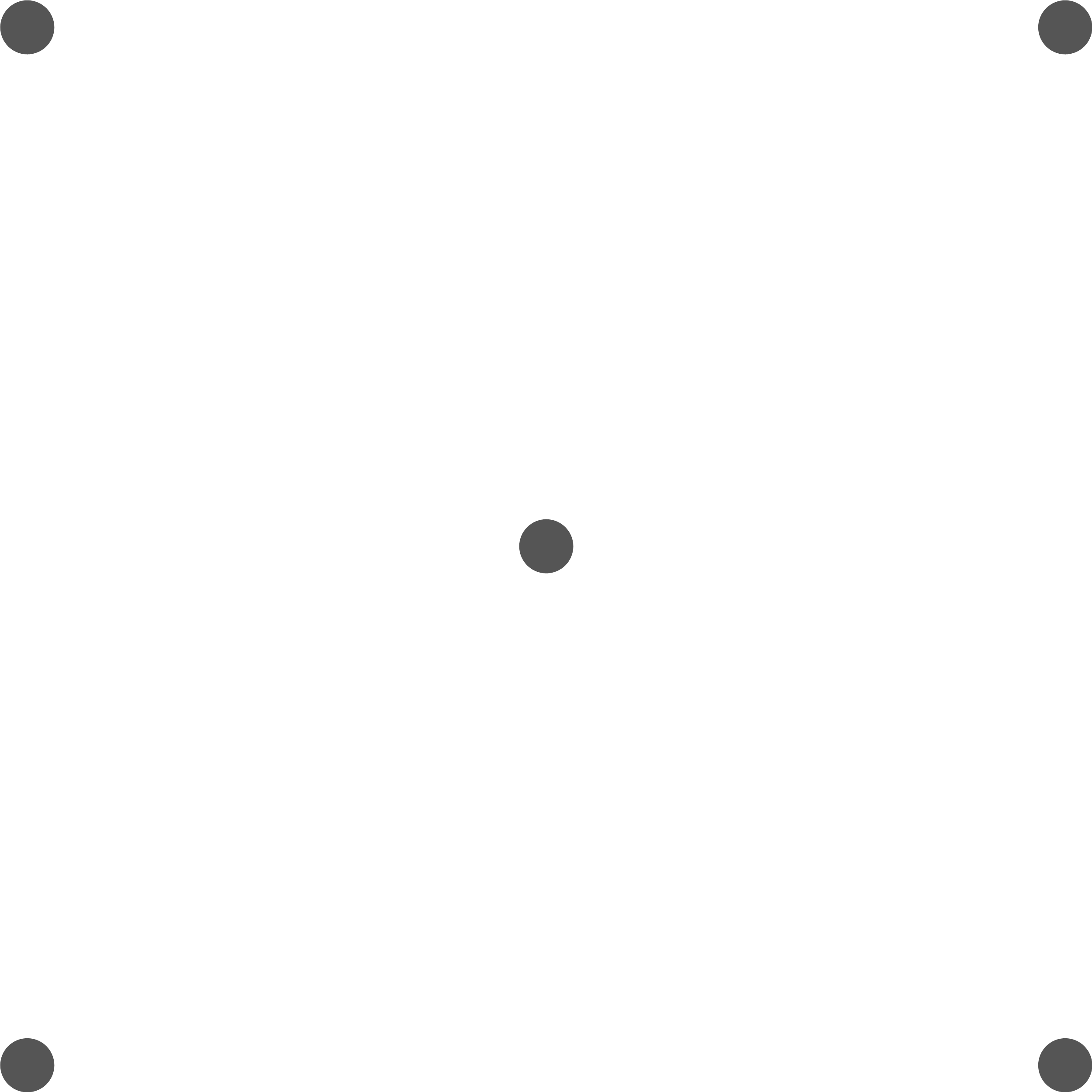}
\put(106,100){\fcolorbox{gray}{white}{\includegraphics[width=0.05\textwidth]{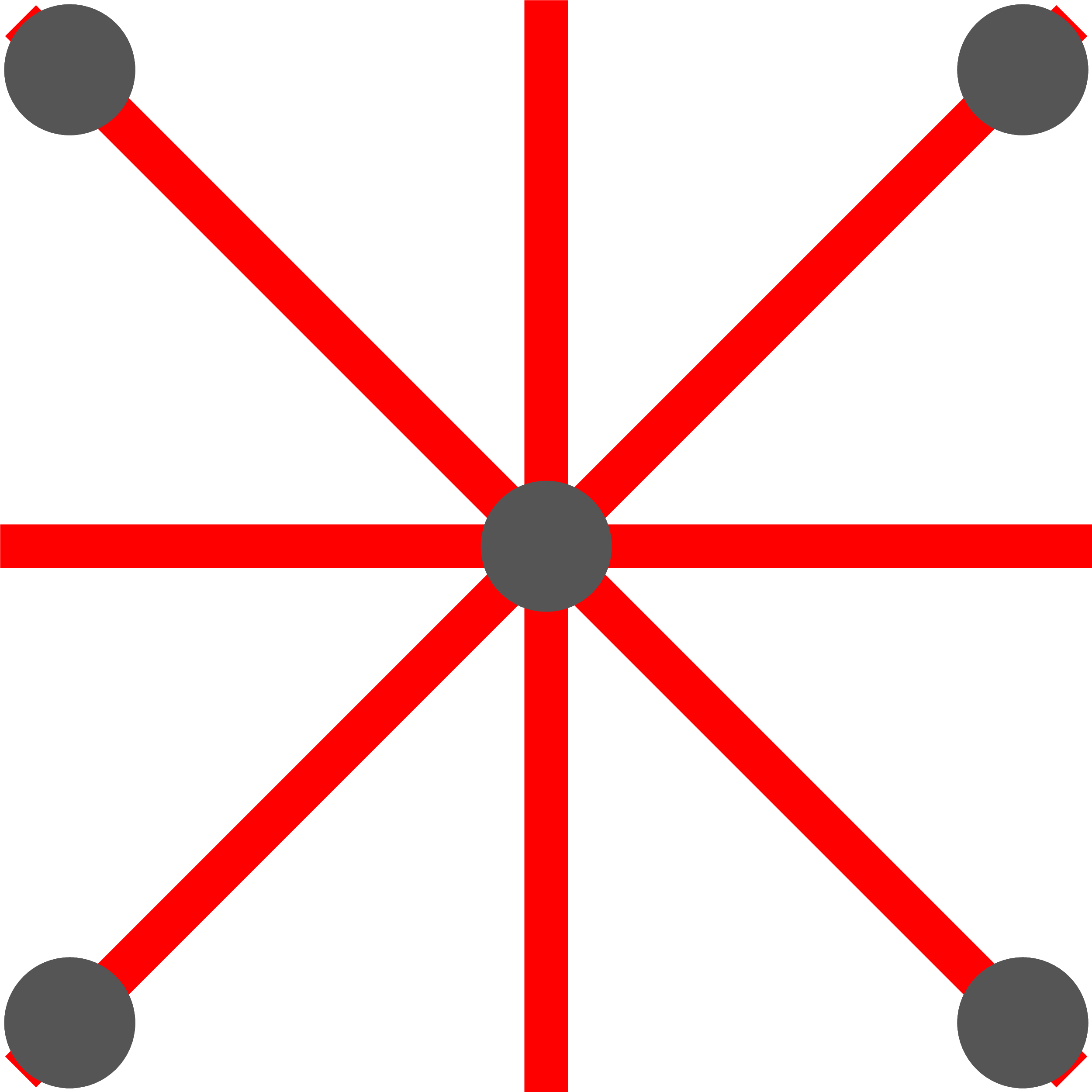}}}
\end{overpic}
\hspace{10ex}
\begin{overpic}[height=0.13\textwidth]{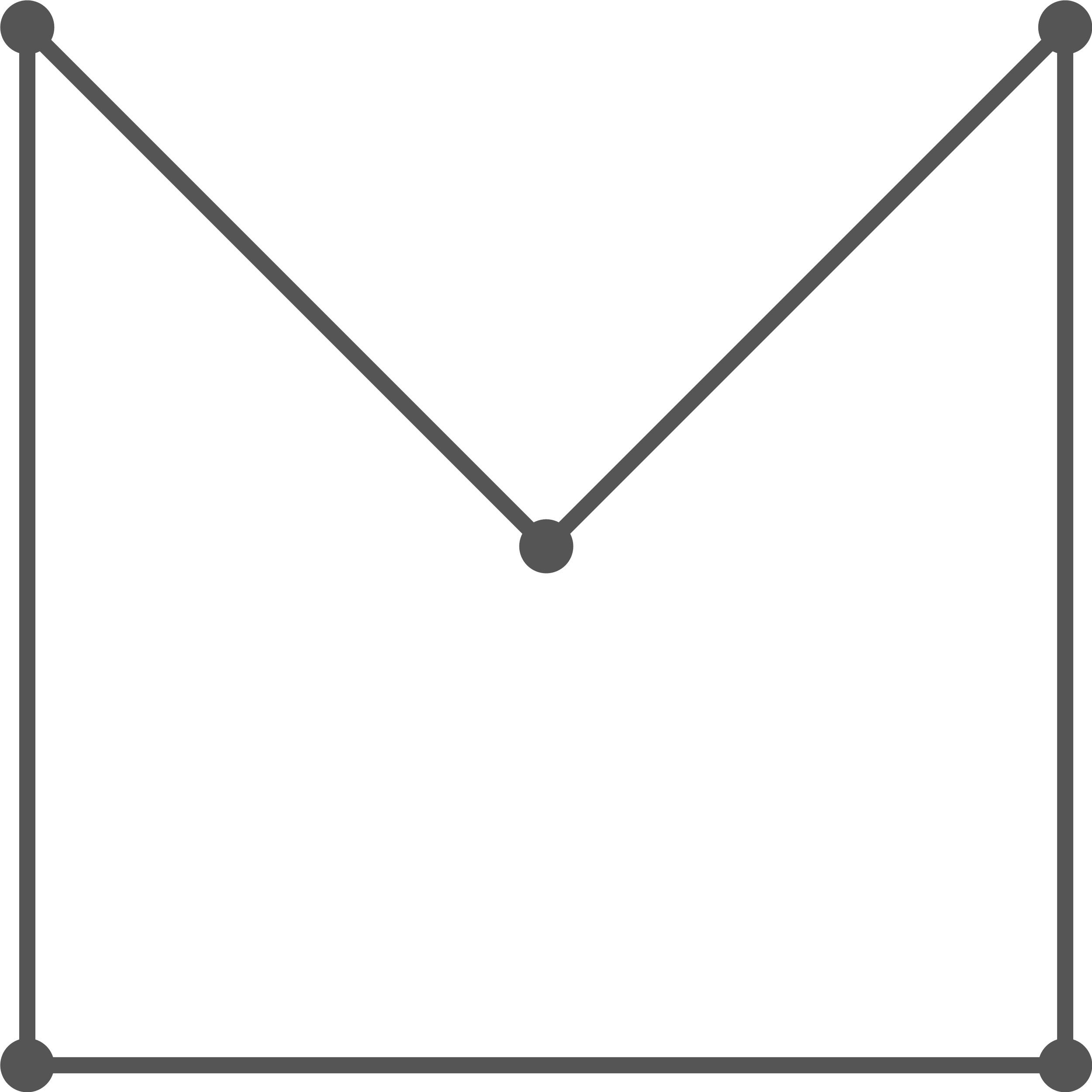}
\put(106,100){\fcolorbox{gray}{white}{\includegraphics[width=0.05\textwidth]{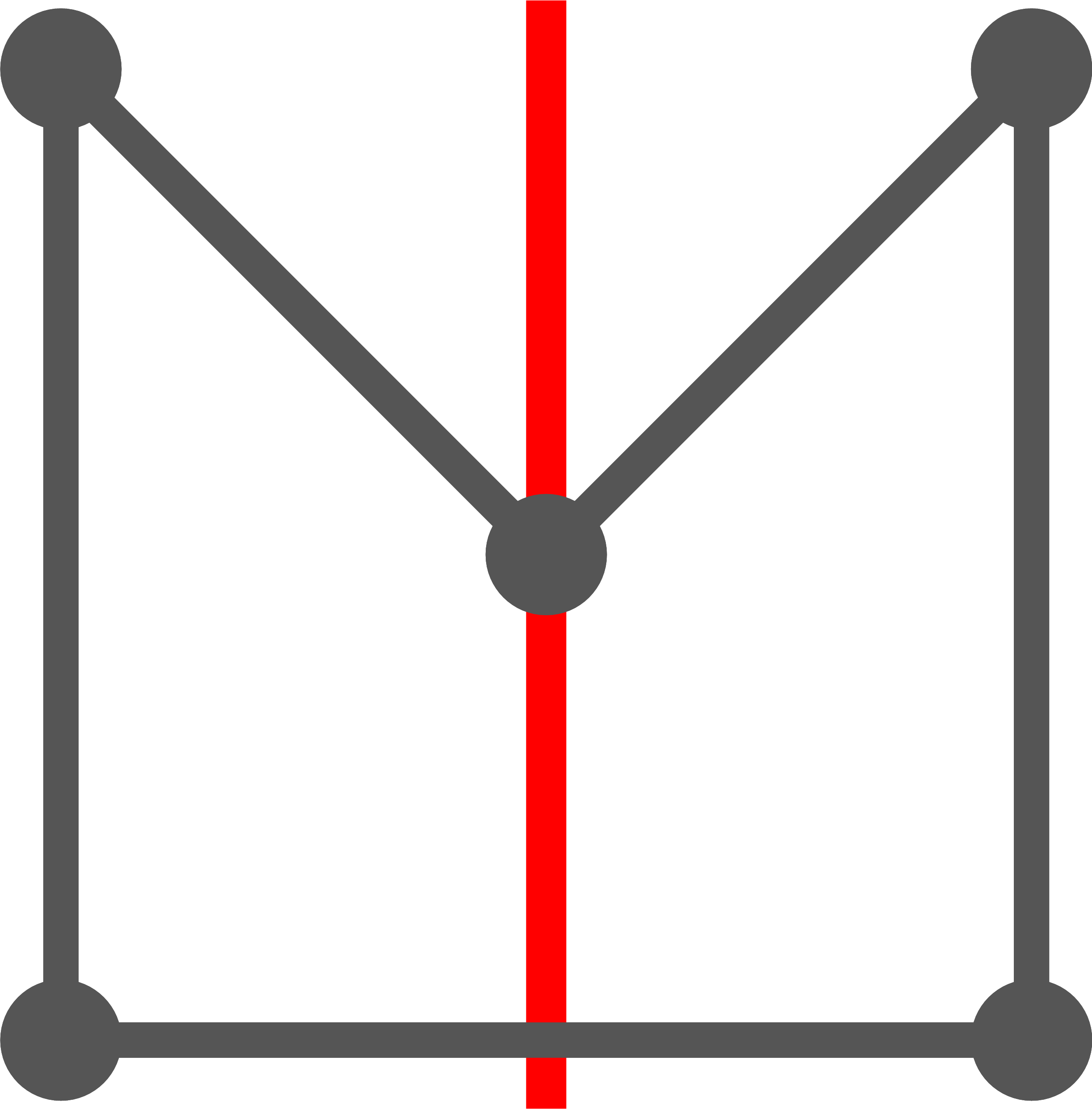}}}
\end{overpic}
\caption{Increasing and decreasing symmetries. Left: A discrete curve $C$  with $\sym(C)\cong C_4$ and the associated curve $T_C$ with the~bigger symmetry group $D_4$. Right: The given set of points with the symmetry group $D_4$ and the interpolating discrete curve with the symmetry group $D_1$.}\label{fig:narust_a_pokles_symetrie}
\end{center}
\end{figure}


\begin{lemma}\label{lem:infinite sym}
  $T_C$ is a circle if and only if the discrete curve $C$ is a~regular $n$-gon.
\end{lemma}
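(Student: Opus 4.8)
The plan is to prove both implications, with Theorem~\ref{thm:schicho} as the common engine: any trigonometric parametrization of a~circle with centre $\f c$ and radius $r$ is, up to a~linear parameter change, the standard one $s\mapsto\f c+r(\cos s,\sin s)$. Throughout I identify $\R^2$ with $\C$ as in Section~\ref{sec:trigcurves} and use that the vertices $\f v_0,\dots,\f v_{n-1}$ are pairwise distinct, as is implicit in the notion of an~$n$-gon.

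For the ``if'' direction I would argue as follows. If $C$ is a~regular $n$-gon, its vertices lie on a~circle of some centre $\f c$ and radius $r>0$, and in the cyclic order in which the polyline visits them consecutive vertices differ by a~fixed rotation about $\f c$ through an~angle $\tfrac{2\pi\alpha}{n}$ with $\gcd(\alpha,n)=1$ (here $\alpha\equiv\pm1\pmod n$ is the convex polygon and other $\alpha$ are the regular star polygons). By Lemma~\ref{lem:TC} the curve $T_C$ does not change if we reverse the orientation, i.e.\ replace $\alpha$ by $n-\alpha$, so I may assume $1\le\alpha\le N$; and when $n=2N$ the coprimality even forces $\alpha\le N-1$ (for $n\ge4$). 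Then $\f q(t)=\f c+r\,\e^{(\alpha t+\psi)\I}$, for a~suitable phase $\psi$, interpolates the vertices, $\f q\bigl(\tfrac{2\pi}{n}j\bigr)=\f v_j$, and $\f q$ involves only the frequencies $0$ and $\alpha$; hence $\f q$ has trigonometric degree $\alpha\le N$ and, for even $n$, no $\sin(Nt)$ term. By the uniqueness of the trigonometric interpolant recalled in Section~\ref{sec:trig_interpol} (in the even case under the normalization $b_N=0$), $\f q$ is the interpolant $\f p$, so $T_C$ is the circle.

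For the ``only if'' direction I would start from the assumption that $T_C$ is a~circle with centre $\f c$ and radius $r>0$. Then $\f p(t)$ is a~trigonometric parametrization of this circle, so Theorem~\ref{thm:schicho} gives $\alpha,\beta\in\R$ with $\f p(t)=\f c+r\,\e^{(\alpha t+\beta)\I}$; the $2\pi$-periodicity of $\f p$ forces $\e^{2\pi\alpha\I}=1$, hence $\alpha\in\Z$, and $\alpha\neq0$ since $\f p$ is non-constant. Evaluating at the nodes $t_j=\tfrac{2\pi}{n}j$ gives
\begin{equation*}
  \f v_j=\f p(t_j)=\f c+\bigl(r\,\e^{\beta\I}\bigr)\zeta^{\,j},\qquad\zeta:=\e^{2\pi\alpha\I/n}.
\end{equation*}
Distinctness of the $\f v_j$ makes $\zeta^0,\dots,\zeta^{n-1}$ distinct, so $\zeta$ is a~primitive $n$-th root of unity; hence the $\f v_j$ are exactly the $n$ equally spaced vertices of a~regular $n$-gon on the circle, and $C$ is a~regular $n$-gon.

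I expect the main obstacle to be the interpolant bookkeeping in the ``if'' direction for even $n$: one must check that the circular parametrization $\f q$ is precisely the interpolant singled out by $b_N=0$, and it is exactly here that coprimality of $\alpha$ and $n$ is used to exclude the top frequency $N$. I would also flag that ``regular $n$-gon'' is to be understood as including the regular star polygons $\{n/\alpha\}$; the proof shows these, together with the convex polygon, are the only possibilities, in line with the fact that they all have symmetry group $D_n$.
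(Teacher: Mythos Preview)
Your argument is correct and follows the same strategy as the paper's---uniqueness of the trigonometric interpolant for the forward direction, and reading off the vertex positions from the circle parametrization for the converse---but you supply the details the paper leaves implicit, notably the explicit appeal to Theorem~\ref{thm:schicho} and the even-$n$ normalization bookkeeping. Your remark that ``regular $n$-gon'' must be read to include the star polygons $\{n/\alpha\}$ is a genuine sharpening: the paper's converse asserts that the $\f v_j$ are the \emph{consecutive} vertices of a regular $n$-gon, whereas your analysis (take e.g.\ the pentagram, $n=5$, $\alpha=2$, where the interpolant $\e^{2\I t}$ is a circle) shows this holds only under the broader reading.
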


\begin{proof}
 If $C$ is a~regular $n$-gon then the circle clearly interpolates the vertices. Because of the uniqueness of the interpolant the curve $T_C$ is forced to be a circle.  Conversely if $T_C$ is a circle then the uniform distribution of the parameter implies that the points $\f v_i$ form the consecutive vertices of a~regular  $n$-gon. 
\end{proof}

Fortunately if $T_C$ is not a circle or a straight line then $\sym(T_C)$ is finite and thus $\sym(C)$ can be identified with a subgroup of $\iso({\R^2})$. The whole method for  determining symmetries of closed discrete curves is summarized in Algorithm~\ref{algoritmus1}.

\begin{algorithm}[ht]
\caption{Symmetries of closed discrete curves.}
\label{algoritmus1}
\begin{algorithmic}[1]
\medskip
\Require
A discrete curve $C$ given by the ordered set of its vertices $V=\{\f v_0,\ldots,\f v_{n-1}\}$.

\smallskip
\State
Compute the trigonometric interpolant $\f p(t)$ of $C$ via \eqref{eq:interpol_coef_odd} or \eqref{eq:interpol_coef_even}, which parameterizes $T_C$.    

\smallskip
\State
Determine $\sym(T_C)$ via the decision tree presented in Diagram~\ref{fig:symmetry_groups}.

\smallskip
\State
Include into the set $\sym(C)$ only such $\phi\in\sym(T_C)$ that satisfy $\phi(C)=C$ (and $\phi(V)=V$).

\medskip
\Ensure
$\sym(C)$.
\end{algorithmic}
\end{algorithm}

\begin{remark}\rm\label{rem:alg_step3}
It remains to comment on how to decide which symmetries from $\sym(T_C)$ to choose to $\sym(C)$, i.e., how to perform {\tt Step 3} of the algorithm. We will use the following simple test as the set of vertices $V$ is ordered. First we identify $\f v_j$ such that $\phi(\f v_1)=\f v_j$. If such $\f v_j$ does not exist then $\phi$ is not a symmetry of $C$. In the affirmative case we continue to compare the next vertices and their images, i.e., to include $\phi$ into $\sym(C)$ it has to hold either $\phi(\f v_2)=\f v_{j+1},\phi(\f v_3)=\f v_{j+2},\phi(\f v_4)=\f v_{j+3},\ldots $ for direct isometries, or $\phi(\f v_2)=\f v_{j-1},\phi(\f v_3)=\f v_{j-2},\phi(\f v_4)=\f v_{j-3},\ldots $ for indirect isometries (all indices are considered modulo $n$).
\end{remark}

\begin{figure}[t]
\begin{center}
\includegraphics[width=0.22\textwidth]{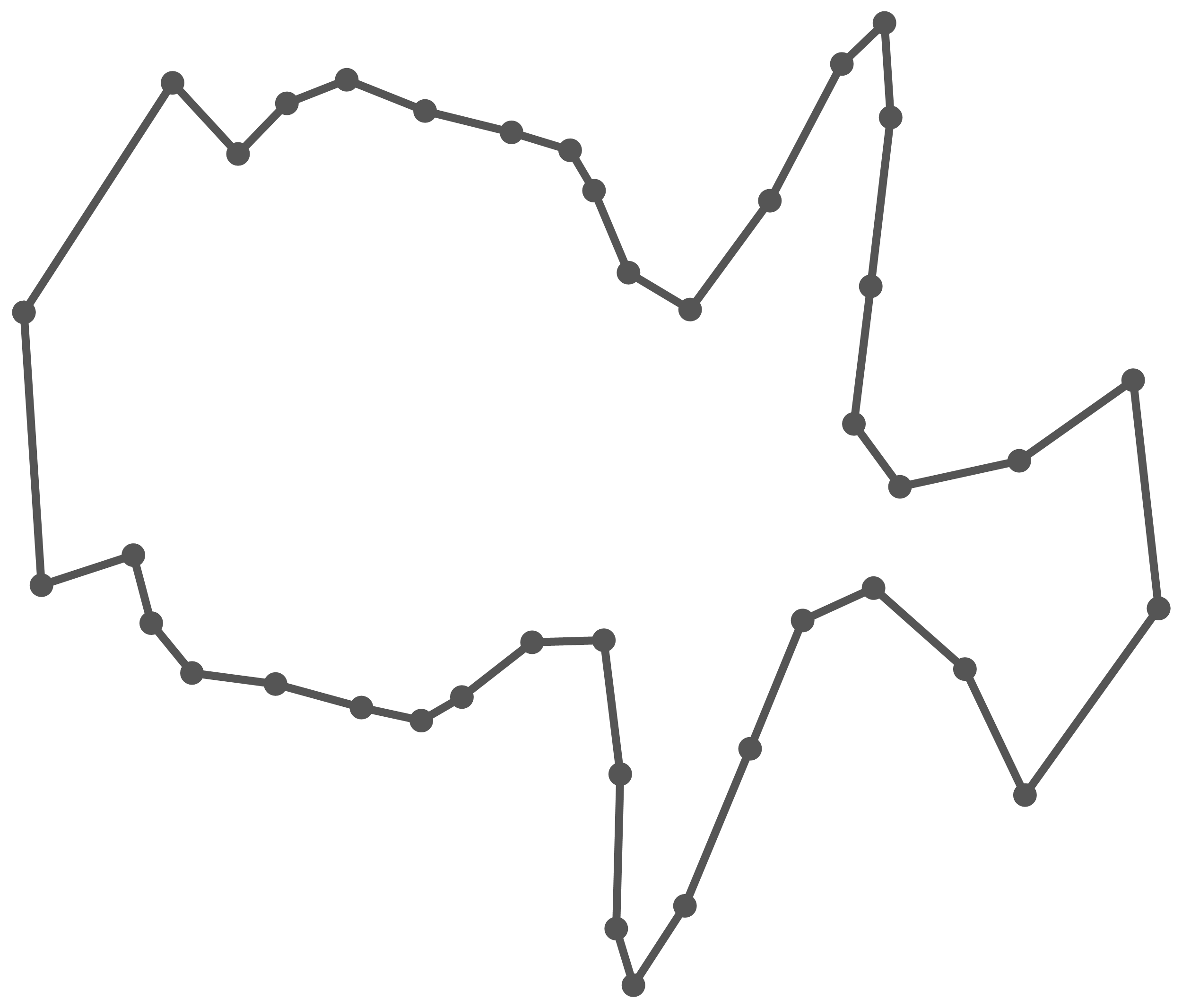}
\hspace{2ex}
\includegraphics[width=0.23\textwidth]{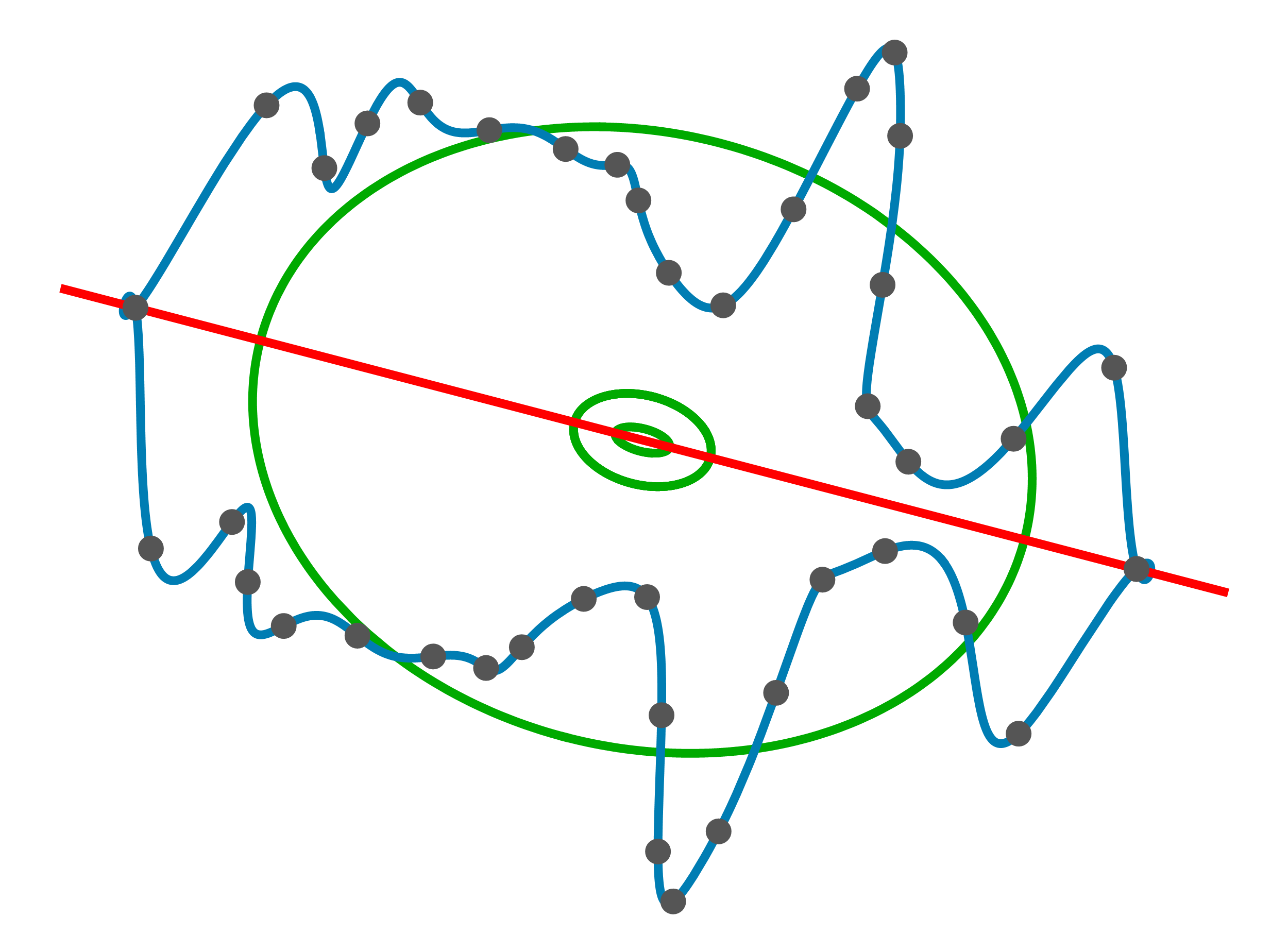}
\hfill
\vrule width1pt
\hfill
\includegraphics[width=0.22\textwidth]{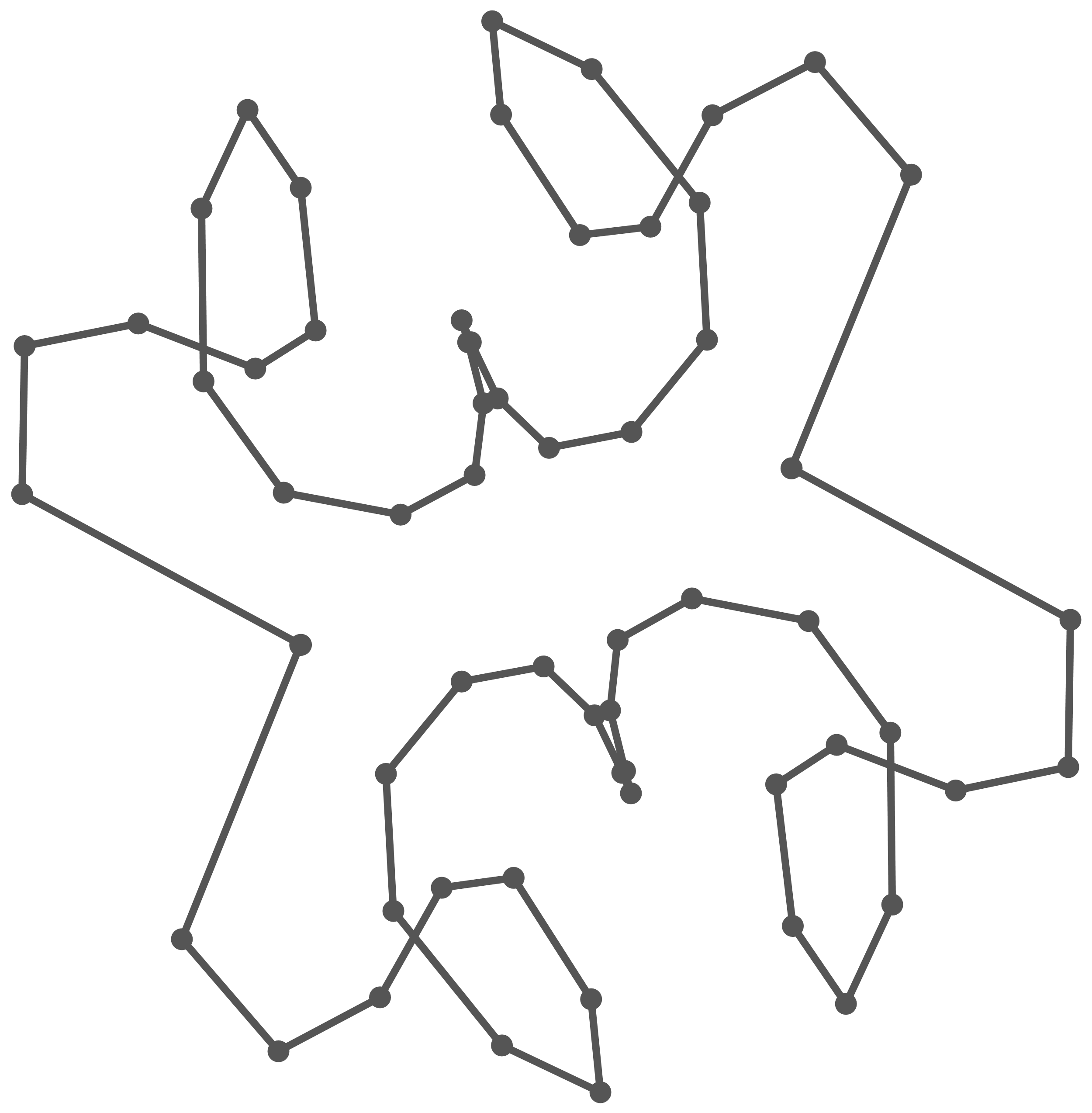}
\hspace{2ex}
\includegraphics[width=0.23\textwidth]{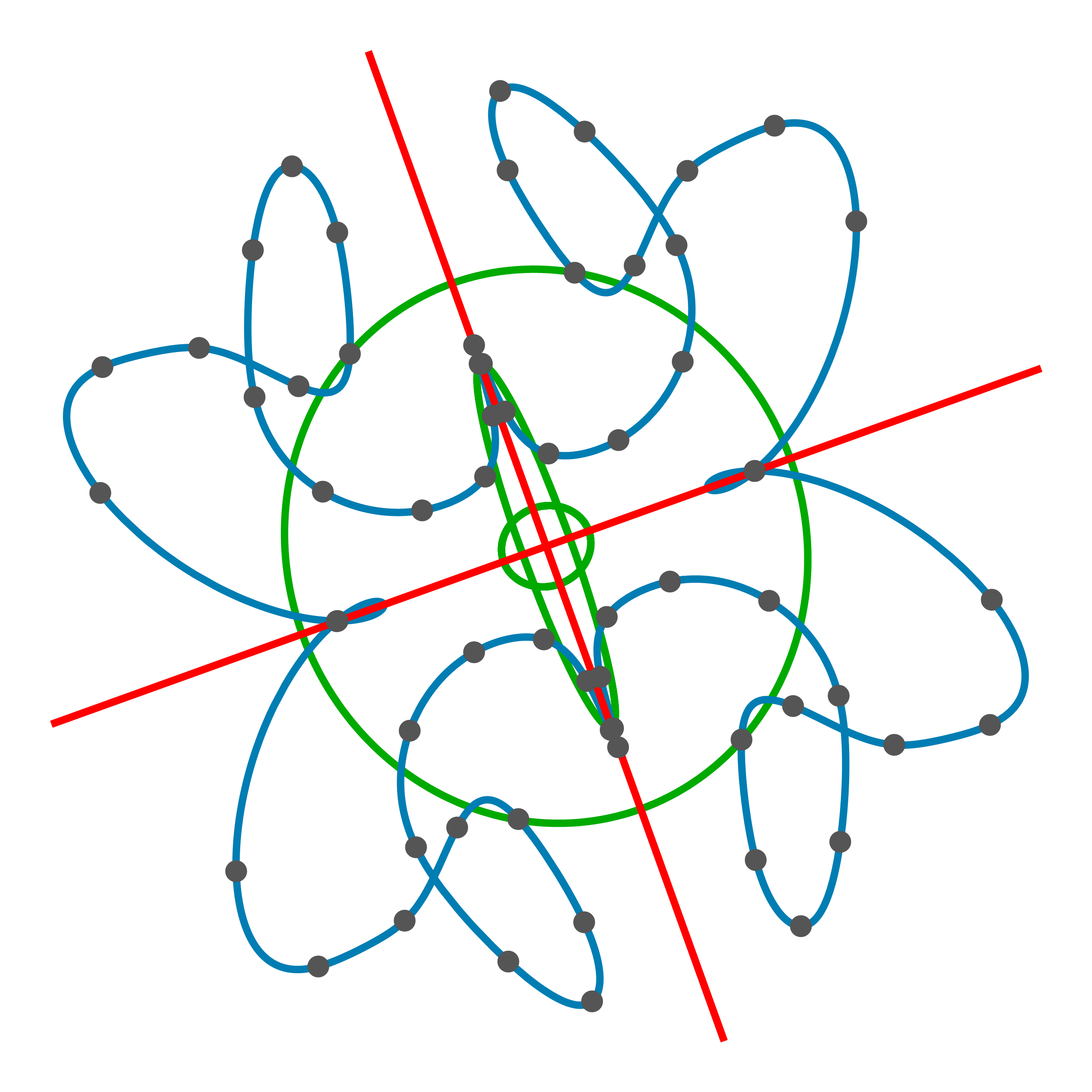}
\caption{Left: Discrete curve (gray) with the axial symmetry $D_1$ and trigonometric interpolant (blue) of their vertices (possessing the same symmetry).  Right: Discrete curve (gray) with the  symmetry group $D_2$ and the corresponding trigonometric interpolant (blue) with the same symmetry group. The first three ellipses $\f p_1, \f p_2, \f p_3$/ $\f p_1, \f p_3, \f p_5$ (green) and the axis/axes of symmetry (red) are also shown.}\label{fig:interpolant_axis}
\end{center}
\end{figure}



\begin{figure}[tbh]
\begin{center}
\hspace{-2ex}
\begin{overpic}[height=0.22\textwidth]{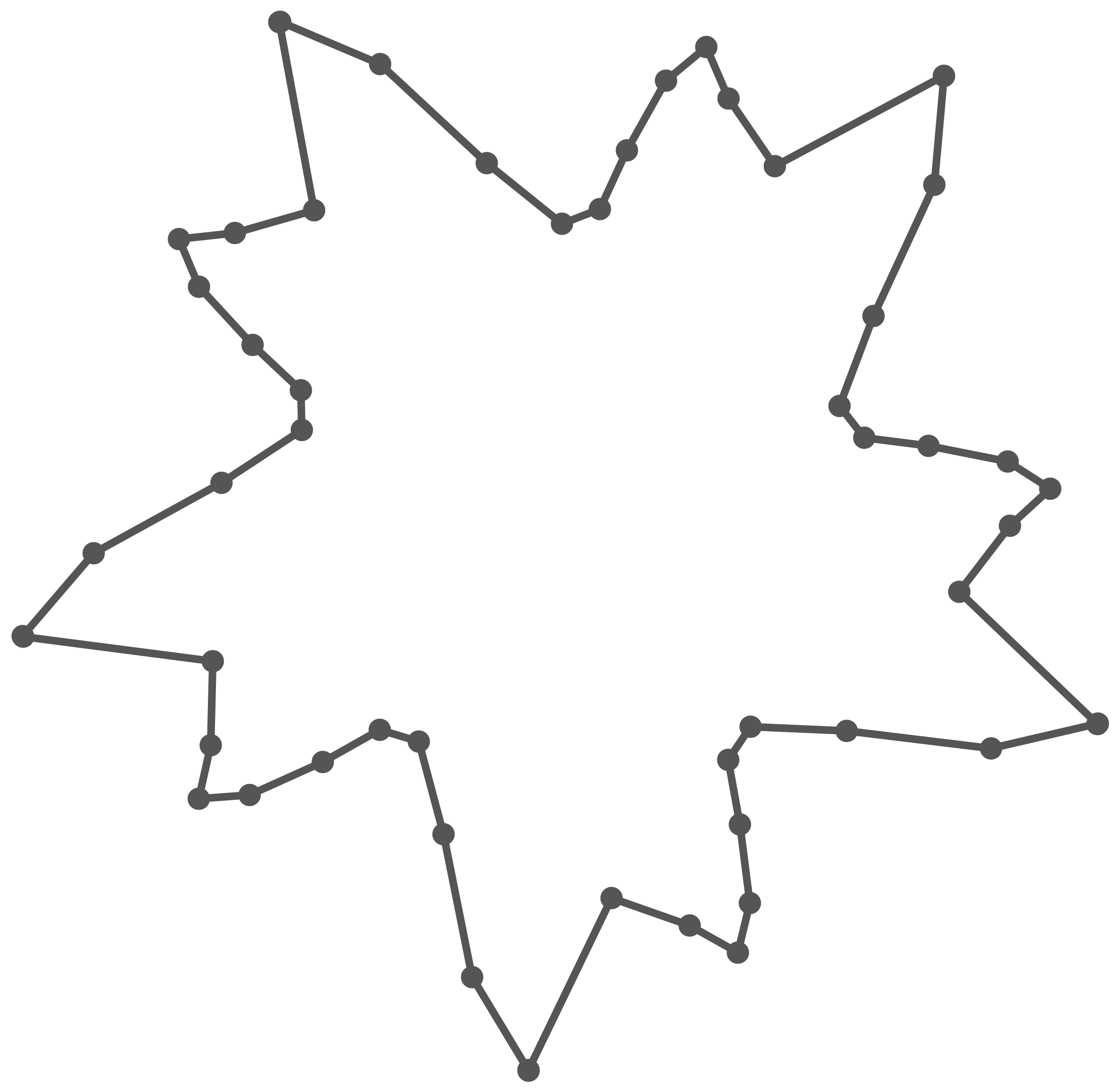}
\end{overpic}
\hspace{0ex}
\begin{overpic}[height=0.22\textwidth]{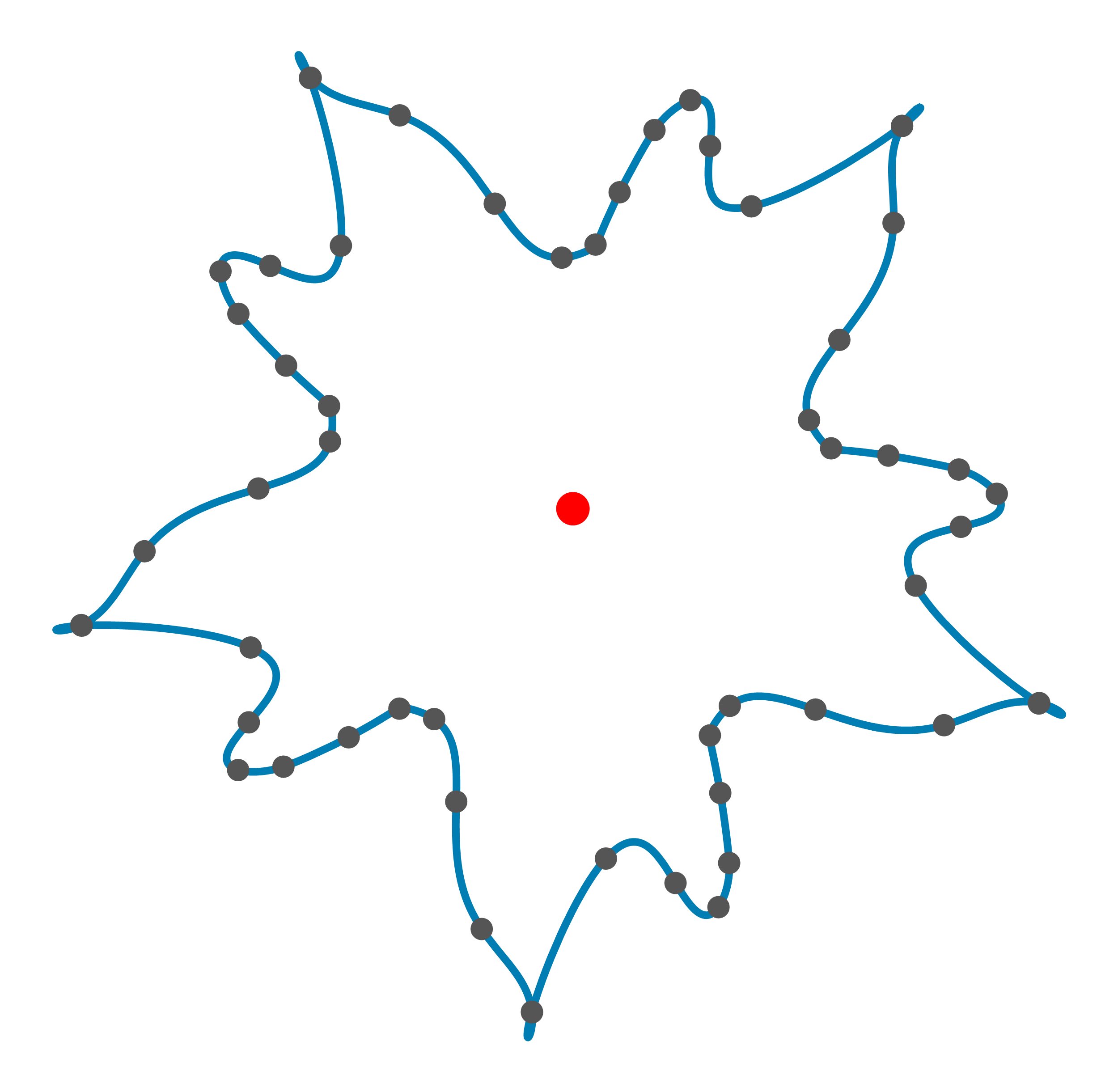}
\put(90,80){\fcolorbox{gray}{white}{\includegraphics[width=0.05\textwidth]{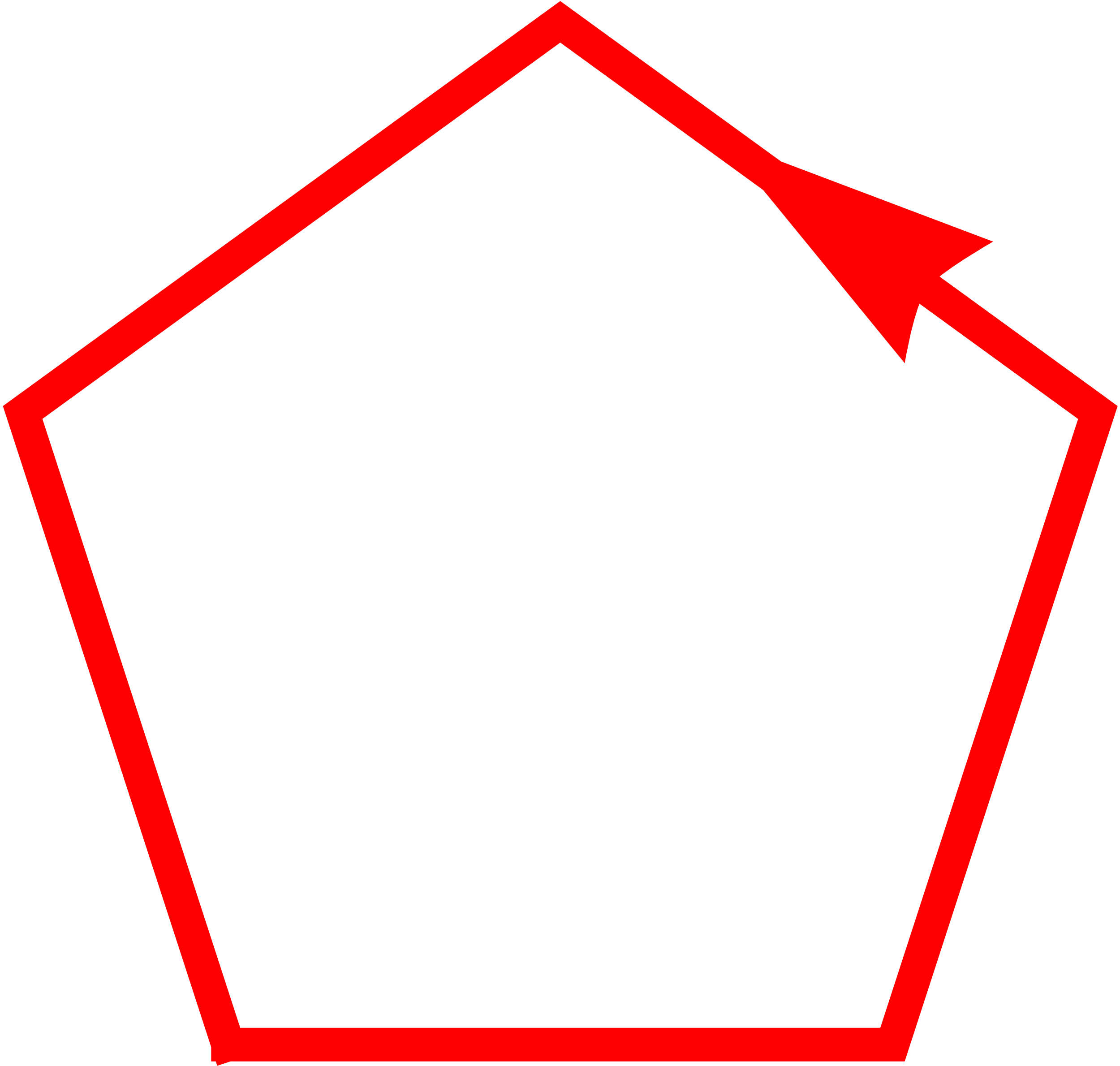}}}
\end{overpic}
\hfill
\vrule width1pt
\hfill
\begin{overpic}[height=0.22\textwidth]{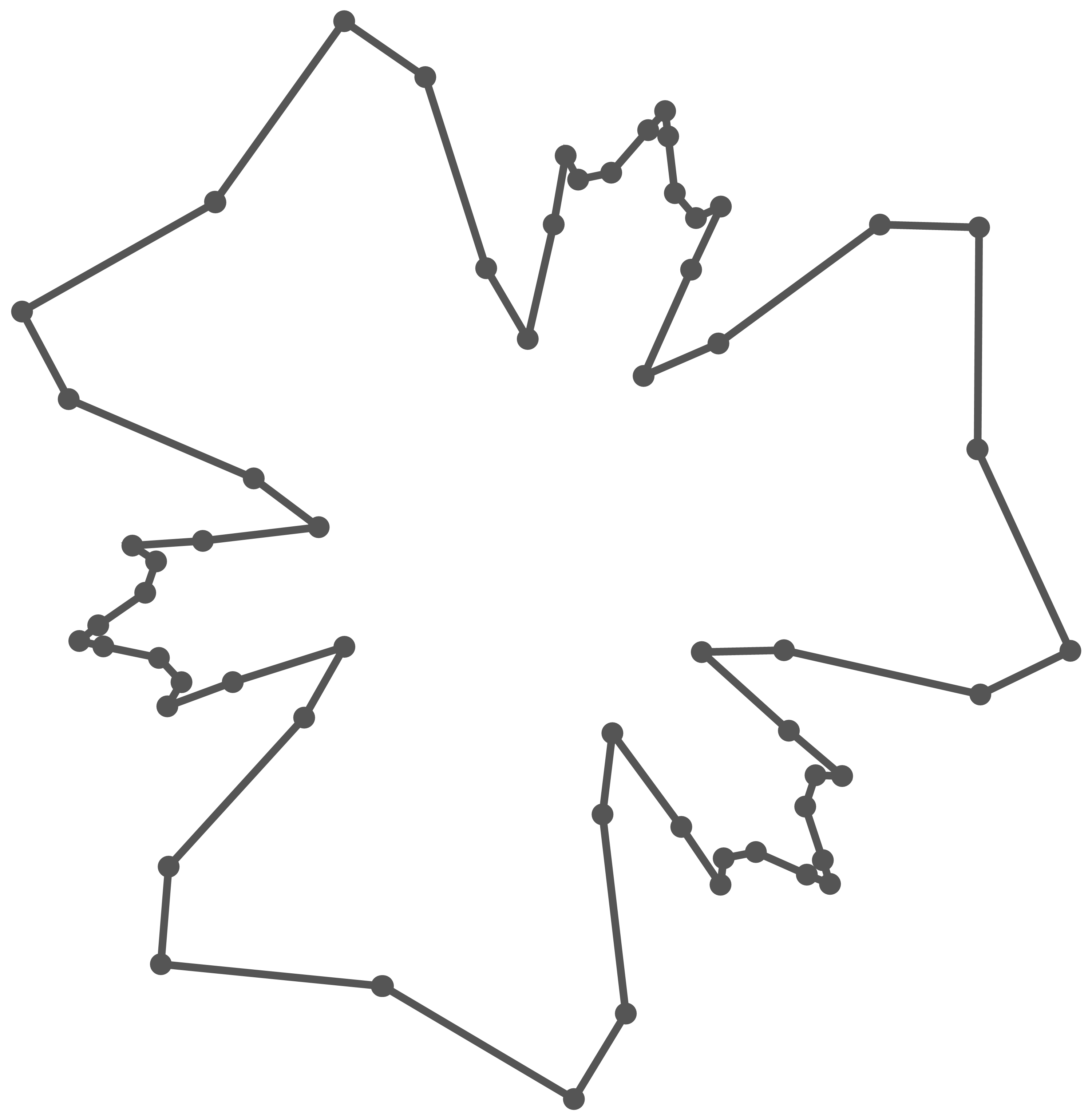}
\end{overpic}
\hspace{0ex}
\begin{overpic}[height=0.22\textwidth]{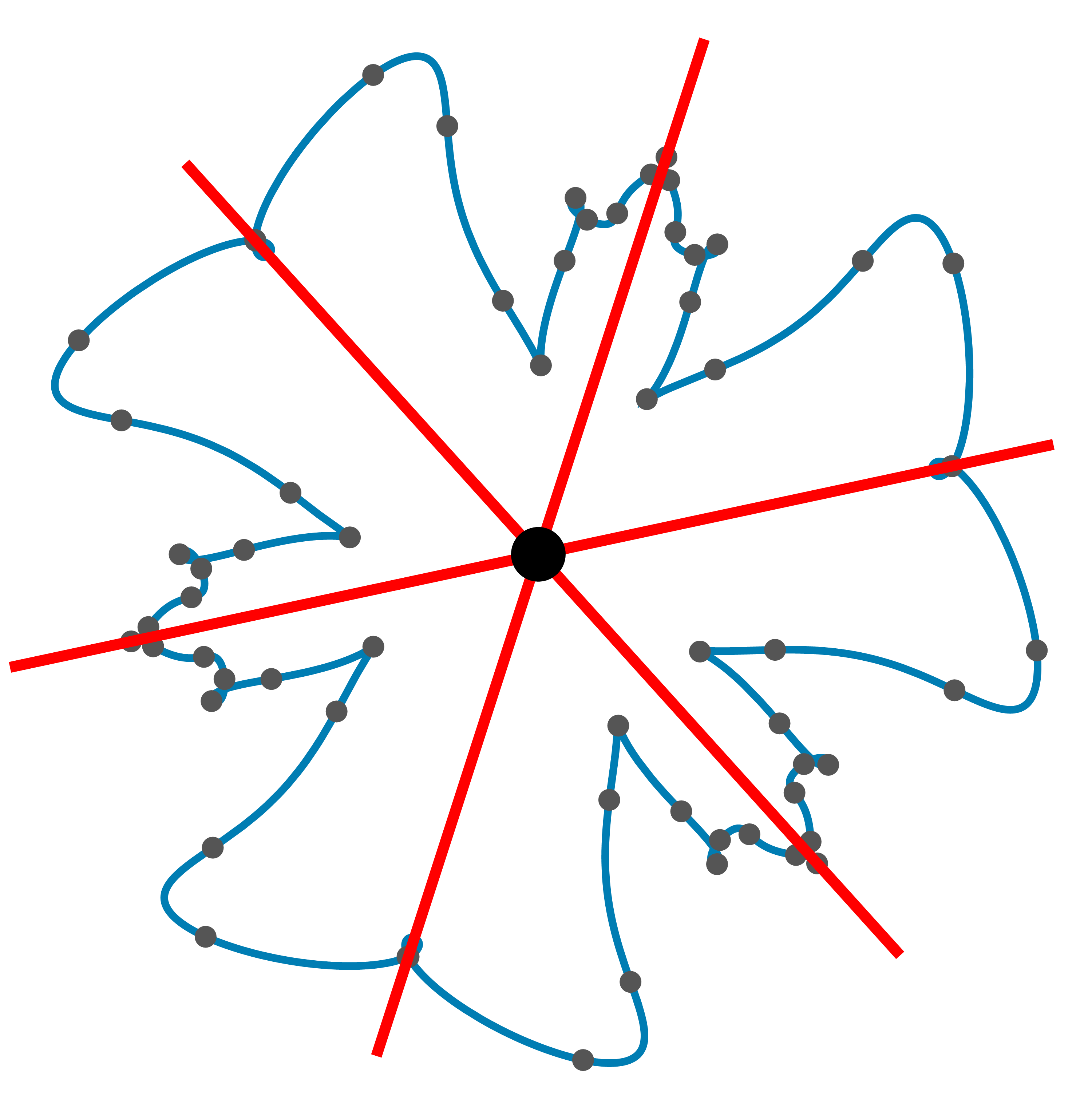}
\end{overpic}
\caption{Discrete curves (gray) with the symmetry groups $C_5$ and $D_3$ (red), and the trigonometric interpolants (blue) of their vertices with the same symmetry groups.}\label{fig:interpolant_rot}
\end{center}
\end{figure}

\begin{example}\rm
Consider four discrete curves $K_1$, $K_2$, $K_3$ and $K_4$. Following the steps of Algorithm~\ref{algoritmus1} we interpolate their vertices by the trigonometric curves $T_{K_1}$, $T_{K_2}$,  $T_{K_3}$ and $T_{K_4}$ and then determine their symmetry groups. 

The parameterizations of $T_{K_1}$ and $T_{K_2}$ contain $\f p_k$'s which are ellipses and moreover among them there are terms of even degree only in the case of $T_{K_1}$. In addition, since in both cases there exist  syzygy configurations we conclude that $T_{K_1}$ and $T_{K_2}$ possess the symmetry groups $D_1$ and $D_2$, respectively, see Fig.~\ref{fig:interpolant_axis}. Finally, it is verified that $\sym(K_1)=\sym(T_{K_1})$ and $\sym(K_2)=\sym(T_{K_2})$.

The curves $T_{K_3}$ and $T_{K_4}$ are higher cycloids with the associated sequences
$$
\begin{array}{rcl}
\sigma_{T_{K_3}} & = & 
\left( 
\{1\}, \emptyset, \emptyset, \{-1\}, \emptyset, \{1\}, \emptyset, 
\emptyset, \{-1\}, \emptyset, \{1\}, \emptyset, \emptyset, \{-1\},
\emptyset, \{1\}, \emptyset, \emptyset, \{-1\}, \emptyset, \{1\}, 
\emptyset, \emptyset, \{-1\}, \emptyset, \{1\}, \emptyset
 \right)\\[0.7ex]
& \preccurlyeq &  \vartheta^{5,1}=\left(\{1\}, \emptyset, \emptyset, \{-1\}, \emptyset, \ldots
 \right),
\end{array} 
$$
and
$$
\begin{array}{rcl}
\sigma_{T_{K_4}} & = & \left( 
\{1\}, \{-1\}, \emptyset, \{1\}, \{-1\}, \emptyset, \{1\}, \{-1\},
\emptyset, \{1\}, \{-1\}, \emptyset, \{1\}, \{-1\}, \emptyset, \{1\},
\{-1\}, \emptyset, \{1\}, \{-1\}, \emptyset,  \right. \\ 
& & \left. \{1\}, \{-1\}, \emptyset, \{1\}, \{-1\}, \emptyset,
\{1\}, \{-1\}, \emptyset, \{1\}
 \right)\\[0.7ex]
& \preccurlyeq &  \vartheta^{3,1}=\left(\{1\}, \{-1\}, \emptyset, \ldots
 \right).
\end{array} 
$$
Hence we have arrived at rotational symmetries of a regular 5- and 3-gon, respectively.

In the first case there does not exist any syzygy configuration, whereas in the second one we find three such configurations. Hence we conclude that the symmetry groups are $C_5$ and $D_3$, respectively, see Fig. \ref{fig:interpolant_rot}. Finally, it is confirmed that the discrete curves $K_3$ and $K_4$ have the same symmetry groups.
\end{example}

\begin{remark}\rm
The curve $T_C$ might possess too high degree with regard to practical purposes and further computations. Then it is advisable to work with suitable alternative curves instead. These curves can be obtained via {\em filtering} high harmonics of the original parameterization. Write $T^\ell_C$ for the {\em filtered curve} parameterized by $\f a_0+\sum_{k=1}^{N-\ell}\f p_k(t)$, i.e., the last $\ell$ terms of $\f p(t)$ are omitted. Analogously to Corollary~\ref{cor:subset} it can be shown that it holds
\begin{equation}
  \sym(T_C)\subset\sym(T^1_C)\subset\sym(T^2_C)\subset \cdots \subset \sym(T^{\ell}_C).
\end{equation}
This is a way how to deal with curves of manageable degrees. Nevertheless one has to be aware of a possible growth of the symmetry group when the filtering process is applied, see Fig.~\ref{fig:filtrace}

\begin{figure}[h!]
\begin{center}
\includegraphics[width=0.22\textwidth]{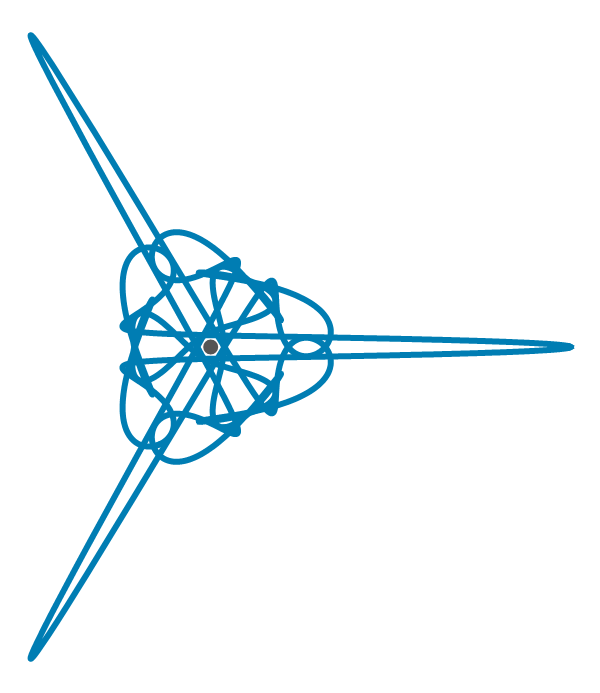}
\includegraphics[width=0.22\textwidth]{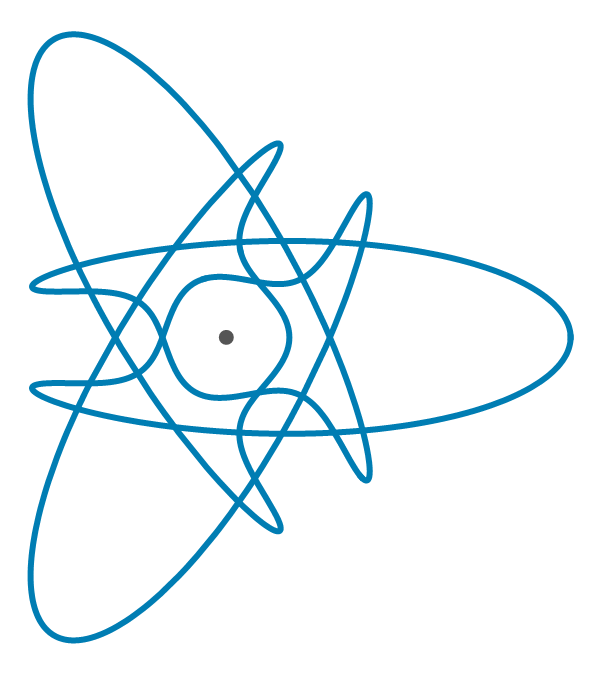}
\includegraphics[width=0.22\textwidth]{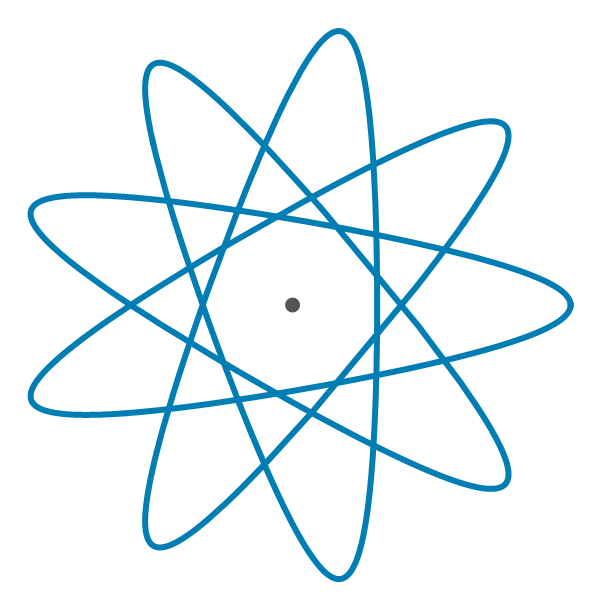}
\includegraphics[width=0.22\textwidth]{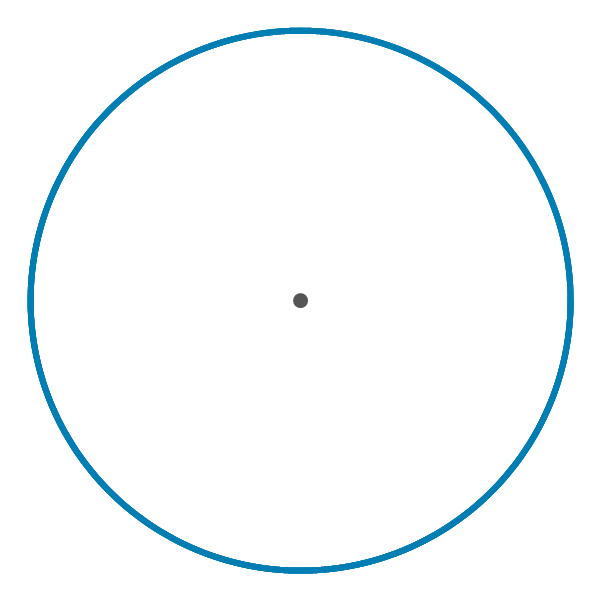}
\caption{The sequence of filtered curves demonstrating the growth of symmetry. From left to right: Curve $T_C$ with maximal trigonometric degree 20 and $\sym(T_C)\cong D_3$, $\sym(T_C^{10})\cong D_3$, $\sym(T_C^{14})\cong D_9$, and $\sym(T_C^{16})\cong \OO(2)$.}\label{fig:filtrace}
\end{center}
\end{figure}
 
\end{remark}

\section{Clouds of points}\label{sec:convexhull}

The previous section was devoted to discrete curves determined by their vertices, i.e., by ordered sets of points.
In what follows we address the problem where input data in plane is unorganized. Since these data might not be suitable for interpolation by a curve, or the interpolation can lead to loosing symmetries, see Fig.~\ref{fig:narust_a_pokles_symetrie} (right), we use an alternative approach based on the construction of the convex hull. By this we can efficiently solve the problem when symmetries of unorganized finite sets of points are to be find.


\begin{lemma}\label{lem:CHsubset}
Consider $\mathcal{X}$ as the set of all finite subsets of $\R^2$ and $\Psi$ as the map assigning to each set $X$ its convex hull $\mathrm{CH}(X)$. Then $\sym(X)\subset \sym(\mathrm{CH}(X))$.
\end{lemma}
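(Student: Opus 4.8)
The plan is to deduce this from Lemma~\ref{lem:komut} by verifying that the convex-hull operator commutes with isometries. Specifically, I would take $\mathcal{X}$ to be the set of all finite subsets of $\R^2$ and $\Psi=\mathrm{CH}$, and show that for every $X\in\mathcal{X}$ and every $\phi\in\iso(\R^2)$ one has $\mathrm{CH}(\phi(X))=\phi(\mathrm{CH}(X))$. Once this identity is established, Lemma~\ref{lem:komut} gives $\sym(X)\subset\sym(\mathrm{CH}(X))$ with no further work.

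To prove the commutation identity I would use that an isometry $\phi\colon\f x\mapsto\f A\cdot\f x+\f b$ is in particular an affine map, hence respects convex combinations: for points $\f x_1,\dots,\f x_r\in\R^2$ and weights $\lambda_i\ge 0$ with $\sum_{i=1}^r\lambda_i=1$ we have $\phi\!\left(\sum_{i=1}^r\lambda_i\f x_i\right)=\sum_{i=1}^r\lambda_i\,\phi(\f x_i)$. Since $\mathrm{CH}(X)$ is precisely the set of all convex combinations of finitely many points of $X$, applying $\phi$ to an arbitrary point of $\mathrm{CH}(X)$ produces a convex combination of points of $\phi(X)$, which gives $\phi(\mathrm{CH}(X))\subseteq\mathrm{CH}(\phi(X))$. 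Applying the same reasoning to the isometry $\phi^{-1}$ and the set $\phi(X)$ yields $\phi^{-1}(\mathrm{CH}(\phi(X)))\subseteq\mathrm{CH}(X)$, i.e. $\mathrm{CH}(\phi(X))\subseteq\phi(\mathrm{CH}(X))$, so the two sets coincide.

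I do not expect any genuine obstacle here; the proof is formal. The only thing worth noting is that finiteness of $X$ is not actually needed for the commutation identity (affine invariance of $\mathrm{CH}$ holds for arbitrary sets); the hypothesis that $X$ be finite is there to ensure that $\mathrm{CH}(X)$ is a convex polygon, so that the subsequent sections can treat $\sym(\mathrm{CH}(X))$ as a finite group and reduce the point-cloud problem to the curve setting already developed.
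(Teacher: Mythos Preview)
Your proof is correct and follows exactly the paper's approach: the paper simply notes that the convex hull is an affine (hence Euclidean) invariant and then invokes Lemma~\ref{lem:komut}. You have merely spelled out the routine verification of $\mathrm{CH}(\phi(X))=\phi(\mathrm{CH}(X))$ via preservation of convex combinations, which the paper leaves as ``clearly''.
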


\begin{proof}
Clearly, the convex hull is the affine, and thus also Euclidean invariant, so by Lemma~\ref{lem:komut} the property holds.
\end{proof}

The boundary of the convex hull of a finite set is a discrete curve. In addition, it obviously holds

\begin{lemma}\label{lem:polygonsubset}
Let $P$ be a polygon in plane and $\partial\, P$ be its boundary. Then $\sym(P)\subset\sym(\partial\, P)$.
\end{lemma}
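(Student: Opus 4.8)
The statement to prove is Lemma~\ref{lem:polygonsubset}: for a polygon $P$ in the plane with boundary $\partial P$, we have $\sym(P)\subset\sym(\partial P)$.

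The plan is to invoke Lemma~\ref{lem:komut} with a suitably chosen operator $\Psi$, exactly as was done for the convex hull in Lemma~\ref{lem:CHsubset}. Here the natural choice is $\Psi = \partial$, the topological boundary operator, restricted to the class $\mathcal{X}$ of (closed) polygonal regions in $\R^2$. First I would observe that $\Psi$ does map $\mathcal{X}$ into $\mathcal{P}(\R^2)$: the boundary of a polygon is a closed polyline, hence again a subset of $\R^2$. Second, and this is the only substantive point, I would verify the commutation relation required by Lemma~\ref{lem:komut}, namely $\partial(\phi(P)) = \phi(\partial P)$ for every isometry $\phi\in\iso(\R^2)$. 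This is immediate because an isometry of $\R^2$ is in particular a homeomorphism of $\R^2$ onto itself, and homeomorphisms carry the topological boundary of a set to the boundary of the image: $\partial(\phi(P)) = \phi(\partial P)$ holds for any subset, a fortiori for a polygon. (One may alternatively argue more elementarily: $\phi$ maps vertices to vertices and edges to edges of the image polygon, so it sends the boundary polyline of $P$ to the boundary polyline of $\phi(P)$.) Once the commutation is in hand, Lemma~\ref{lem:komut} applied to $X=P$ yields $\sym(P)\subset\sym(\partial P)$, which is the assertion.

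There is essentially no obstacle here; the lemma is a one-line consequence of Lemma~\ref{lem:komut} together with the homeomorphism-invariance of the boundary. If one wants to be scrupulous, the only thing worth spelling out is what is meant by $\sym(P)$ versus $\sym(\partial P)$: the former is the group of non-identical isometries fixing the polygonal region setwise, the latter the group of non-identical isometries fixing the boundary curve setwise; the inclusion then says that any symmetry of the solid region is in particular a symmetry of its bounding polyline, which is geometrically obvious and formally covered by the argument above. I would therefore keep the proof to two sentences: one identifying $\Psi=\partial$ and noting it commutes with isometries since isometries are homeomorphisms, and one invoking Lemma~\ref{lem:komut}.

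Thus the proof I would write is: \emph{Let $\Psi$ assign to a polygon $P$ its boundary $\partial P$. Every $\phi\in\iso(\R^2)$ is a homeomorphism of $\R^2$, hence $\partial(\phi(P))=\phi(\partial P)$, so $\Psi$ commutes with isometries; the claim follows from Lemma~\ref{lem:komut}.}
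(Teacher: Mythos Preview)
Your proposal is correct. The paper in fact gives no proof of this lemma at all---it is introduced with the phrase ``it obviously holds'' and stated without a proof environment---so your argument via Lemma~\ref{lem:komut} with $\Psi=\partial$ is entirely in the spirit of the paper (it mirrors exactly the proof of the preceding Lemma~\ref{lem:CHsubset}) and supplies what the authors left implicit.
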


Let be given a finite subset $X$ of $\R^2$. Then combination of Lemmas \ref{lem:CHsubset}, \ref{lem:polygonsubset} and Corollary \ref{cor:subset} yields the following chain  
\begin{equation}
\sym(X)\subset \sym(\mathrm{CH}(X))\subset \sym(\partial\,\mathrm{CH}(X))\subset \sym(T_{\partial\,\mathrm{CH}(X)}).
\end{equation}
Hence, we assign to the unorganized finite set of points the boundary of its convex hull and then we determine all symmetries of this closed discrete curve by the previously presented methods.

\smallskip
Let us emphasize again that the obtained symmetry group $\sym(T_{\partial\,\mathrm{CH}(X)})$ can be strictly larger then the sought group $\sym(X)$. However, in this case we cannot apply the approach discussed in Remark \ref{rem:alg_step3}, which is suitable for ordered point sets only. Instead, we determine the \emph{Hausdorff distance} between two finite sets $X,Y$ given by the formula
\begin{equation}
   \delta_H(X,Y)=\max_{x\in X}\{  \min_{y\in Y}\{\|xy\|\}\}.
\end{equation}
Then $\phi\in\sym(T_{\partial\,\mathrm{CH}(X)})$ belongs to $\sym(X)$ if and only if $\delta_H(X,\phi(X))=0$.

\begin{figure}[t]
\begin{center}
\includegraphics[width=0.2\textwidth]{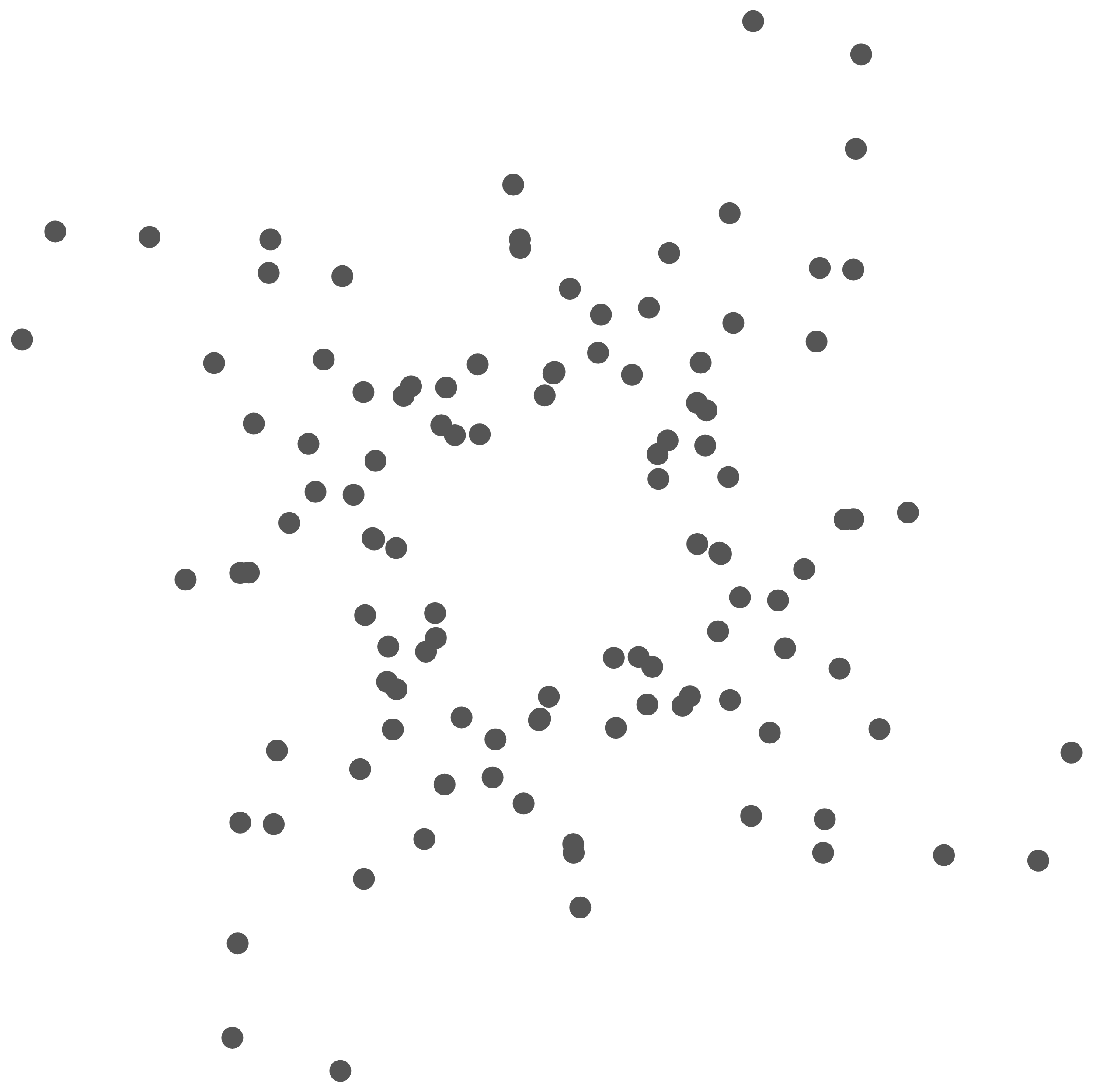}
\hspace{2ex}
\includegraphics[width=0.2\textwidth]{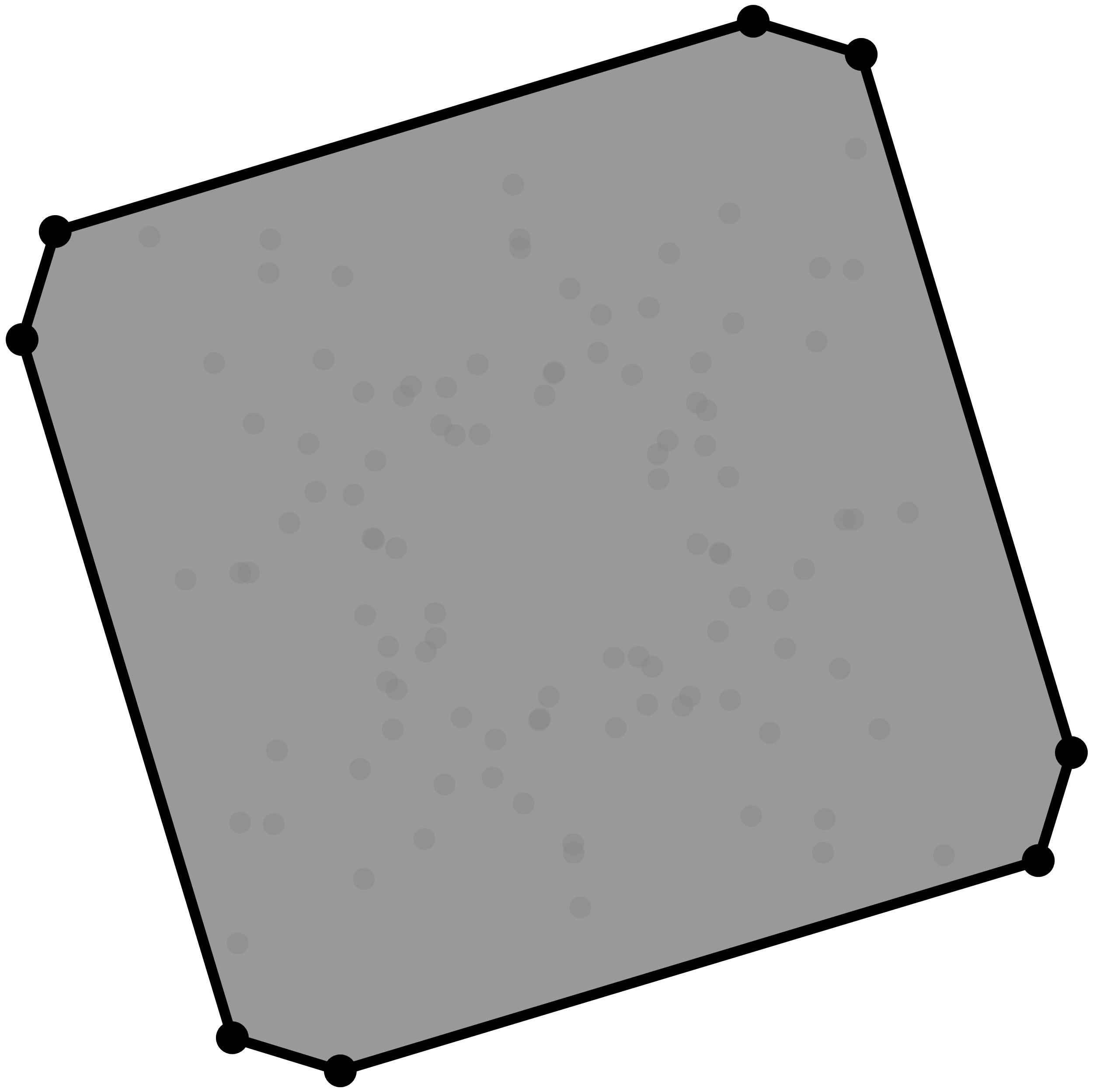}
\hspace{2ex}
\includegraphics[width=0.23\textwidth]{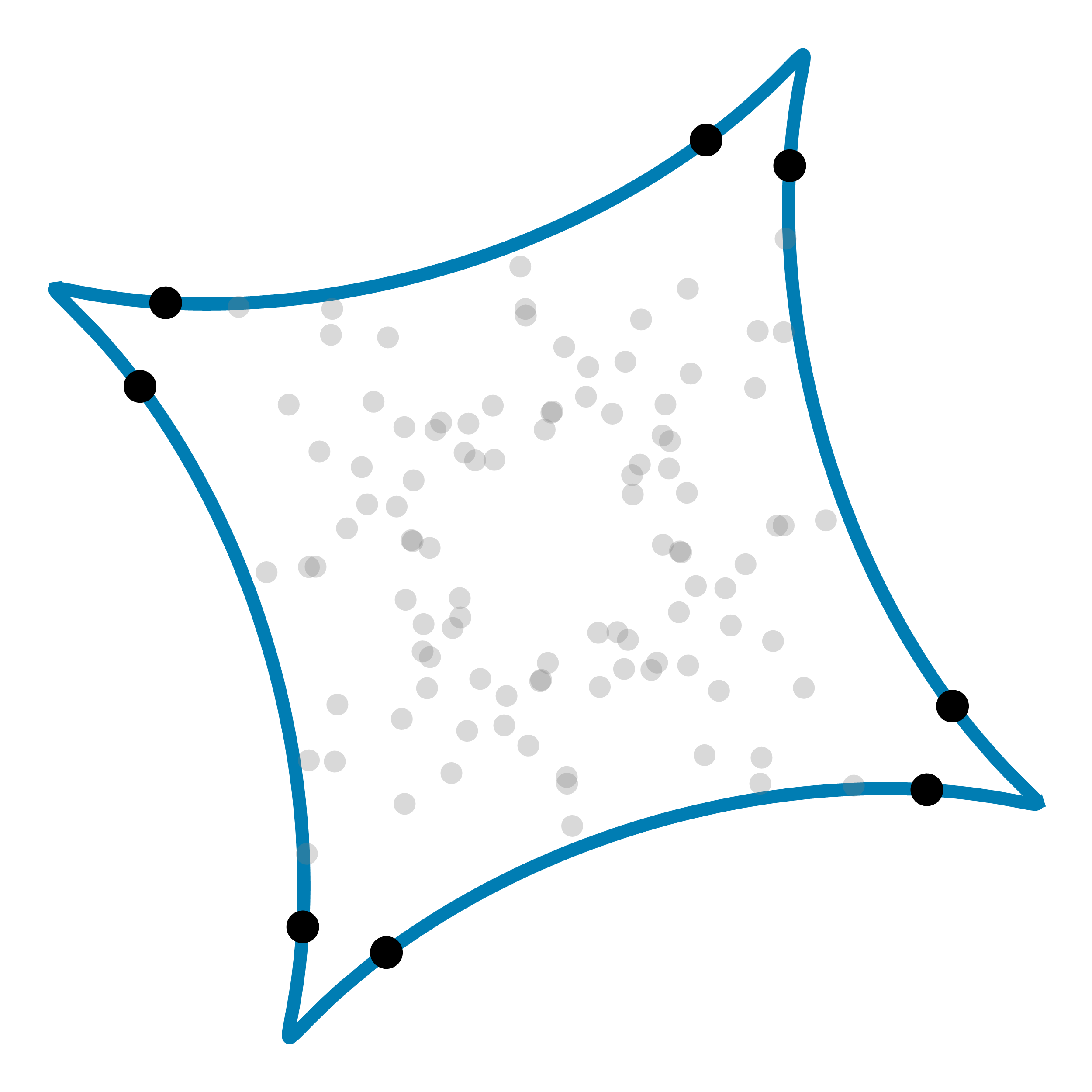}
\hspace{2ex}
\includegraphics[width=0.23\textwidth]{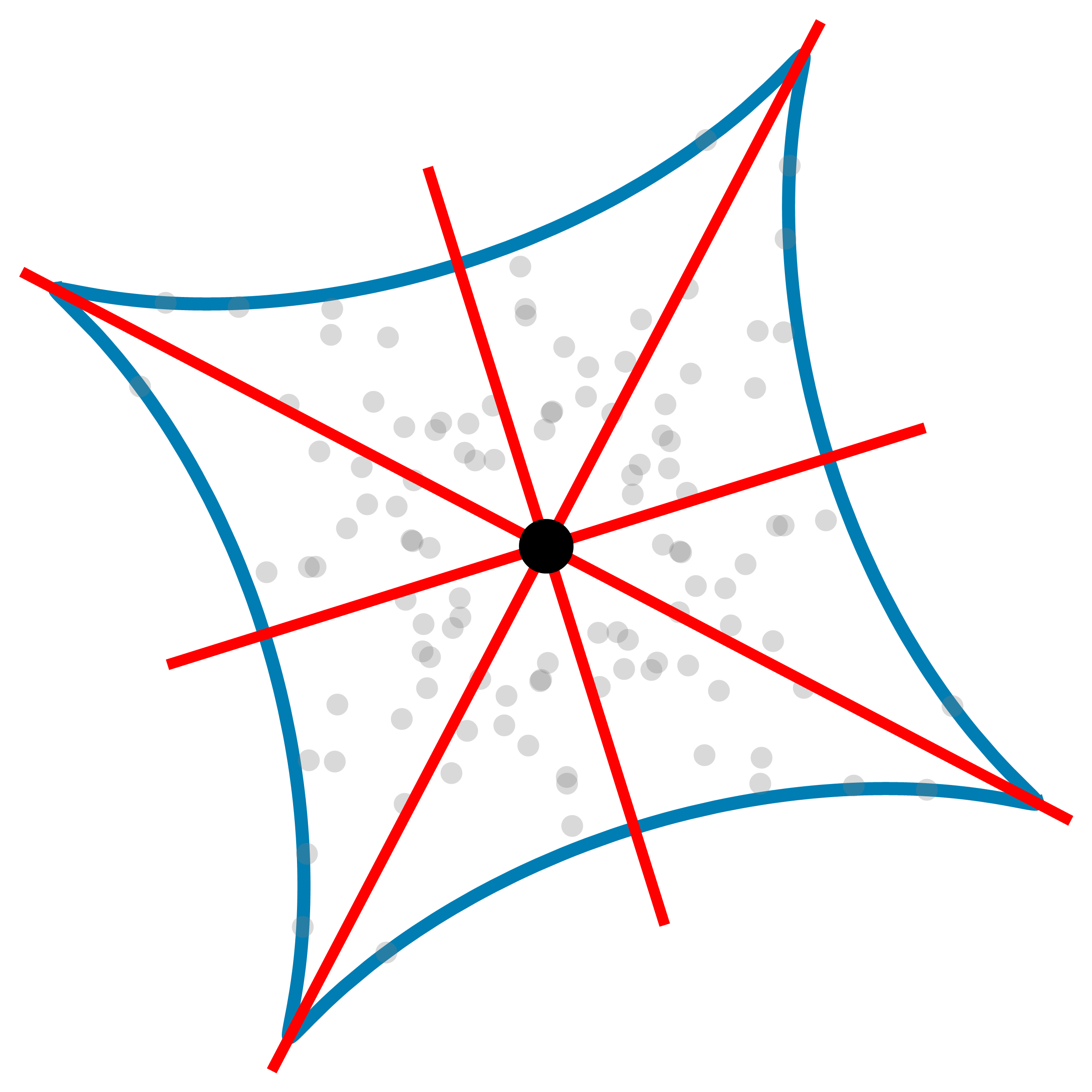}
\caption{Detection of symmetries of a point cloud. From left to right: The point cloud (with the symmetry group $C_4$), its convex hull and the trigonometric interpolant of the vertices of the convex hull having the symmetry group $D_4$.}\label{fig:cloud}
\end{center}
\end{figure}


\begin{figure}[t]
\begin{center}
\includegraphics[width=0.2\textwidth]{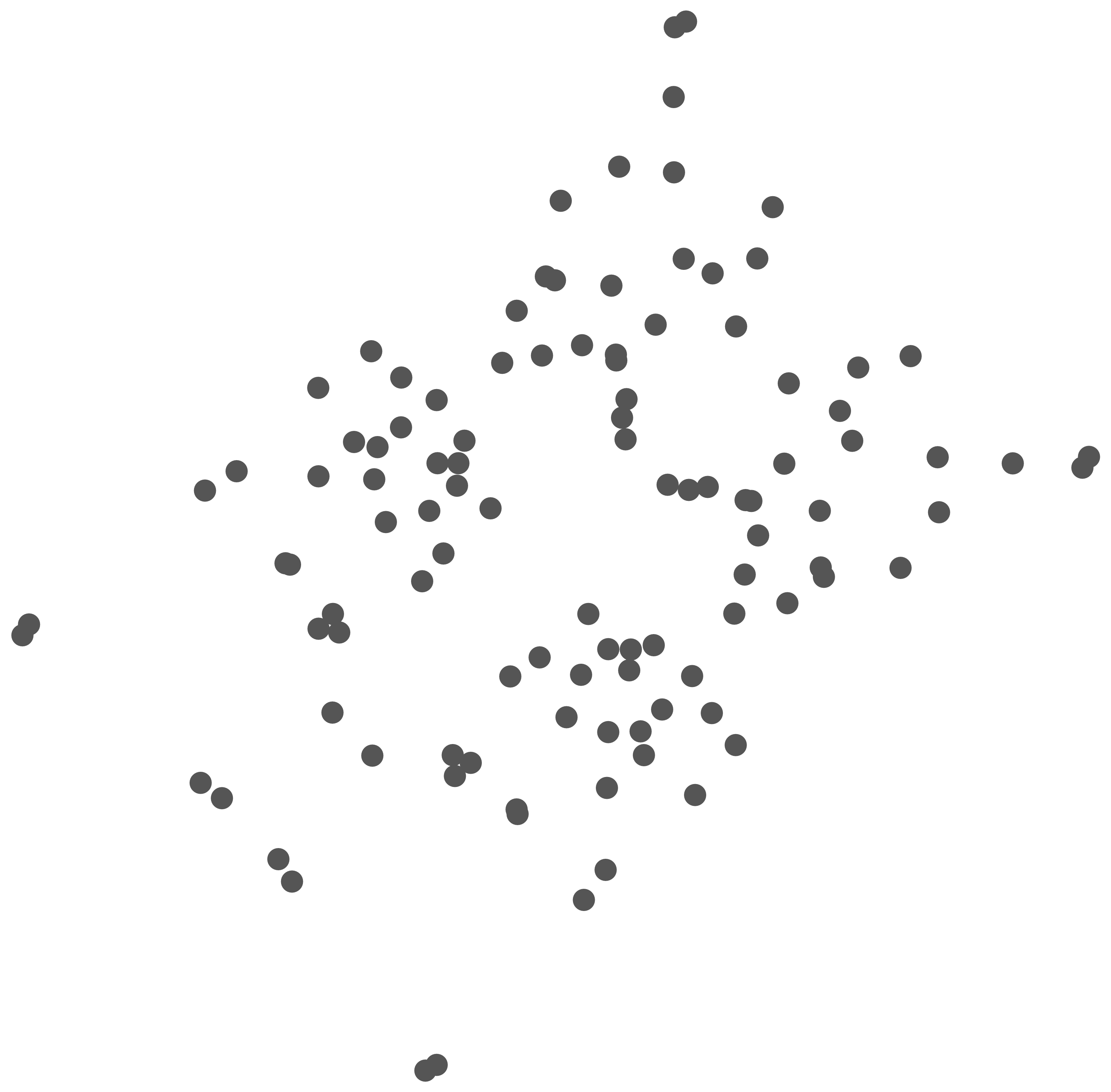}
\hspace{2ex}
\includegraphics[width=0.2\textwidth]{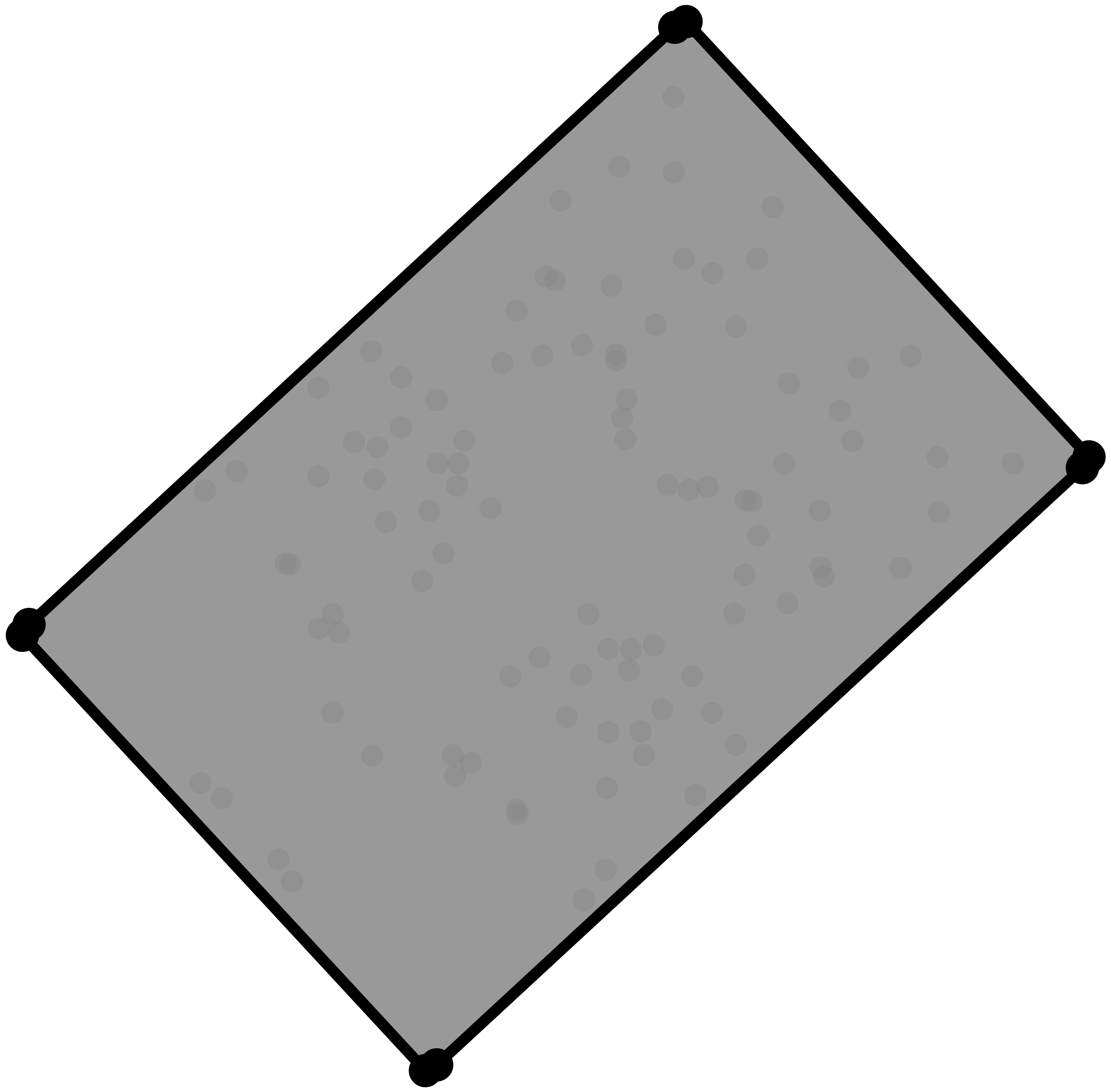}
\hspace{2ex}
\includegraphics[width=0.23\textwidth]{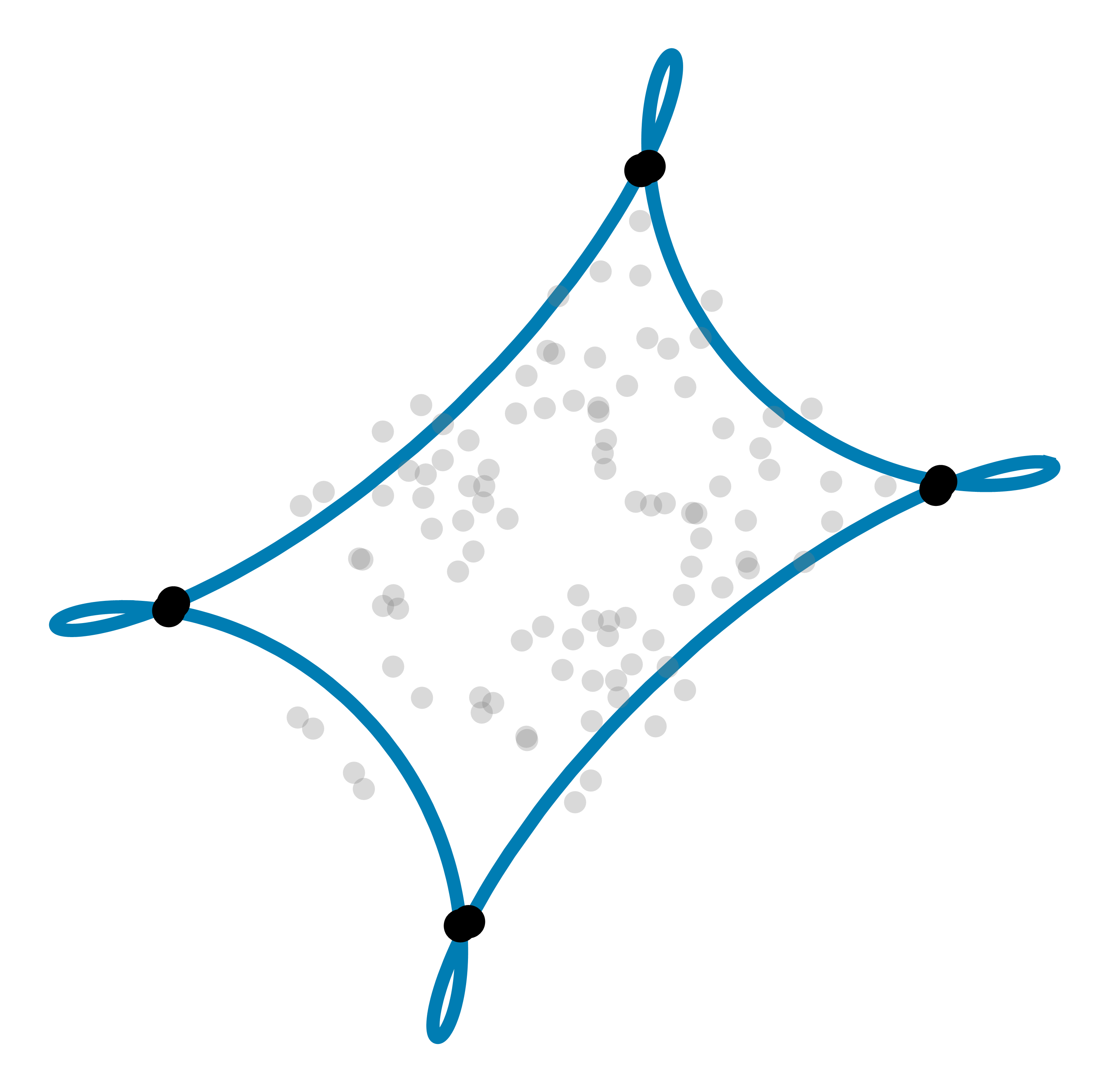}
\hspace{2ex}
\includegraphics[width=0.23\textwidth]{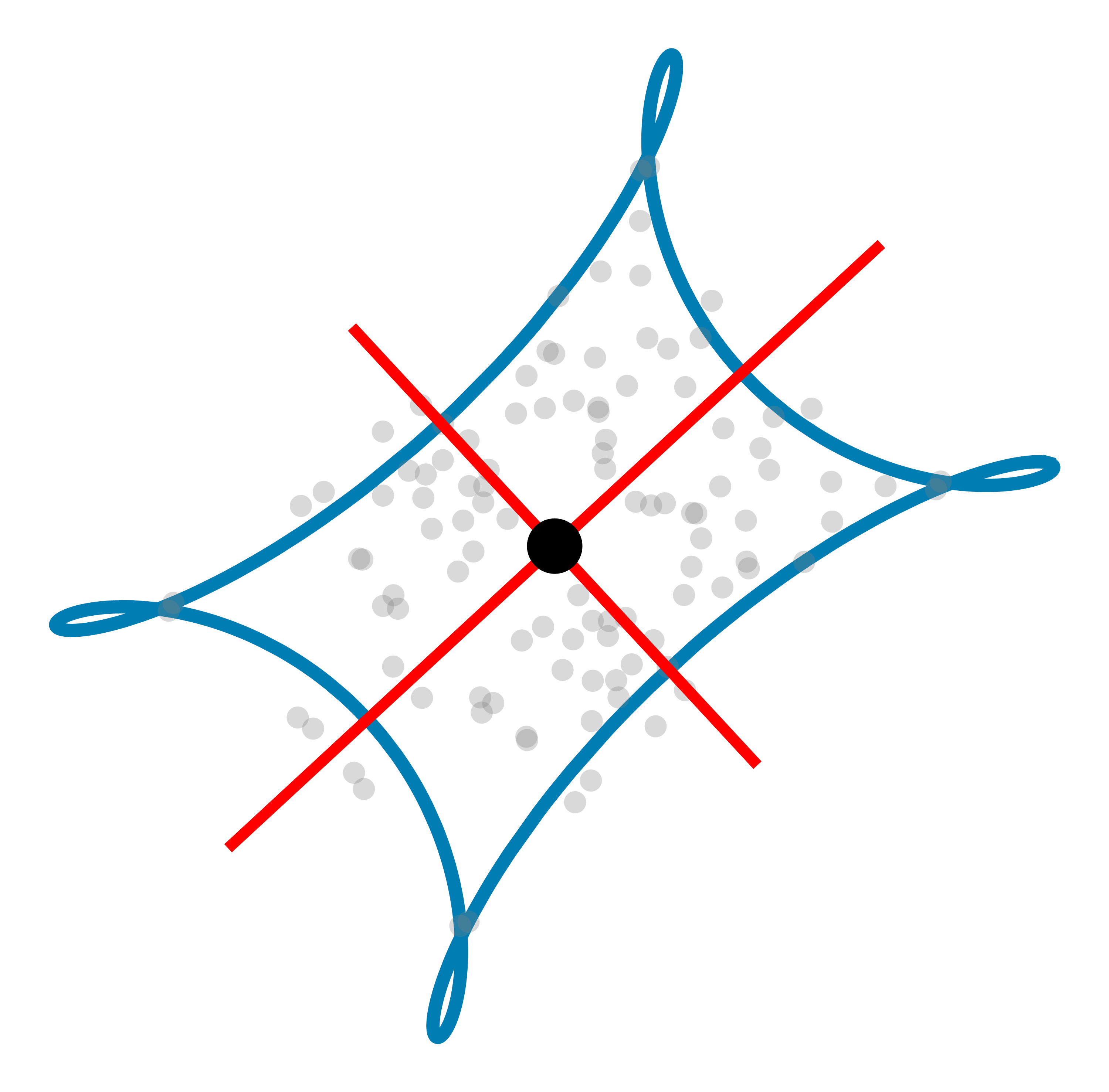}
\caption{Detection of symmetries of a point cloud (with the symmetry group $D_1$). From left to right: The point cloud, its convex hull,  the trigonometric interpolant of the vertices of the convex hull having the symmetry group $D_2$.}\label{fig:cloud2}
\end{center}
\end{figure}

\begin{figure}[t]
\begin{center}
\hspace{-5ex}
\includegraphics[width=0.21\textwidth]{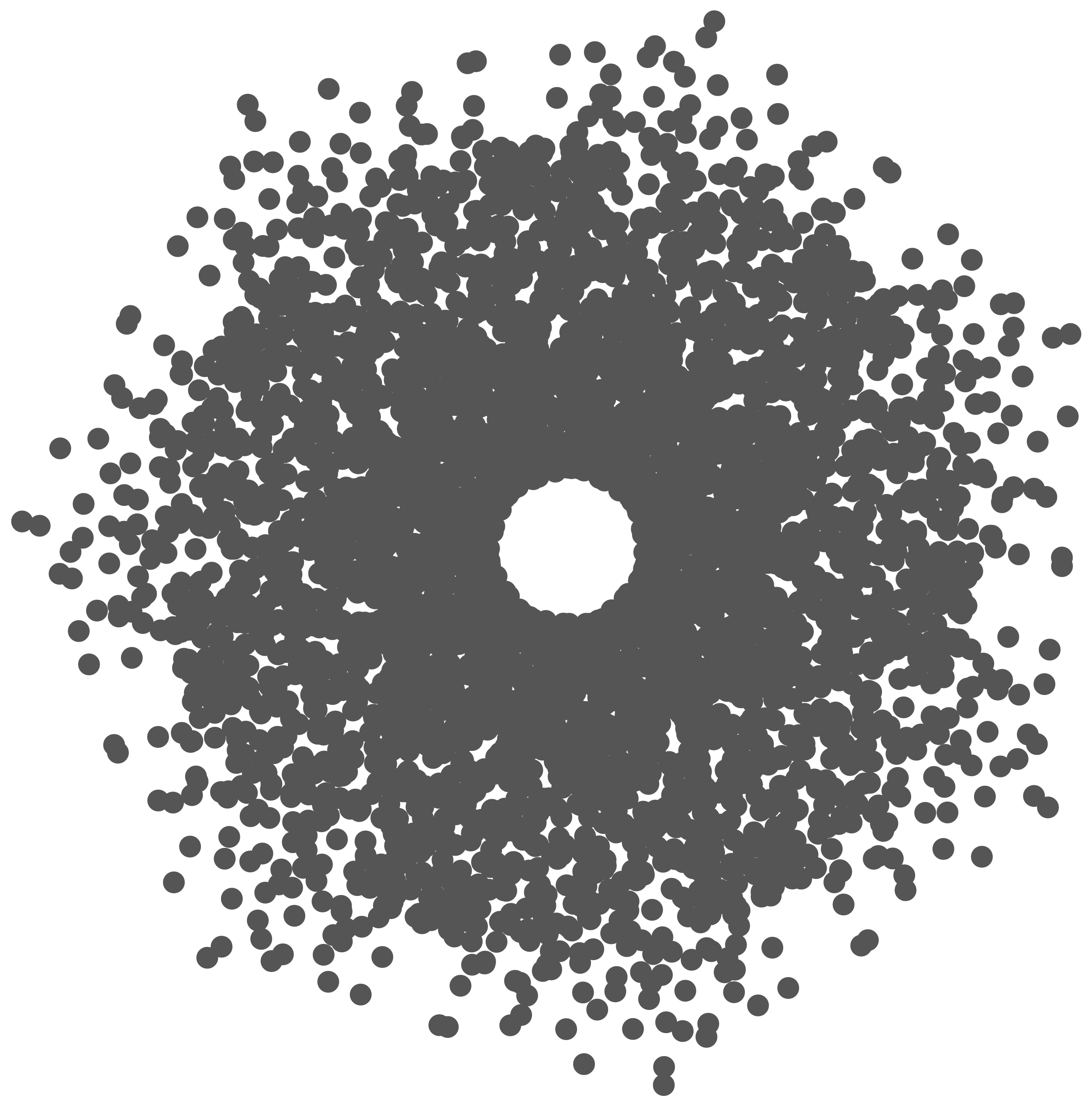}
\hspace{2ex}
\includegraphics[width=0.21\textwidth]{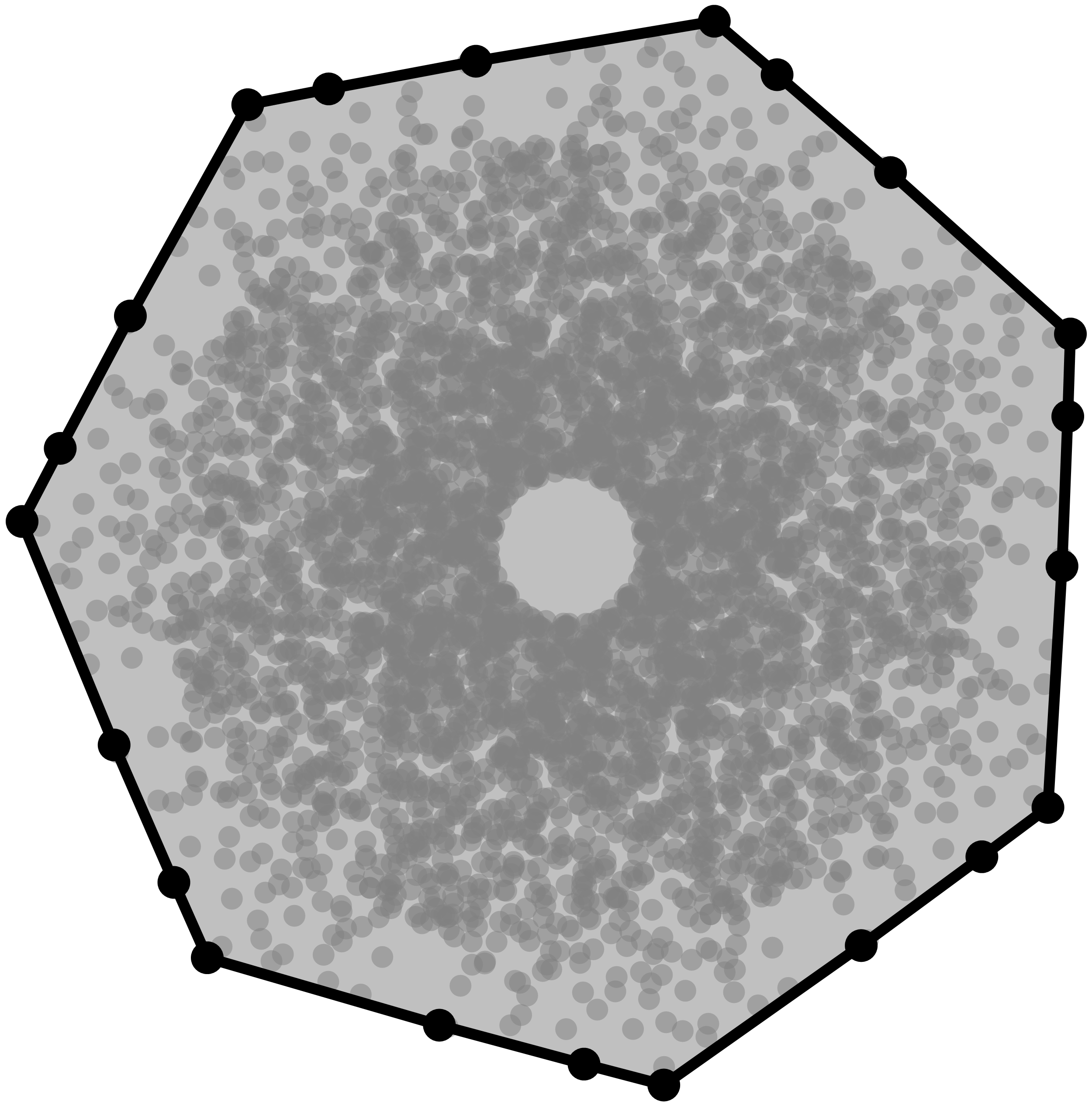}
\hspace{2ex}
\includegraphics[width=0.21\textwidth]{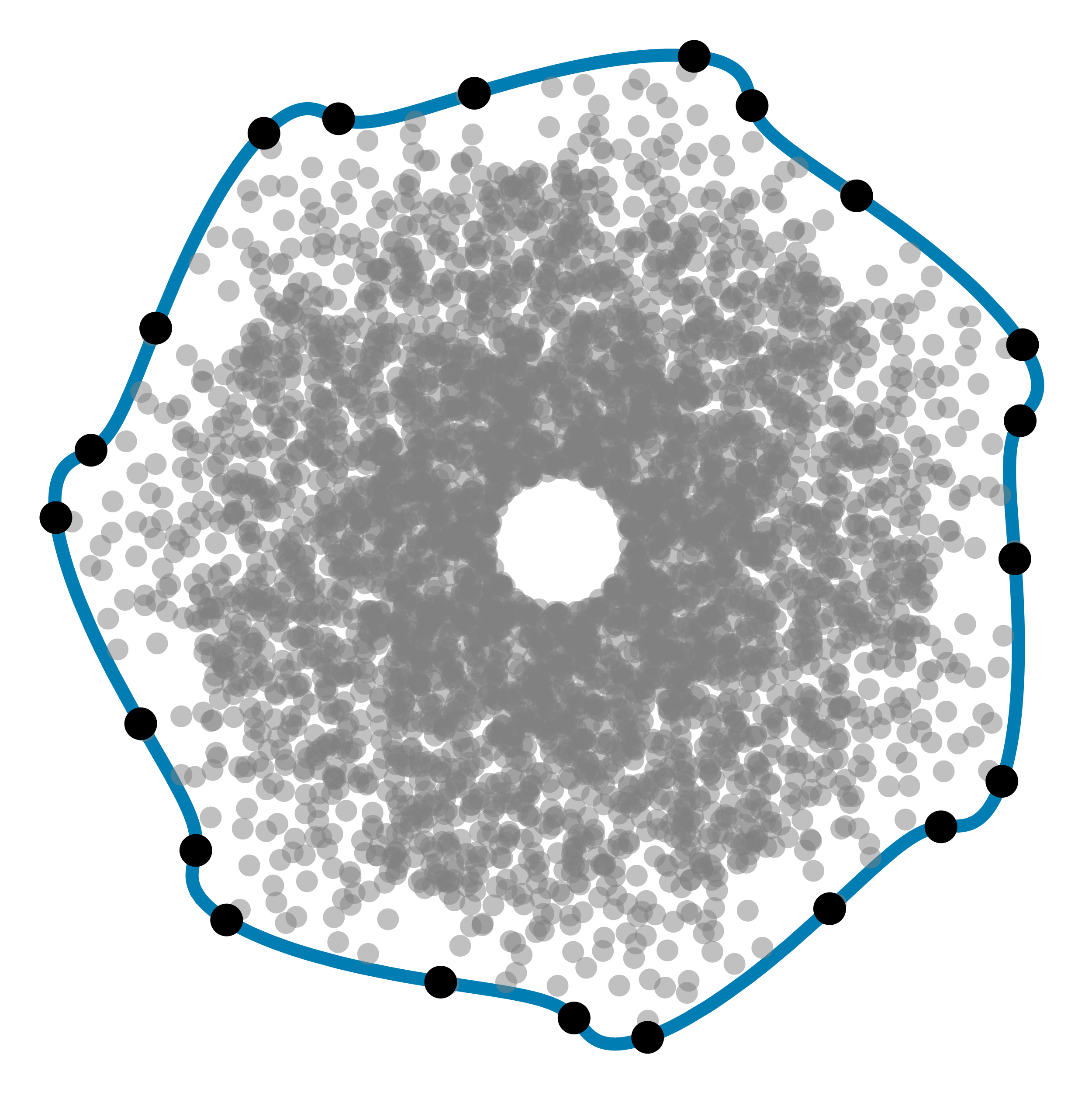}
\hspace{2ex}
\begin{overpic}[height=0.21\textwidth]{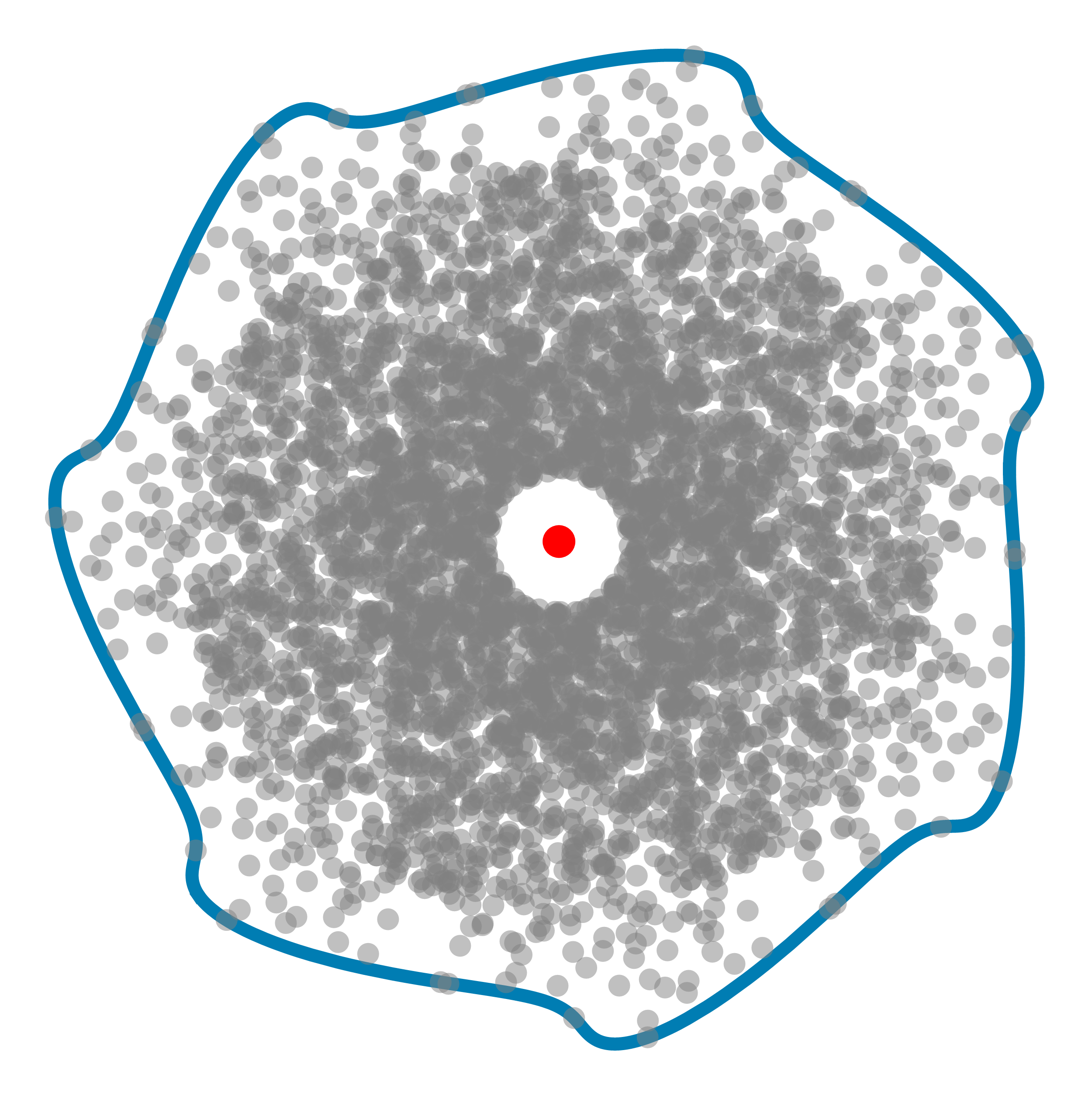}
\put(95,75){\fcolorbox{gray}{white}{\includegraphics[width=0.05\textwidth]{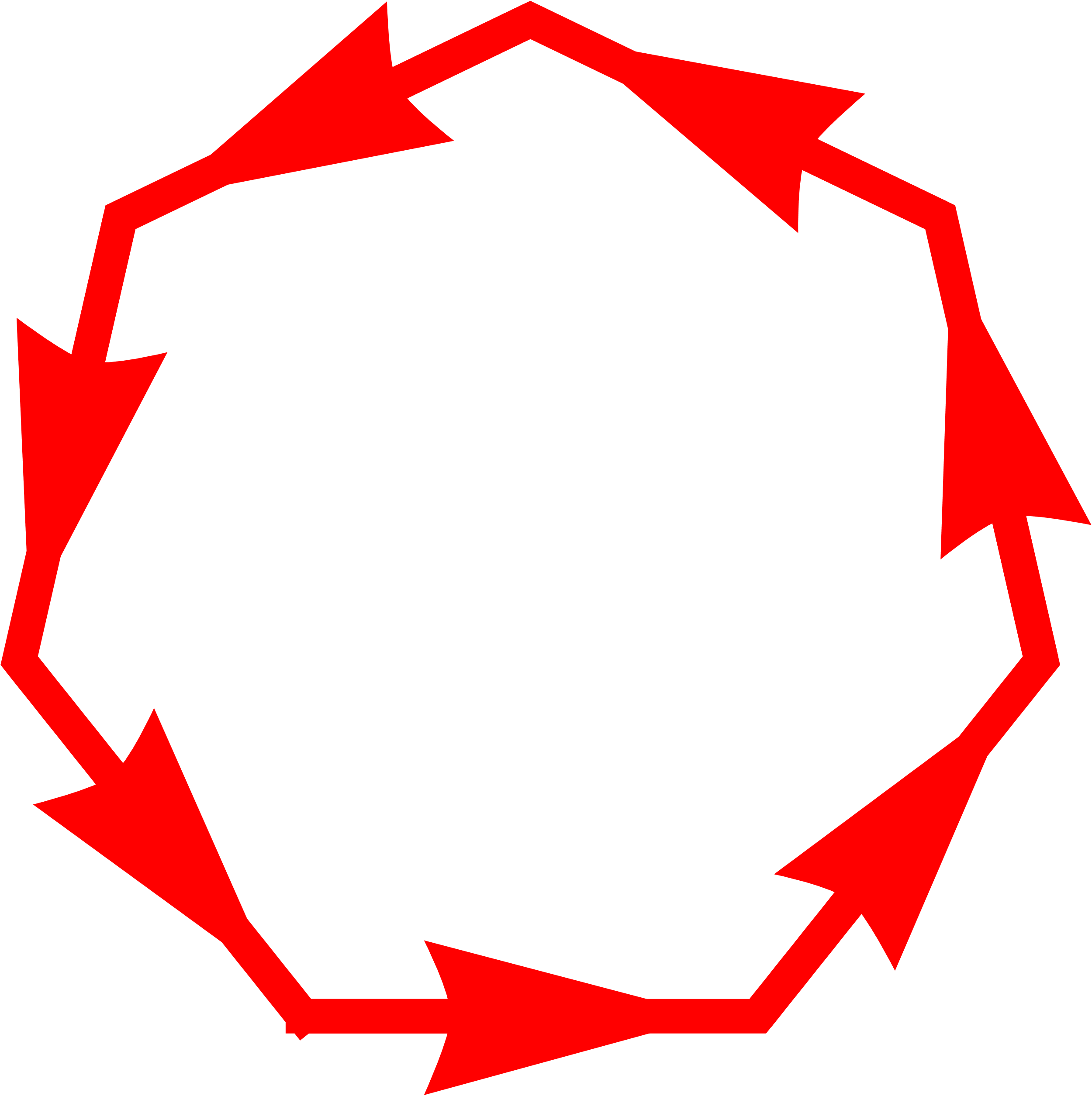}}}
\end{overpic}
\caption{Detection of symmetries of a point cloud (with the symmetry group $C_7$). From left to right: The point cloud, its convex hull,  the trigonometric interpolant of the vertices of the convex hull also having the symmetry group $C_7$.}\label{fig:cloud3}
\end{center}
\end{figure}


\begin{example}\rm
Consider three point clouds $X_1, X_2$ and $X_3$. We determine their convex hulls $\mathrm{CH}(X_i)$ with 
the boundaries $\partial\,\mathrm{CH}(X_i)$ being closed discrete curves and then interpolate their vertices by the trigonometric curves $T_{\partial\,\mathrm{CH}(X_i)}$. The interpolant $T_{\partial\,\mathrm{CH}(X_1)}$ possesses the symmetry group $D_4$, however the original point cloud $X_1$ has only $C_4$, see Fig.~\ref{fig:cloud}. The trigonometric curve $T_{\partial\,\mathrm{CH}(X_2)}$ has the symmetry group $D_2$, whereas the symmetry group of the point cloud $X_2$ is only $D_1$, see Fig.~\ref{fig:cloud2}. Finally, both $T_{\partial\,\mathrm{CH}(X_3)}$ and $X_3$ have the same symmetry group $C_7$, see Fig.~\ref{fig:cloud3}.
\end{example}

\begin{figure}[t]
\begin{center}
\hspace{-8ex}
\includegraphics[width=0.2\textwidth]{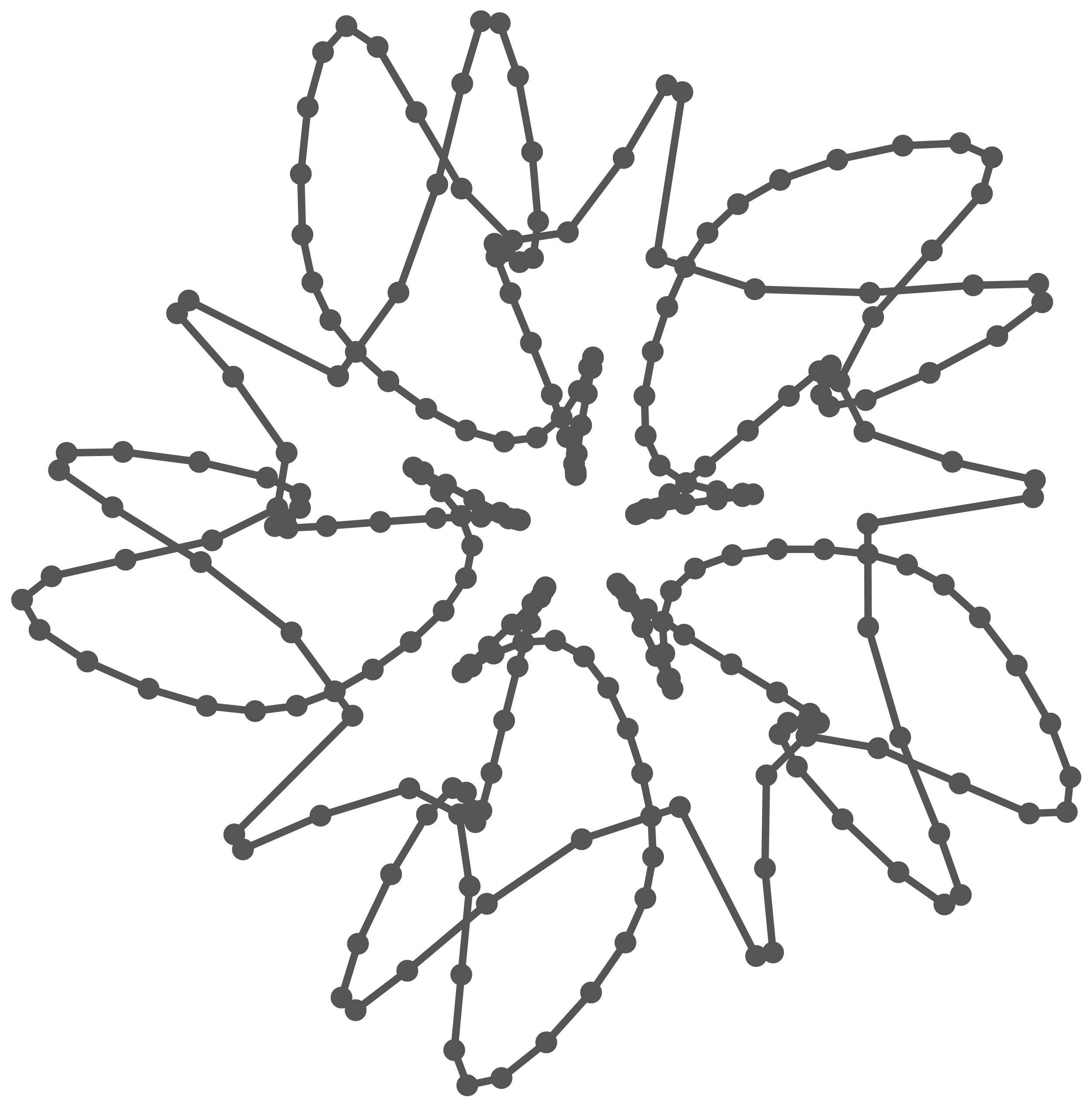}
\hspace{2ex}
\begin{overpic}[height=0.2\textwidth]{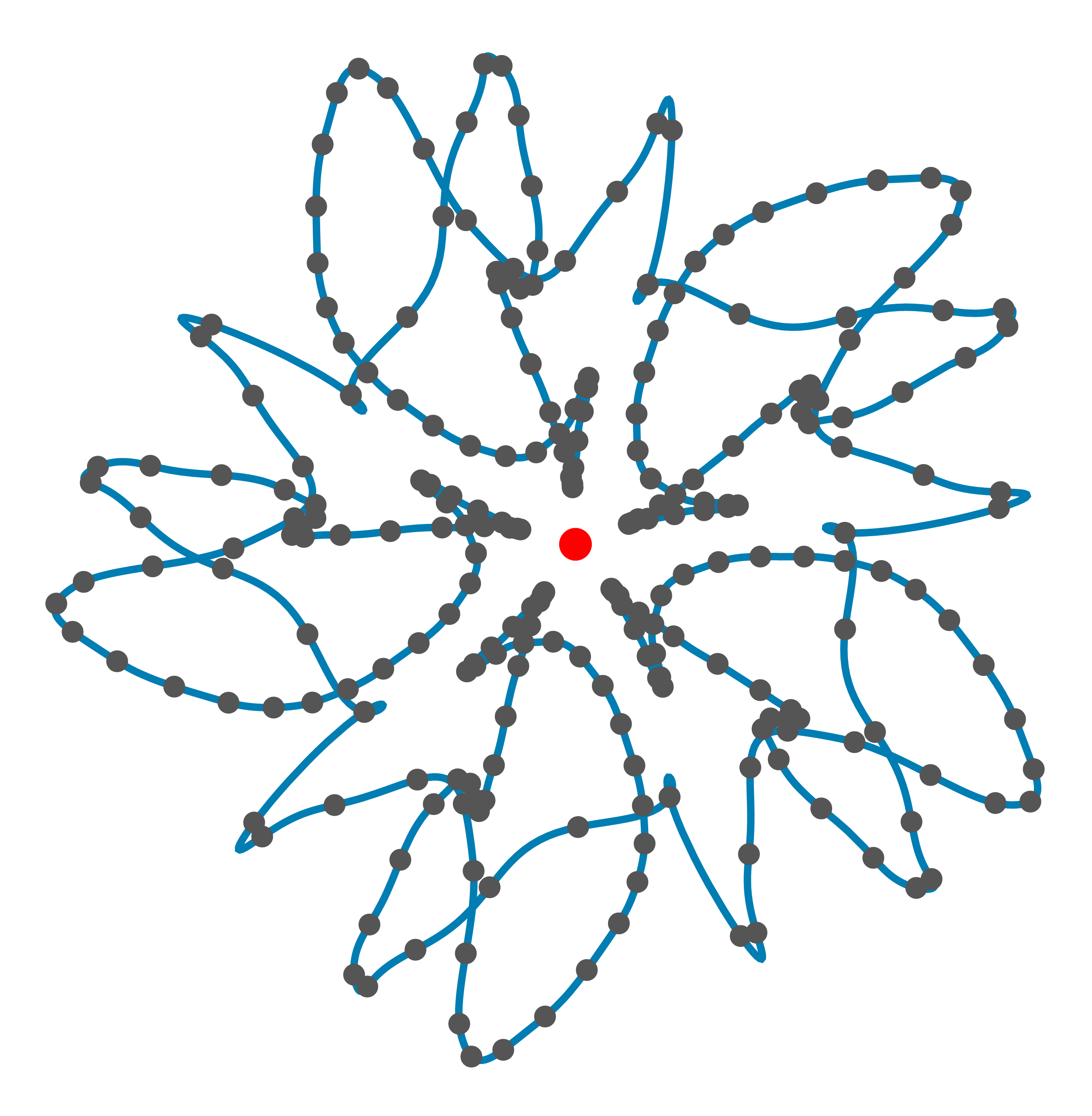}
\put(95,75){\fcolorbox{gray}{white}{\includegraphics[width=0.05\textwidth]{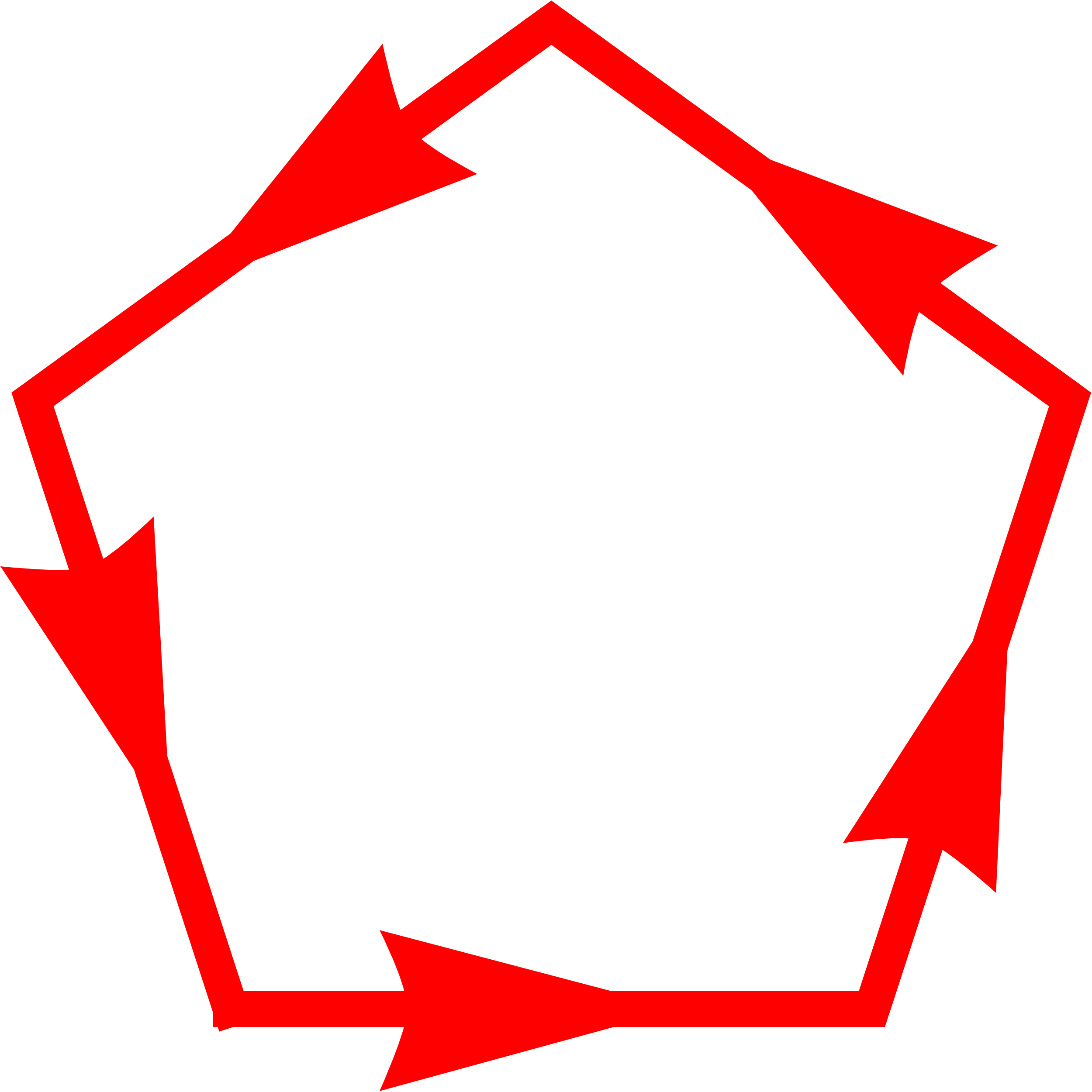}}}
\end{overpic}
\hspace{7ex}
\vrule width1pt
\hspace{1ex}
\includegraphics[width=0.2\textwidth]{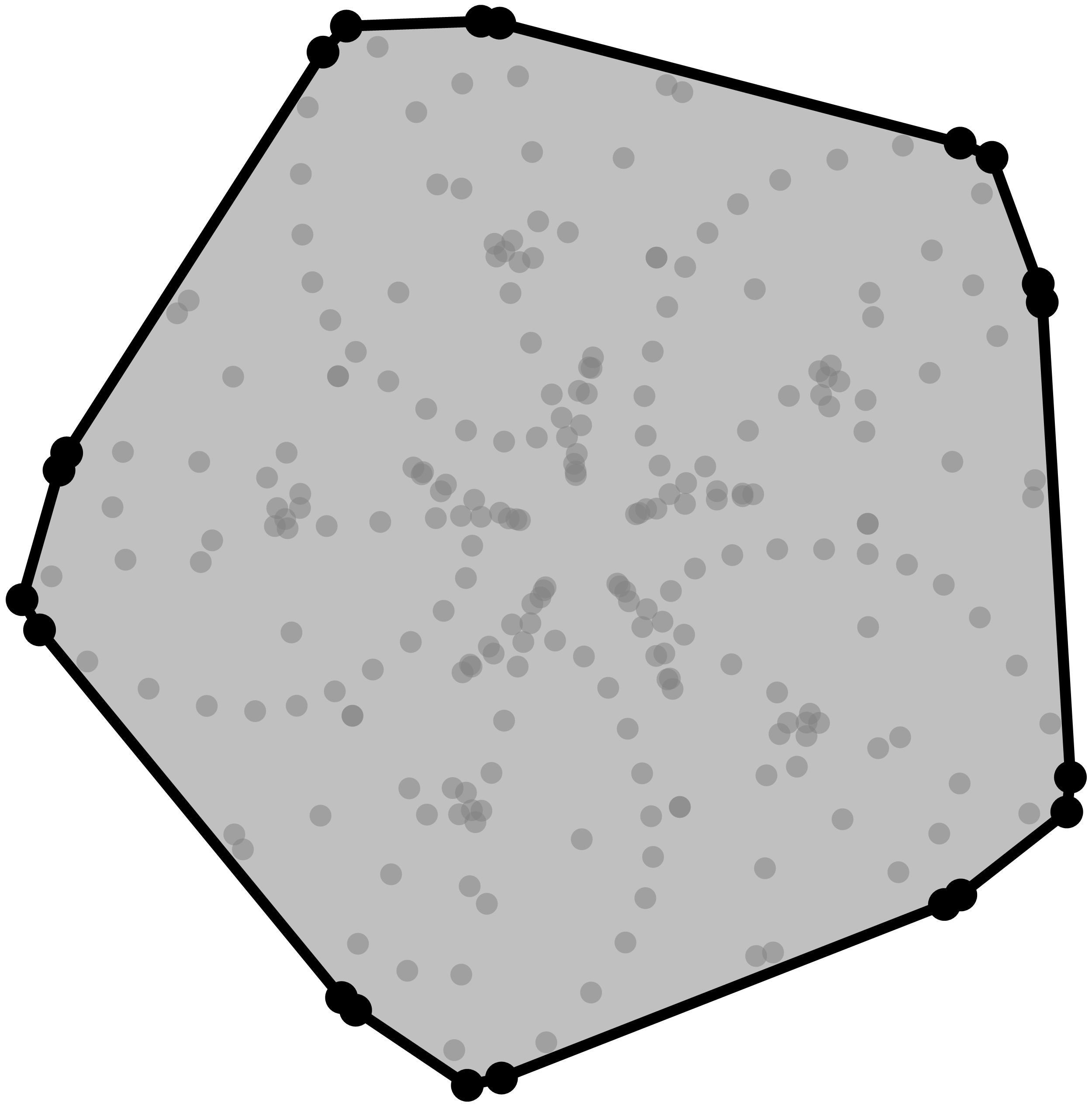}
\hspace{2ex}
\begin{overpic}[height=0.2\textwidth]{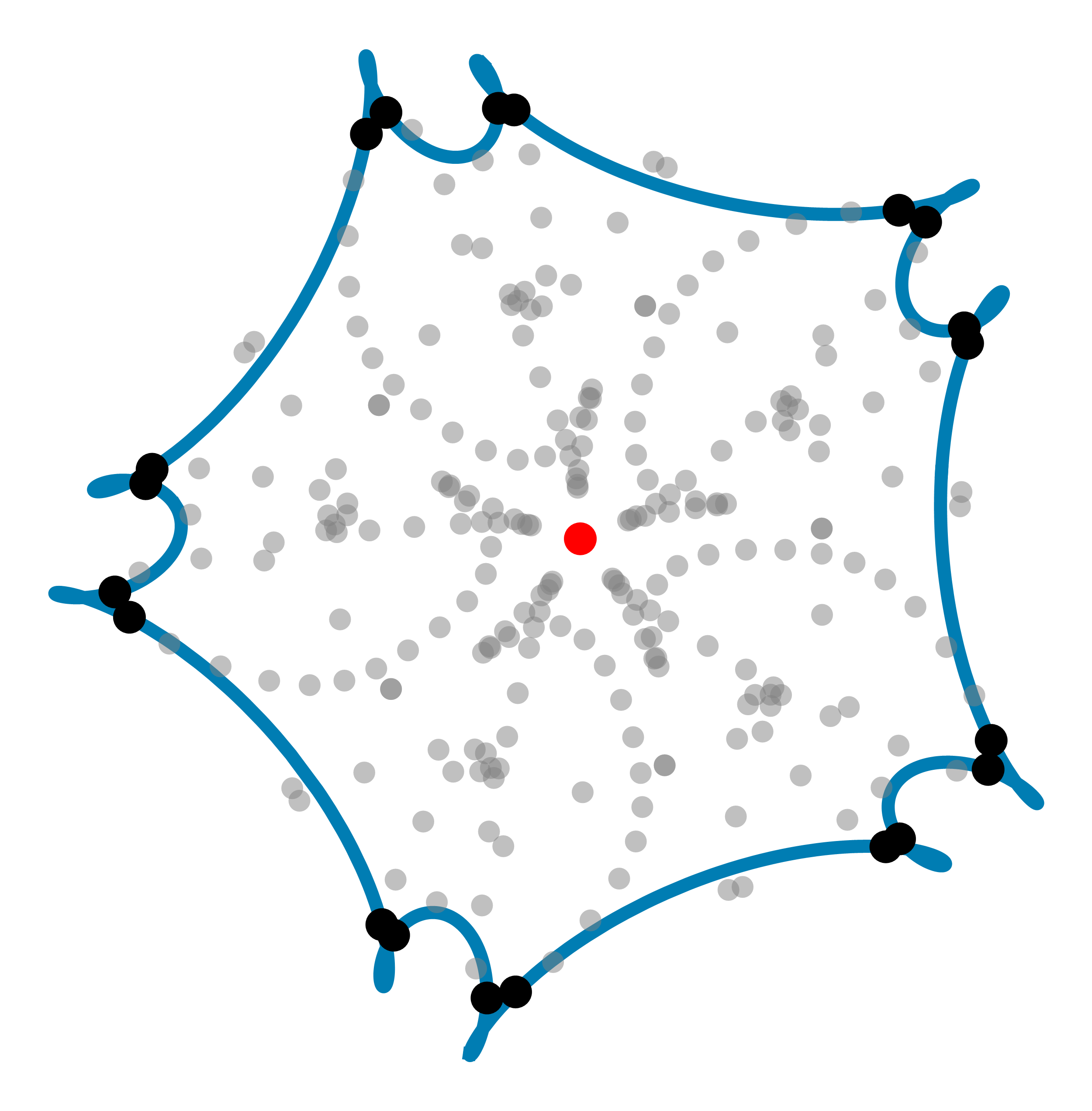}
\put(95,75){\fcolorbox{gray}{white}{\includegraphics[width=0.05\textwidth]{sym_C5.png}}}
\end{overpic}
\caption{Detection of symmetries of a discrete curve using both presented method. 
Left: We use the method for discrete curves. Right: The method for point clouds is employed. We emphasize that both method yields the symmetry group $C_5$.}\label{fig:all}
\end{center}
\end{figure}

\begin{example}\rm
As our final example we consider a discrete curve $C$. First, employing Algorithm~\ref{algoritmus1}  we conclude that $C$ has the symmetry group $C_5$, see Fig.~\ref{fig:all} (left). Second, we treat $C$ via its vertices as a point cloud $X$, see Fig.~\ref{fig:all} (right), and compute the symmetry group $\sym(X)$. Since this symmetry group can be bigger, in general, it is always necessary to apply the confirmation step, see Remark~\ref{rem:alg_step3}. Nonetheless, in this particular example the symmetry group of the point cloud is also  $C_5$.
\end{example}

\section{Summary}\label{sec:sum}

In this paper we have presented a novel, efficient method to compute the symmetries of planar trigonometric curves.
The axial and rotational symmetries were determined directly from their trigonometric parameterizations. The whole algorithm was summarized in a decision tree. Based on this a simple method for computing global exact symmetries of closed discrete curves in plane was formulated. Taking advantage of the fact that trigonometric interpolation of a given ordered set of point commutes with isometries, the studied problem dealing with discrete curves was transformed to determining symmetry groups of trigonometric curves. After a suitable modification we also applied the formulated approach on unorganized clouds of points. It is natural to wonder if the method can be suitably reformulated for 3D data. This is a question that we would like to investigate in the future. In addition we would like to extend the method for perturbed input data and approximate symmetries in our further research.

\bigskip
\section*{Acknowledgments}

The authors were supported by  the  grant  21-08009K  of  the  Grant  Agency  of  the  Czech Republic.

\end{document}